\newcommand{\mres}{\mathbin{\vrule height 1.6ex depth 0pt width
		0.13ex\vrule height 0.13ex depth 0pt width 1.3ex}}
\newcommand{\calA}{\mathcal{A}}
\newcommand{\calB}{\mathcal{B}}
\newcommand{\calD}{\mathcal{D}}
\newcommand{\calE}{\mathcal{E}}
\newcommand{\calL}{\mathcal{L}}
\newcommand{\calM}{\mathcal{M}}
\newcommand{\calX}{\mathcal{X}}
\newcommand{\bbR}{\mathbb{R}}
\newcommand{\n}[1]{\left\|#1\right\|}  
\newcommand\embed{\hookrightarrow} 
\newcommand\e{\varepsilon}
\newcommand\weakstar{\stackrel{\ast }{\rightharpoonup}}  
\newcommand\weak{\rightharpoonup} 
\newcommand\ringring[1]{%
	{
		\mathop{\kern0pt #1}\limits^{
			\vbox to-1.85ex{
				\kern-2ex 
				\hbox to 0pt{\hss\normalfont\kern.1em \r{}\kern-.45em \r{}\hss}%
				\vss 
			}
		}
	}
}
\theoremstyle{plain}
\newtheorem{theorem}{Theorem}[section]
\newtheorem{lemma}[theorem]{Lemma}
\newtheorem{corollary}[theorem]{Corollary}
\newtheorem{remark}[theorem]{Remark}
\DeclareMathOperator{\st}{\,:\,}
\DeclareMathOperator{\cof}{cof\,}
\DeclareMathOperator{\curl}{curl\,}
\renewcommand{\div}{\mathrm{div}\,}
\DeclareMathOperator{\Sym}{\bbR^{2\times 2}_{\rm sym}}
\newcommand{\Rint}{\mathop{\mathrlap{\pushpv}}\!\int} 
\newcommand{\pushpv}{\mathchoice
	{\mkern5mu\rule[.6ex]{.5em}{1pt}}
	{\mkern2.8mu\rule[.5ex]{.35em}{.8pt}}
	{\mkern2.5mu\rule[.29ex]{.3em}{.7pt}}
	{\mkern2mu\rule[.2ex]{.2em}{.5pt}}}
\newcommand*{\tran}{^{\mkern-1.5mu\mathsf{T}}}
\begin{document}
	\sloppy

	\title[Two one-dimensional models for transversely curved shallow shells]{Derivations of two one-dimensional models for transversely curved shallow shells: \\ one leads to relaxation}

	\author[R. Paroni]{Roberto Paroni}
	\address[R. Paroni]{Department of Civil and Industrial Engineering, University of Pisa, Largo Lucio Lazzarino 1, 56122, Pisa, Italy}
	\email{roberto.paroni@unipi.it}

	\author[M. Picchi Scardaoni]{Marco Picchi Scardaoni}
	\address[M. Picchi Scardaoni]{Department of Civil and Industrial Engineering, University of Pisa, Largo Lucio Lazzarino 1, 56122, Pisa, Italy}
	\email{marco.picchiscardaoni@ing.unipi.it}

	\date{\today}

	\begin{abstract}
		We study the $\Gamma$-limit of sequences of variational problems for straight, transversely curved shallow shells, as the width of the planform $\e$ goes to zero.
		The energy is of von Kármán type for shallow shells under suitable boundary conditions. What distinguishes the various regimes is the scaling of the stretching energy $\sim \e^{2\beta}$, with $\beta$ a positive number. We derive two one-dimensional models as $\beta$ ranges in $(0, 2]$. Remarkably, boundary conditions are essential to get compactness. \\
		We show that for $\beta \in (0, 2)$ the $\Gamma$-limit leads to relaxation: the limit membrane energy vanishes on compression. For $\beta=2$ there is no relaxation, and the limit model is a nonlinear energy coupling four kinematical descriptors in a nontrivial way.
		As special cases of the latter limit model, a nonlinear Vlasov torsion theory and a nonlinear Euler-Bernoulli beam theory can be deduced.

		\textbf{Key words:} Shallow shells, $\Gamma$-convergence, ribbons, von Kármán, dimension reduction, Vlasov, relaxation.\\
		\textbf{Mathematics Subject Classification:} 		35B40, 35J35, 49J45, 74B10, 74G65, 74K20, 74K25
	\end{abstract}

	\maketitle

	\tableofcontents

	\section{Introduction}

	The rigorous derivation of one-dimensional mechanical theories from higher-dimensional frameworks has seen significant development in recent years.

	In this work, we focus on straight, transversely curved ribbons in their natural configuration. That is, we consider transversely curved shallow shells whose planform projection is a rectangle. In this setting, the rigorous derivation of limiting one-dimensional models, as the width of the ribbon planform tends to zero, was first addressed in~\cite{Paroni2024c} (see also~\cite{Coppede2024a}), inspired by the bending behavior of a carpenter’s tape measure, where deformation can localize. The resulting limit model captures curvature concentration (i.e., the formation of elastic hinges). However, the starting point in that work was a relatively strong energy, whereas more realistic models, such as those based on von Kármán energies, have weaker compactness properties, posing additional challenges.

	For completeness, we note that several one-dimensional models have been heuristically proposed in the literature for transversely curved shallow shells; see, e.g.,~\cite{Brunetti2020, Guinot2012, Picault2016, Kumar2023, Mao2024}.
	Rigorous derivations of one-dimensional models for ribbons with a curved cross-section starting from three-dimensional nonlinear elasticity can be found in~\cite{Davoli2013, Maor2025}, while the case of flat cross-sections has been addressed in~\cite{Freddi2012, Freddi2013}.
	Derivations starting instead from two-dimensional theories are presented in~\cite{Freddi2016a, Freddi2018}.

	Here, we consider families of von Kármán-type energies, as commonly used for modeling shallow shells and prestrained plates~\cite{Brunetti2020a, Brunetti2020c, Fabbrini2025, Kloda2025, JimenezBolanos2021, Lewicka2017, Tobasco2021a, Bella2025, Bartels2025, Velcic2012, PicchiScardaoni2021d}. In particular, we impose a specific scaling for the stretching energy: as the ribbon width~$\e$ tends to zero, we assume the natural curvature scales like~$1/\e$, and the stretching energy is of order~$\e^{2\beta}$, with~$\beta$ a non-negative real number.

	Additionally, we assume that one short end of the shell is clamped, while the other undergoes a rigid roto-translation at a suitable scale. As it will become evident, the order of the longitudinal displacement plays a crucial role in determining the limit behavior. Specifically, we obtain compactness for appropriately scaled displacement sequences, of order~$\e^\beta$ for in-plane displacements and~$\e^{\beta/2}$ for out-of-plane ones. Furthermore, for~$\beta \in (0, 2)$, the problem exhibits relaxation. In contrast, when~$\beta = 2$, the energy functional is sufficiently rigid to prevent relaxation. In all cases, a shared feature is compactness in relatively weak topologies, alongside surprisingly regular limit displacements.

	It is well known that boundary conditions can significantly influence the resulting limit models. In the context of three-dimensional nonlinear elasticity, for instance, clamping a plate along its boundary can (under suitable scalings) lead to a membrane theory whose energy vanishes on contractions~\cite{Conti2005}. A broader overview of how various scaling regimes interact with boundary conditions is presented in~\cite{Friesecke2006}.

	In our setting, a similar phenomenon occurs, arguably in an even more evident way. When natural boundary conditions are imposed, a wide class of displacements yields zero membrane energy, resulting in a soft response dominated by out-of-plane bending. However, for~$\beta \in (0,2)$, introducing the additional boundary condition stiffens the shell considerably: the kernel of the membrane energy is reduced or even eliminated. The practical consequence is that the out-of-plane bending term becomes negligible in the limit model (i.e., the second derivative of the transversal displacement disappears).

	The case~$\beta = 2$ is critical: both the membrane and bending contributions are retained, and a ``complete'' limit theory is obtained. In this regime, the four kinematic descriptors of the limit displacement are nontrivially coupled. Interestingly, it also includes, as a particular case, a nonlinear version of Vlasov’s torsion theory.

	To the best of our knowledge, the two limit models obtained in the present work are new.

	The paper is structured as follows. Section~\ref{sec:prel} reviews the main technical tools and establishes notation. In Section~\ref{sec:prob}, we introduce the sequence of variational problems under consideration. Section~\ref{sec:comp} is devoted to compactness results. Section~\ref{sec:gamma2} proves the $\Gamma$-convergence result in the case~$\beta = 2$, while Section~\ref{sec:gamma1} addresses the case~$\beta < 2$.

	\section{Preliminaries and notation}\label{sec:prel}

	The Euclidean (Frobenius) product in $\mathbb{R}^N$ is indicated with $\cdot$ and the corresponding induced norm by $|\cdot|$. If $x=(x_1,x_2) \in \bbR^2$, $x^\perp \coloneqq   (-x_2, x_1)$.\\
	Let us summarize some useful properties of matrices. For every $A,B \in \Sym$:
	\begin{equation}\label{eq:detAB}
		\begin{aligned}
			\det(A\pm B) &= \det A + \det B \pm  A \cdot \cof B,\\
			|A|^2 &\geq 2|\det A|,\\
			A\cdot\cof A &= 2\det A.
		\end{aligned}
	\end{equation}
	where
	$(\cof A)_{\alpha\beta} \coloneqq   \calE_{\alpha\gamma}\calE_{\beta\delta} A_{\gamma\delta}$ and $\calE$ is the Levi-Civita symbol.
	The dyadic product of two vectors $a,b \in \bbR^n$, written as  $a \otimes b$, is defined by $(a \otimes b)(c) = (c \cdot b) a$ for every $c \in \bbR^n$.
	If not specified, we adopt Einstein' summation convention for indices, and $C$ denotes a positive constant that may vary from line to line.\\
	We denote the integral average by $\Rint_\Omega f \, dx \coloneqq  \frac{1}{|\Omega|}\int_\Omega f \, dx$.
	If $f,g:\bbR \to \bbR$, their convolution $f\ast g$ is given by $(f\ast g)(x) \coloneqq  \int_\bbR f(x-y)g(y) \, dy = \int_\bbR f(y)g(x-y) \, dy$ (whenever the integrals exist).

	The characteristic function of the set $A\subset \bbR^N$ is denoted by
	$$ \mathbf{1}_A(x) \coloneqq   \begin{cases}
		1 & x \in A\\
		0 & x \in \bbR^N \setminus A.
	\end{cases}$$
	The positive part of $a$ is denoted by $a^+ \coloneqq  \max\{a,0\}$ while the negative part of $a$ is $a^- \coloneqq -\min\{a,0\}$, so that $a = a^+ - a^-$.

	Let $\Omega\subset \mathbb{R}^N$ be an open, bounded, Lipschitz domain. If $k\in \mathbb{N}\cup \{\infty\}$, then $C^k(\Omega)$ denotes the space of real-valued, $k$-times continuously differentiable functions on $\Omega$ and $C^k(\overline{\Omega})$ denotes the space of real-valued, $k$-times continuously differentiable functions up to the boundary of $\Omega$. $C_0^k(\Omega)$ denotes the completion with respect to the sup-norm of $C_c^k(\Omega)$, the space of functions belonging to $C^k(\Omega)$ that have compact support in $\Omega$. The dual space of infinitely differentiable functions with compact support on $\Omega$ (space of distributions) is denoted by $\calD'(\Omega)$.

	The Lebesgue spaces of $p$-integrable functions on $\Omega$ are denoted by $L^p(\Omega)$ for $1\leq p\leq \infty$.
	We endow $L^2(\Omega)$ with the canonical scalar product $(f,g) \mapsto \int_\Omega fg \, dx$ for every $f,g \in L^2(\Omega)$.\\
	The Sobolev' spaces of $L^p(\Omega)$ functions whose derivatives up to the order $k$ are in $L^p(\Omega)$ are denoted by $W^{k,p}(\Omega)$.
	We denote strong convergence (convergence in norm) with the symbol $\to$, while weak convergence will be denoted by $\weak$.

	The strain of a vector-valued function $u \in H^1(\Omega, \bbR^2)$ is defined as $$Eu = \frac12(\nabla u\tran + \nabla u).$$

	Let $H^2_0(\Omega)$ be the closure of $C_c^\infty(\Omega)$ with respect to the $H^2$ norm.
	The dual space of $H_0^2(\Omega)$ is denoted by $H^{-2}(\Omega)$.
	The latter is endowed with the usual dual norm
	$$ \n{v}_{H^{-2}(\Omega)} = \sup\limits_{\varphi\in H^2_0(\Omega),\\ \n{\varphi}_{H^2(\Omega)=1}} |\langle v, \varphi \rangle|.$$

	We introduce the operator $\curl\curl: L^2(\Omega, \Sym) \to H^{-2}(\Omega)$ defined as $$\curl\curl A = \partial_{11}A_{22} + \partial_{22}A_{11} - 2\partial_{12}A_{12} \qquad \forall \  A \in L^2(\Omega, \Sym).$$
	We recall that $\curl\curl$  and $\cof\nabla^2: H_0^2(\Omega) \to L^2(\Omega,\Sym)$ are adjoint operators, in the sense that 	\begin{equation}\label{ccA}\langle \curl\curl A, \varphi\rangle = \int_\Omega A\cdot \cof\nabla^2 \varphi \ dx \qquad \forall \ A \in L^2(\Omega, \Sym), \ \forall \ \varphi \in H_0^2(\Omega).\end{equation}
	In particular,  we have in $H^{-2}$ (see for instance \cite{DOnofrio2005}) that
	\begin{equation}\label{cceu}
		\curl\curl(E u) = 0 \qquad \forall \ u \in H^1(\Omega, \bbR^2)
	\end{equation}
	and 	that
	\begin{equation}\label{ccww}\curl\curl(\nabla w \otimes \nabla w) = -2\det\nabla^2 w \qquad \forall \ w \in H^2(\Omega).
	\end{equation}

	We denote by $\calM(\Omega)$ the space of finite Radon measures on $\Omega$, and by $\calM^+(\Omega)$ the subset of the non-negative ones.
	The restriction of a measure $\mu$ on $\bbR^N$ to a measurable set $E\subset \bbR^N$ is the measure $\mu\mres E$ defined as $\mu\mres E (F) \coloneqq   \mu(F \cap E)$ for all measurable sets $F\subset \bbR^N$.
	Let $A, B$ be two sets and let $\mu \in \calM(A)$, $\nu \in \calM(B)$. The product measure
	$\mu \otimes \nu \in \calM(A\times B)$  is the measure satisfying $(\mu \otimes \nu) (E\times F)$ = $\mu(E)\nu(F)$ for every Borel set $E \in A$ and $F \in B$.\\
	With $\calL^n$ we denote the $n$-dimensional Lebesgue measure.
	If $I$ is an interval and $\mu \in \calM(I)$, by the Radon-Nikodym decomposition we can write uniquely $\mu = \frac{d\mu}{d\calL}\calL + \mu_s$ where $\frac{d\mu}{d\calL}\calL$ is the part of $\mu$ which is absolutely continuous with respect to $\calL$, $\frac{d\mu}{d\calL}$ is the Radon-Nikodym derivative of $\mu$ with respect to $\calL$, and $\mu_s$ is the singular part of $\mu$ (with respect to $\calL$).

	Indicating by $Du$ the distributional derivative of $u$, the spaces of functions of bounded variation, bounded Hessian, and bounded deformation are defined as
	\begin{equation}\label{eq:bh}
		\begin{aligned}
			BV(\Omega) &\coloneqq   \{u \in L^{1}(\Omega) : Du \in \calM(\Omega,\bbR^N)\},\\
			BH(\Omega) &\coloneqq   \{u \in W^{1,1}(\Omega) : D^2u \in \calM(\Omega,\bbR^{N\times N}_{\mathrm{sym}})\},\\
			BD(\Omega, \bbR^2) &\coloneqq   \{u \in L^{1}(\Omega, \bbR^2) : \frac12(Du + Du\tran) \in \calM(\Omega,\Sym)\}.
		\end{aligned}
	\end{equation}

	The (total) variation measure of a function in $u \in BV(\Omega)$
	is defined as
	$$
	|Du|(\Omega) \coloneqq   \sup \{\int_{\Omega} u\,\div\varphi \,dx \st \varphi \in C^1_c (\Omega,\bbR^N), \ \n{\varphi}_{\infty} \leq 1 \}.
	$$
	According to the Riesz's representation theorem, $\calM(\Omega)$ can be identified with the topological dual of $C_0(\Omega)$.
	We say that $(\mu_n) \subset \calM(\Omega)$ converges weakly$^\ast $ to $\mu\in\calM(\Omega)$ (and we write $\mu_n \weakstar \mu$ in $\calM(\Omega)$) if for every $\varphi\in C_0(\Omega)$ $\lim\limits_{n\uparrow \infty}\int_\Omega \varphi\, d\mu_n = \int_\Omega \varphi \,d\mu$.\\
	We say $(u_n) \subset BD(\Omega, \bbR^2)$ converges weakly$^\ast $ in $BD(\Omega, \bbR^2)$ to $u\in BD(\Omega, \bbR^2)$ (and we write $u_n \weakstar u$ in $BD(\Omega, \bbR^2)$) if $u_n\to u$ in $L^{1}(\Omega, \bbR^2)$ and $\frac{Du_n + Du_n\tran}{2} \weakstar \frac{Du + Du\tran}{2}$ in $\calM(\Omega, \Sym)$.\\
	We further recall
	the continuous embedding $BD(\Omega, \bbR^2)\embed L^2(\Omega, \bbR^2)$ (see \cite{Temam1987} for details).

	If $(a_\e)_i$ is a component of an $\e$-parameterized vector- (or matrix-) valued sequence, from time to time we may write equivalently $(a_{\e i})$ or $(a^\e_i)$.

	\section{The Problem}\label{sec:prob}

	Let $\e$ be a sequence of positive numbers converging to zero. We consider a family of shells, parameterized by $\e$, obtained by translating a planar curve (the \emph{cross-section}) lying in the $x_2$--$x_3$ plane along the $x_1$-axis.

	Let $\ell > 0$, define the domains
	$$
	I \coloneqq \left(-\frac{\ell}{2}, \frac{\ell}{2}\right), \qquad W_\e \coloneqq \left(-\frac{\e}{2}, \frac{\e}{2}\right), \qquad \Omega_\e \coloneqq I \times W_\e.
	$$
	The region $\Omega_\e$ is the projection of the shell onto the $x_1$--$x_2$ plane.

	\noindent
	We also define
	$$
	W \coloneqq W_1, \qquad \Omega \coloneqq \Omega_1 = I \times W.
	$$

	To describe the cross-section, we consider a non-affine, even function $\mathring{w} \in C^2(\overline{W})$ satisfying
	\begin{equation}\label{avwe0}
		\int_W \mathring{w}(x_2) \, dx_2 = 0
	\end{equation}
	and such that $\mathring{w}'' = 0$ only in a finite number of points of the interior part of $W$.
	Let $Y$ be the set of points in $W$ in which the curvature of the cross-section is zero:
	\begin{equation}\label{defY}
		Y := \{ y \in W \st \mathring{w}''(y) = 0\},
	\end{equation}
	and
	$$ I\times Y  = \{(x_1,x_2) \in \Omega \st x_2 \in Y \}.$$
	Whenever $Y$ is not nonempty, $I\times Y $ is the union in $\Omega$ of a finite number of disjoint segments parallel to $\mathsf{e}_1$ of length $\ell$.
	We define the shell's natural configuration $\mathring{v}_\e \in C^2(\overline{W}_\e)$ as
	$$
	\mathring{v}_\e(x_2) \coloneqq \e \mathring{w}(\frac{x_2}\e).
	$$
	Then, the cross-section of the shell parameterized by $\e$ is described by the curve $x_2 \mapsto (x_2, \mathring{v}_\e(x_2))$ in the $x_2$--$x_3$ plane.

	\begin{figure}[!hbt]
		\centering
		\def\L{5}
		\def\w{1}
		\tdplotsetmaincoords{60}{-135}
		\begin{tikzpicture}[tdplot_main_coords, scale=1, transform shape]
			\coordinate (O) at (0,0,0);
			\coordinate (A) at (0,-2*\w,0);
			\coordinate (B) at (0,0,2*\w);
			\coordinate (C) at (-2*\w,0,0);
			\coordinate (D) at (-\L,\w,0.);
			\coordinate (E) at (\L,\w,0.);
			\coordinate (F) at (\L,-\w,0.);
			\coordinate (G) at (-\L,-\w,0.);

			\draw[red] (\L,0,0) node[above] {$\Omega_\e$};
			\draw[blue] (F)+(-\w,0,\w) node[above] {$\mathring{v}_\e$};

			\begin{scope}[xshift=0.06cm,yshift=0.13cm]
				\pgfmathsetmacro{\step}{\L/100}
				\draw[gray,thick,fill=red,opacity=0.3] (D) -- (E) -- (F) -- (G) -- cycle;
				\draw[black,thick] (D) -- (E) -- (F) -- (G) -- cycle;
				\foreach \y in {0,\step,...,10}{
					\draw[cyan, thick,opacity=0.5] plot[domain=-\w:\w,smooth,variable=\t] ({ -\L+\y},{\t},{(1.5*\t*\t/\w/\w-\w/2)^2 - 4/10} );
				}
				\draw[blue, thick] plot[domain=-\w:\w,smooth,variable=\t] ({ -\L},{\t},{(1.5*\t*\t/\w/\w-\w/2)^2 - 4/10});
				\draw[blue, thick] plot[domain=-\w:\w,smooth,variable=\t] ({ -\L+10},{\t},{(1.5*\t*\t/\w/\w-\w/2)^2 - 4/10});
				\draw[blue, thick] plot[domain=-\L:\L,smooth,variable=\t] ({ \t},{-\w},{(1.5-\w/2)^2 - 4/10});
				\draw[blue, thick] plot[domain=-\L:\L,smooth,variable=\t] ({ \t},{\w},{(1.5-\w/2)^2 - 4/10});
			\end{scope}

			\draw[-latex] (O) -- (B) node[above] {$x_3$};
			\draw[-latex] (O) -- (C) node[above] {$x_1$};
			\draw[-latex] (O) -- (A) node[right] {$x_2$};
		\end{tikzpicture}
		\caption{Reference configuration and shell planform}
		\label{fig:1}
	\end{figure}
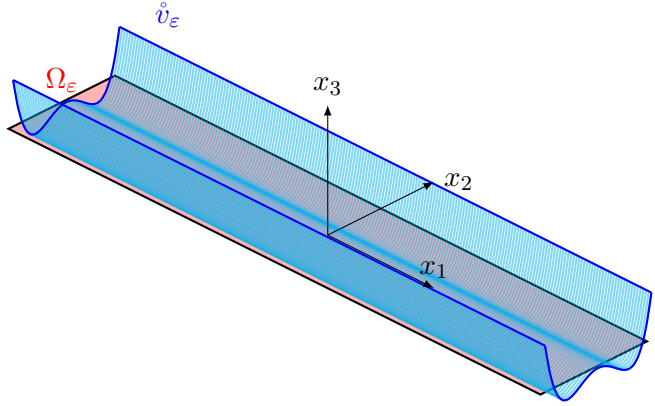

	Let $\beta \in (0, 2]$. For an out-of-plane displacement $v \in H^2(\Omega_\e)$ (in the $x_3$ direction) and an in-plane displacement $y \in H^1(\Omega_\e; \bbR^2)$, we define the von Kármán-type energy functional
	\begin{equation}\label{Ie}
		F_\e^\beta(y,v) \coloneqq \int_{\Omega_\e} \frac{1}{2}|\nabla^2 v - \nabla^2 \mathring{v}_\e|^2 + \frac{1}{2} \frac{1}{\e^{2\beta}} \left|Ey + \frac{1}{2} \nabla v \otimes \nabla v - \frac{1}{2} \nabla \mathring{v}_\e \otimes \nabla \mathring{v}_\e \right|^2 \, dx,
	\end{equation}
	subject to the following set of boundary conditions
	\begin{equation}\label{condpresc}
		\begin{aligned}
			& y(-\frac{\ell}{2}, x_2) = (0,0), \qquad v(-\frac{\ell}{2}, x_2) = \mathring{v}_\e(x_2), \qquad \partial_1 v(-\frac{\ell}{2}, x_2)=0, \\
			&\begin{multlined}[t]y(\frac{\ell}{2}, x_2) =
			(\e^\beta\Lambda_1 -\e^{\beta/2}\Phi_2\mathring{v}_\e(x_2) -\e^{\beta/2} \Phi_2\Phi_1 x_2 - \e^{\beta-1}\Phi_3 x_2, \\   \e^{\beta-1} \Lambda_2 - \Phi_1\mathring{v}_\e(x_2) -\frac{\Phi_1^2 x_2}{2}),\end{multlined}\\
			&v(\frac{\ell}{2}, x_2) = \mathring{v}_\e(x_2) + \e^{\beta/2} \Lambda_3 + x_2 \Phi_1, \qquad   \partial_1 v(\frac{\ell}{2}, x_2)=\e^{\beta/2} \Phi_2.
		\end{aligned}
	\end{equation}
	for a.e. $x_2 \in \e(-\frac12, \frac12)$ and
	for given real constants $\Lambda_1, \Lambda_2, \Lambda_3, \Phi_1, \Phi_2, \Phi_3$.
	The choice of the exponents of $\e$ appearing in \eqref{condpresc} will become transparent in Section \ref{sec:comp}.\\
	Mechanically, we may think the thickness of the shell to scale like $\e^\beta$. The further condition $\beta \ge 1$ ensures the thickness vanishes at least as fast as the width $\e$. However, from a mathematical viewpoint, we need not to restrict $\beta$ to the interval $[1,2]$.

	The boundary conditions \eqref{condpresc} ensure that the end of the shell at $x_1 = - \frac{\ell}{2}$ is clamped. The constants $\Lambda_i$ can be interpreted as the average translation of the cross section at $x_1=\ell/2$ along $\mathsf{e}_i$, while the constants $\Phi_i$ represent rotations (slopes) of the endmost section around $\mathsf{e}_i$.

	We
	introduce the scaled out-of-plane displacement $w: \Omega\to\bbR$
	$$
	w(x_1, x_2) \coloneqq   v(x_1, \e x_2),
	$$
	the scaled cross-section $\mathring{w}_\e: \Omega\to\bbR$
	$$
	\mathring{w}_\e(x_1, x_2) \coloneqq   \mathring{v}_\e(\e x_2)=\e \mathring{w}(x_2),
	$$
	and
	the scaled in-plane displacements $u: \Omega\to\bbR^2$
	$$
	u_1(x_1, x_2) \coloneqq   y_1(x_1, \e x_2), \qquad u_2(x_1, x_2) \coloneqq   \e y_2(x_1, \e x_2).
	$$
	We also define the scaled operators as
	$$
	\nabla_\e z\coloneqq   \left(\partial_1 z, \e^{-1}\partial_2 z\right)\tran, \qquad  \nabla^2_\e z\coloneqq
	\begin{pmatrix}
		\partial_{11}z & \e^{-1}\partial_{12}z\\
		\e^{-1}\partial_{21}z & \e^{-2}\partial_{22}z
	\end{pmatrix},
	$$
	$$
	E_\e z\coloneqq
	\begin{pmatrix}
		(Ez)_{11} & \e^{-1}(Ez)_{12}\\
		\e^{-1}(Ez)_{21} & \e^{-2}(Ez)_{22}
	\end{pmatrix}.
	$$
	Changing variables in \eqref{Ie} and \eqref{condpresc}, and dividing by the Jacobian, we obtain the rescaled energy
	\begin{align}\label{Fe}
		\hat{F}_\e^\beta(u, w) \coloneqq & \frac{1}{2}\int_\Omega  \left| \nabla^2_\e w - \frac{1}{\e} \mathring{w}'' \, \mathsf{e}_2 \otimes \mathsf{e}_2 \right|^2 \ dx \nonumber\\ & \qquad + \frac{1}{2}\int_\Omega  \frac{1}{\e^{2\beta}} \left| E_\e u + \frac{1}{2} \nabla_\e w \otimes \nabla_\e w - \frac{1}{2} \mathring{w}'^2 \, \mathsf{e}_2 \otimes \mathsf{e}_2 \right|^2 \, dx,
	\end{align}
	under the constraints
	\begin{equation}\label{cond}
		\begin{aligned}
			&u(-\frac{\ell}{2}, x_2) = (0,0), \qquad w(-\frac{\ell}{2}, x_2) = \e\mathring{w}, \qquad \partial_1 w(-\frac{\ell}{2}, \cdot)=0, \\
			& 	u(\frac{\ell}{2}, x_2) =
			(\e^\beta\Lambda_1 -\e^{1+\beta/2}\Phi_2\mathring{w} -\e^{1+\beta/2} \Phi_2\Phi_1 x_2 - \e^\beta \Phi_3 x_2, \\ & \hspace{7cm} \e^\beta \Lambda_2 - \e^2 \Phi_1\mathring{w} -\e^2 \frac{\Phi_1^2 x_2}{2}),\\
			&	 w(\frac{\ell}{2}, x_2) = \e\mathring{w} + \e^{\beta/2} \Lambda_3 + \e x_2 \Phi_1, \qquad   \partial_1 w(\frac{\ell}{2}, x_2)=\e^{\beta/2} \Phi_2
		\end{aligned}
	\end{equation}
	valid for a.e. $x_2 \in (-\frac12, \frac12)$.
	Note that, in particular, the first and the fourth prescription in \eqref{cond} imply that
	\begin{equation}\label{condvar}
		\int_\Omega \partial_1 u_1(x_1, x_2) \, dx  = \Lambda_1 \e^\beta \quad \forall \e > 0.
	\end{equation}
	This constraint has a pivotal role in Section \ref{sec:comp}.

	With the function space $\calX \coloneqq L^1(\Omega; \bbR^2) \times L^2(\Omega)$ we define the extended functional on $\calX$:
	\begin{equation}\label{eq:augm}
		F_{\e}^\beta(u, w) \coloneqq
		\begin{cases}
			\hat{F}_\e^\beta(u, w) & \text{if } (u, w) \in H^1(\Omega; \bbR^2) \times H^2(\Omega) \text{ and (\ref{cond}) holds}, \\
			+\infty & \text{otherwise}.
		\end{cases}
	\end{equation}

	\subsection{Geometric Quantities}

	It is helpful to introduce the following constants, which depend on the domain $\Omega$:
	\begin{equation}\label{constants}
		\begin{aligned}
			J_0 &= \int_\Omega |x|^2 \, dx, &
			J_1 &= \int_W x_2^2 \, dx_2 = \frac{1}{12}, &
			J_2 &= \int_W \left(\frac{x_2^2}{2} - \frac{1}{24} \right)^2 \, dx_2 = \frac{1}{720}.
		\end{aligned}
	\end{equation}

	We also define two functions that depend on $\mathring{w}$:
	\begin{equation}\label{w0}
		\ringring{w}(x_2) \coloneqq \int_{0}^{x_2} \left(t \mathring{w}'(t) - \mathring{w}(t)\right) \, dt,
		\qquad
		\ringring{w}_{\langle 0 \rangle}(x_2) \coloneqq \ringring{w}(x_2) - c_1 x_2,
	\end{equation}
	where
	\begin{equation}\label{cJ0}
		c_1 = \frac{1}{J_1} \int_W s \ringring{w}(s) \, ds.
	\end{equation}

	Additionally, the following constants—dependent on the cross-section profile—will be useful:
	\begin{equation}\label{cJ}
		\begin{aligned}
			c_2 &\coloneqq \frac{1}{J_2} \int_W \left( \frac{t^2}{2} - \frac{1}{24} \right) \mathring{w}(t) \, dt, \\
			J_3 &\coloneqq \int_W \left( \mathring{w}(t) - c_2 \left( \frac{t^2}{2} - \frac{1}{24} \right) \right)^2 \, dt, &
			J_4 \coloneqq \int_W \ringring{w}_{\langle 0 \rangle}^2(t) \, dt.
		\end{aligned}
	\end{equation}

	It is noteworthy that $J_3 = 0$ if and only if
	$$
	\mathring{w} = \kappa \left( \frac{x_2^2}{2} - \frac{1}{24} \right), \qquad \text{for some } \kappa \in \mathbb{R}
	$$
	(in this case, $c_2 = \kappa$). Moreover, $c_2$ and $J_3$ cannot both be zero. Also, $J_4 = 0$ if and only if $\mathring{w}$ is affine, which is excluded by assumption. In fact, $J_4 = 0$ if and only if $\ringring{w}_{\langle 0 \rangle} = 0$, i.e.,
	$$
	\int_{0}^{x_2} \left(t \mathring{w}'(t) - \mathring{w}(t)\right) \, dt = c_1 x_2,
	$$
	which implies $t \mathring{w}'(t) - \mathring{w}(t) = c_1$. The general solution of this ordinary differential equation is
	$\mathring{w}(t) = at - c_1,$ with $a \in \mathbb{R}.$

	We conclude this section by stating some useful properties of $\ringring{w}$ and $\ringring{w}_{\langle 0 \rangle}$.

	\begin{lemma}\label{lemmaw0}
		Let $\ringring{w}$ and $\ringring{w}_{\langle 0 \rangle}$ be defined as in \eqref{w0}. Then:
		\begin{enumerate}[(i)]
			\item $\ringring{w}$ is odd;
			\item $\ringring{w}_{\langle 0 \rangle}$ and $\mathring{w} \, \ringring{w}_{\langle 0 \rangle}$ are odd;
			\item $\displaystyle \int_W t \, \ringring{w}_{\langle 0 \rangle}(t) \, dt = 0$;
			\item The functions $1$, $x_2$, $\mathring{w}(x_2)$, and $\ringring{w}_{\langle 0 \rangle}(x_2)$ are mutually orthogonal in $L^2(W)$;
			\item The functions $1$, $x_2$, $\displaystyle \frac{x_2^2}{2} - \frac{1}{24}$, $\mathring{w}(x_2) - c_2 \left( \frac{x_2^2}{2} - \frac{1}{24} \right)$, and $\ringring{w}_{\langle 0 \rangle}(x_2)$ are mutually orthogonal in $L^2(W)$.
		\end{enumerate}
	\end{lemma}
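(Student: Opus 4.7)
The proof reduces entirely to parity considerations and to the defining identities of the constants $c_1$, $c_2$; the lemma is essentially a bookkeeping statement verifying that $\ringring{w}_{\langle 0\rangle}$ and $\mathring{w} - c_2\bigl(\tfrac{x_2^2}{2} - \tfrac{1}{24}\bigr)$ are the ``right'' orthogonal corrections. No analytic obstacle is expected; the plan is simply to execute the five checks in order, each one in a couple of lines.

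First, for (i), since $\mathring{w}$ is even, $\mathring{w}'$ is odd, so the integrand $t\mathring{w}'(t) - \mathring{w}(t)$ in \eqref{w0} is even. Integrating an even function from $0$ to $x_2$ yields an odd function, so $\ringring{w}$ is odd. Item (ii) then follows for free: $c_1 x_2$ is odd, so $\ringring{w}_{\langle 0\rangle} = \ringring{w} - c_1 x_2$ is odd, and the product $\mathring{w}\, \ringring{w}_{\langle 0\rangle}$ is even times odd, hence odd. For (iii), I would simply unfold the definition of $c_1$ from \eqref{cJ0}:
\[
\int_W t\, \ringring{w}_{\langle 0\rangle}(t)\, dt = \int_W t\, \ringring{w}(t)\, dt - c_1 \int_W t^2\, dt = c_1 J_1 - c_1 J_1 = 0,
\]
so $c_1$ was chosen precisely to enforce this orthogonality.

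For (iv), I will verify the six pairwise inner products. The pairs $(1,x_2)$, $(1, \ringring{w}_{\langle 0\rangle})$, $(x_2, \mathring{w})$, and $(\mathring{w}, \ringring{w}_{\langle 0\rangle})$ are all integrals of odd functions on the symmetric interval $W$, hence vanish using (i) and (ii). The pair $(1,\mathring{w})$ is the hypothesis \eqref{avwe0}, and the pair $(x_2, \ringring{w}_{\langle 0\rangle})$ is exactly (iii).

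For (v), most of the ten pairwise inner products are inherited from (iv) or from direct parity. The two new ingredients to check are $\int_W (\tfrac{x_2^2}{2} - \tfrac{1}{24})\, dx_2 = J_1/2 - 1/24 = 0$, which shows $1 \perp (\tfrac{x_2^2}{2} - \tfrac{1}{24})$, and the pair $(\tfrac{x_2^2}{2}-\tfrac{1}{24}, \ringring{w}_{\langle 0\rangle})$, which vanishes by parity since $\tfrac{x_2^2}{2}-\tfrac{1}{24}$ is even and $\ringring{w}_{\langle 0\rangle}$ is odd by (ii). Finally, orthogonality of $\mathring{w} - c_2(\tfrac{x_2^2}{2} - \tfrac{1}{24})$ to $1$, $x_2$, $\ringring{w}_{\langle 0\rangle}$ all follow from (iv) together with parity, whereas orthogonality to $\tfrac{x_2^2}{2} - \tfrac{1}{24}$ is the defining identity of $c_2$ in \eqref{cJ}:
\[
\int_W \Bigl(\tfrac{t^2}{2} - \tfrac{1}{24}\Bigr)\Bigl[\mathring{w}(t) - c_2\bigl(\tfrac{t^2}{2} - \tfrac{1}{24}\bigr)\Bigr]\, dt = c_2 J_2 - c_2 J_2 = 0.
\]
This closes all five items; no step requires more than a short calculation.
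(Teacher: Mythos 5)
Your proof is correct and takes essentially the same parity-and-definitions route as the paper. The only notable (and slight) difference is in item (i): the paper first integrates by parts to rewrite $\ringring{w}(x_2) = x_2 \mathring{w}(x_2) - 2\int_0^{x_2}\mathring{w}(t)\,dt$ and then argues each term is odd, whereas you observe directly that the integrand $t\mathring{w}'(t) - \mathring{w}(t)$ is even (odd times odd, minus even), so its primitive vanishing at $0$ is odd—a marginally more economical observation that avoids the integration by parts. Everything else (the $c_1$ and $c_2$ cancellations, the parity checks for the pairwise inner products, the Legendre-polynomial orthogonality $\int_W(\tfrac{x_2^2}{2}-\tfrac{1}{24})\,dx_2=0$) matches the paper's argument, and your explicit enumeration of the ten pairs in (v) is, if anything, slightly more complete than the paper's condensed statement.
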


	\begin{proof}
		Integration by parts yields:
		$$
		\ringring{w}(x_2) = x_2 \mathring{w}(x_2) - 2 \int_0^{x_2} \mathring{w}(t) \, dt.
		$$
		Since $\mathring{w}$ is even, both $x_2 \mapsto x_2 \mathring{w}(x_2)$ and $x_2 \mapsto \int_0^{x_2} \mathring{w}(t) \, dt$ are odd, hence $\ringring{w}$ is odd. Therefore, $\ringring{w}_{\langle 0 \rangle}$ is also odd. Moreover, since $\mathring{w}$ is even, the product $\mathring{w} \, \ringring{w}_{\langle 0 \rangle}$ is odd.

		For the third point, using the definition of $c_1$:
		$$
		\int_W t \ringring{w}_{\langle 0 \rangle}(t) \, dt = \int_W t \ringring{w}(t) \, dt - c_1 \int_W t^2 \, dt = 0.
		$$

		The fourth claim follows from points (i)–(iii) and from \eqref{avwe0}.

		For the fifth point, the first three functions are orthogonal in $L^2(W)$ as they correspond to Legendre polynomials on $W$. By point (iv), $1$, $x_2$, $\mathring{w}(x_2)$, and $\ringring{w}_{\langle 0 \rangle}(x_2)$ are mutually orthogonal. Furthermore, since $\ringring{w}_{\langle 0 \rangle}$ is odd, it is orthogonal to the even function $\frac{x_2^2}{2} - \frac{1}{24}$. The result follows.
	\end{proof}

	\section{Compactness}\label{sec:comp}

	Throughout the section we consider a sequence  $(u_\e, w_{\e})\subset \calX$ such that
	\begin{equation}\label{cmp00}
		\sup_{\e} F_{\e}^{\beta}(u_\e, w_{\e}) <\infty.
	\end{equation}
	This bound implies that  $\sup_{\e} \hat{F}_{\e}^{\beta}(u_\e,w_\e)<+\infty$, where
	$$	 \begin{aligned}\hat{F}_{\e}^{\beta}(u_\e,w_\e) = & \frac12\int_\Omega  \Big| \nabla^2_\e w_\e- \frac{\mathring{w} ''}{\e} \mathsf{e}_2 \otimes \mathsf{e}_2\Big|^2 \, dx \\ & \hspace{2cm}+ \frac12\int_\Omega \Big|E_\e (\frac{u_\e}{\e^\beta}) + \frac12 \nabla_\e \frac{w_\e}{\e^{\beta/2}}\otimes \nabla_\e\frac{w_\e}{\e^{\beta/2}} -  \frac{\mathring{w}'^2}{2\e^\beta} \mathsf{e}_2 \otimes \mathsf{e}_2 \Big|^2\, dx. \end{aligned}
	$$
	For convenience, we set
	\begin{equation}\label{defSe}
		S_\e:=E_\e (\frac{u_\e}{\e^\beta}) + \frac12 \nabla_\e \frac{w_\e}{\e^{\beta/2}}\otimes \nabla_\e\frac{w_\e}{\e^{\beta/2}} -  \frac{\mathring{w}'^2}{2\e^\beta} \mathsf{e}_2 \otimes \mathsf{e}_2.
	\end{equation}

	Our first result establishes partial compactness properties for  all  $0<\beta\le 2$.

	\begin{lemma}\label{compactness4}
		Let $0< \beta \le 2$.
		Let $(u_\e, w_{\e})\subset \calX$ be a sequence that satisfies \eqref{cmp00}.
		Then there exist $r \in H^1(I)$, $\vartheta \in H^1(I)$,   $\gamma_{11}, \gamma_{22}\in L^2(\Omega)$
		such that, up to a subsequence,
		\begin{align}
			&\partial_{2} \frac{w_\e}{\e} - \mathring{w} ' \weak \vartheta &&in \ H^1(\Omega), \label{cmp1}\\
			&\nabla^2_\e w_\e - \frac{\mathring{w} ''}{\e}\mathsf{e}_2\otimes \mathsf{e}_2 \weak
			\begin{pmatrix}
				\gamma_{11} &  \vartheta'
				\\
				\vartheta' & \gamma_{22}
			\end{pmatrix}
			&& in \ L^{2}(\Omega, \Sym), \label{cmp2}\\
			&\frac{w_\e}{\e^{\beta/2}} \weak w=\begin{cases}
				r & \text{if }\beta <2\\
				r+x_2\vartheta+\mathring{w} & \text{if }\beta =2
			\end{cases}  &&in \ H^1(\Omega).\label{cmp3}
		\end{align}
		Also, there exist $S \in L^2(\Omega,\mathbb{R}^{2\times 2}_{\rm sym})$, $(\xi_1, \xi_2) \in BV(I)\times BH(I)$,  such that, up to a subsequence,
		\begin{align}
			S_\e &\weak S&& in \ L^{2}(\Omega, \Sym), \label{cmp5}\\
			\frac{u_{\e}}{\e^\beta} &\weakstar u &&in \ BD(\Omega, \bbR^2),\label{cmp6}
		\end{align}
		with $u$ given by
		\begin{equation}\label{cmp7}
			u = \begin{cases}
				(\xi_1 - x_2 \xi_2', \xi_2) & \text{ if } \beta < 2\\
				(\xi_1 -x_2(\xi_2'+r'\vartheta + c_1\vartheta') - \mathring{w} r'   -\ringring{w}_{\langle 0 \rangle}\vartheta',\\
				\hspace{5cm}		\xi_2 - \frac12 x_2 \vartheta^2 - \mathring{w}\vartheta) & \text{ if } \beta = 2.
			\end{cases}
		\end{equation}
	\end{lemma}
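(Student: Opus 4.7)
The plan is to split the energy bound into its bending and membrane contributions and chase compactness separately, using the clamping at $x_1=-\ell/2$ and the integrated constraint \eqref{condvar} as the bridges between the two parts. From the bending term, the tensor $\nabla_\e^2 w_\e-(\mathring{w}''/\e)\mathsf{e}_2\otimes\mathsf{e}_2$ is bounded in $L^2$: its $(2,2)$-entry, multiplied by $\e$, forces $\partial_{22}w_\e/\e\to\mathring{w}''$ strongly in $L^2$, while the $(1,2)$-entry gives $\partial_{12}(w_\e/\e)$ bounded in $L^2$. Combined with $\partial_2 w_\e(-\ell/2,\cdot)=\e\mathring{w}'$, Poincaré's inequality renders $\partial_2 w_\e/\e-\mathring{w}'$ bounded in $H^1(\Omega)$, with weak limit independent of $x_2$; calling that limit $\vartheta\in H^1(I)$ yields \eqref{cmp1} and, together with the weak limits of the diagonal entries, \eqref{cmp2}. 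The $L^2$-bound \eqref{cmp5} for $S_\e$ is immediate from the membrane part of the energy.

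The central step is the compactness of $W_\e\coloneqq w_\e/\e^{\beta/2}$. Expanding $(S_\e)_{11}=\partial_1(u_\e/\e^\beta)_1+\tfrac12(\partial_1 W_\e)^2$ and integrating over $\Omega$, the constraint \eqref{condvar} fixes $\int_\Omega \partial_1(u_\e/\e^\beta)_1\,dx=\Lambda_1$, so that $\tfrac12\|\partial_1 W_\e\|_{L^2}^2=\int_\Omega(S_\e)_{11}\,dx-\Lambda_1\le C$. Since $\partial_2 W_\e=\e^{1-\beta/2}\partial_2(w_\e/\e)$ is already bounded in $L^2$ and the trace $W_\e(-\ell/2,\cdot)=\e^{1-\beta/2}\mathring{w}$ is bounded in $L^2(W)$ for $\beta\le2$, a further application of Poincaré's inequality gives $W_\e$ bounded in $H^1(\Omega)$. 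For $\beta<2$ the $\partial_2$-derivative vanishes in the limit, so $w$ depends only on $x_1$; for $\beta=2$ the $\partial_2$-derivative converges to $\mathring{w}'+\vartheta$, which integrates to $w=r+x_2\vartheta+\mathring{w}$, giving \eqref{cmp3}.

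For the in-plane part I would multiply the identities for $(S_\e)_{11}$, $\e(S_\e)_{12}$ and $\e^2(S_\e)_{22}$ by $1$, $\e$ and $\e^2$ respectively and solve for each entry of $E(u_\e/\e^\beta)$: each right-hand side is a (possibly vanishing) multiple of $S_\e$ plus a quadratic form in $\nabla_\e W_\e$, hence controlled in $L^1$. Combined with $u_\e(-\ell/2,\cdot)=0$ and a Poincaré-type inequality in $BD$, this yields \eqref{cmp6}. To identify $u$, Rellich applied to $\partial_2(w_\e/\e)\in H^1$ provides the strong $L^2$-convergence of $\partial_2 W_\e$, legitimizing the weak–strong passage in the quadratic products; this delivers the distributional identities $\partial_2 u_2=-\mathring{w}'\vartheta-\tfrac12\vartheta^2$ and $\partial_1 u_2+\partial_2 u_1=-\partial_1 w\,\partial_2 w$ in the $\beta=2$ case (both zero for $\beta<2$). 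Integration in $x_2$, together with the orthogonality properties of $\mathring{w}$, $\ringring{w}_{\langle 0\rangle}$ and $x_2$ proved in Lemma~\ref{lemmaw0} (which absorbs the constant $c_1 x_2$ into the integration constants in a canonical way), rebuilds $u$ in the claimed form \eqref{cmp7}; comparing the explicit expression for $\partial_1 u_1$ with the Radon-measure limit of $\partial_1(u_\e/\e^\beta)_1$ then forces $\xi_1\in BV(I)$ and $\xi_2\in BH(I)$. The main obstacle I anticipate lies in the second paragraph: without the integrated constraint \eqref{condvar} there is no way to bound $\partial_1 W_\e$ in $L^2$, and the entire $\beta=2$ identification—especially the appearance of $\ringring{w}_{\langle 0\rangle}$—would collapse.
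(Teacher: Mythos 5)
Your proposal follows essentially the same route as the paper's proof: the bending bound yields compactness of the scaled torsion $\partial_2 w_\e/\e - \mathring{w}'$, the membrane bound on $S_\e$ combined with the integrated constraint \eqref{condvar} (via the $(1,1)$-entry) bounds $\partial_1(w_\e/\e^{\beta/2})$ in $L^2$, and the identities $\e S^\e_{12}\to 0$, $\e^2 S^\e_{22}\to 0$ together with strong convergence of $\partial_2(w_\e/\e^{\beta/2})$ let you integrate in $x_2$ to recover \eqref{cmp7}. The only cosmetic differences from the paper are that you invoke the clamping trace and standard Poincaré where the paper uses Poincaré--Wirtinger, and you refer to a ``Poincaré-type inequality in $BD$'' where the paper cites Korn's inequality in $BD$ from Temam; your observation that \eqref{condvar} is the linchpin is exactly the point the paper emphasizes.
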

	\begin{proof}
		Let $(u_\e, w_{\e})\subset \calX$ be a sequence that satisfies \eqref{cmp00}.
		Then
		\begin{equation}\label{bound2}
			\sup_{\e} \Big\|\nabla^2_\e w_\e- \frac{\mathring{w} ''}{\e} \mathsf{e}_2 \otimes \mathsf{e}_2\Big\|_{L^2(\Omega,\Sym)} <\infty \quad
			\text{and}\quad\sup_{\e} \n{S_\e}_{L^2(\Omega,\Sym)} <\infty,
		\end{equation}
		with $S_\e$ as defined in \eqref{defSe}.
		From the first of \eqref{bound2}, we deduce that up to a subsequence
		\begin{equation}\label{supd2220}
			\partial_{11} w_\e \weak \gamma_{11}  \quad\text{in } L^2(\Omega), \quad \partial_{12} \frac{w_\e}{\e} \weak \gamma_{12}\quad\text{in } L^2(\Omega),\quad
			\partial_{22} \frac{w_\e}{\e^2} - \frac{\mathring{w} ''}{\e} \weak \gamma_{22}  \quad\text{in } L^2(\Omega),
		\end{equation}
		with $\gamma_{11}, \gamma_{12},\gamma_{22}\in L^2(\Omega).$
		By the second and the third of \eqref{supd2220}, and the fact that $\mathring{w}$ is independent of $x_1$, it follows that
		$\nabla (\partial_2 w_\e/\e-\mathring{w}')$ is a bounded sequence in $L^2(\Omega, \bbR^2)$.
		By Poincar\'e-Wirtinger's inequality,  $\partial_2 w_\e/\e-\mathring{w}'$ is bounded in $H^1(\Omega)$, and hence

		\begin{equation}\label{supd2220011}
			\partial_{2} \frac{w_\e}{\e} - \mathring{w} '\weak \vartheta \quad\text{in }H^1(\Omega),
		\end{equation}
		up to a subsequence, for some $\vartheta\in H^1(\Omega)$.
		From the third of \eqref{supd2220}, we deduce that $\vartheta$ is independent of $x_2$ and therefore
		$\vartheta \in H^1(I)$. Also, from the second of \eqref{supd2220} we have that $\gamma_{12}=\vartheta'$.

		The $11$-component of $S_\e$ is
		$$
		S^\e_{11}=\frac{\partial_1 u^\e_1}{\e^\beta}+\frac 12 (\frac{\partial_1 w_\e}{\e^{\beta/2}})^2.
		$$
		From the second of \eqref{bound2} and  Jensen's inequality, we find
		$$
		\sup_\e \Big(\int_\Omega \frac{\partial_1 u^\e_1}{\e^\beta}+\frac 12 (\frac{\partial_1 w_\e}{\e^{\beta/2}})^2\, dx\Big)^2 < \infty,
		$$
		and by using condition \eqref{condvar} it follows that
		$$ \sup_\e \Big(\Lambda_1 + \int_\Omega (\frac{\partial_1 w_\e}{\e^{\beta/2}})^2 \, dx\Big)^2 < \infty$$
		and
		\begin{equation} \label{supd1222}
			\sup_{\e} \|{\partial_1 \frac{w_\e}{\e^{\beta/2}}}\|_{L^2(\Omega)} <\infty.
		\end{equation}
		This inequality, the fact that  $\beta\le 2$,  \eqref{supd2220011}, and Poincaré's inequality,
		imply that $(\frac{w_\e}{\e^{\beta/2}})$ is a bounded sequence in $H^1(\Omega)$. Hence, up to a subsequence,
		$$ \frac{w_\e}{\e^{\beta/2}} \weak w \qquad \text{ in } H^1(\Omega),
		$$
		for some  $w\in H^1(\Omega)$. By \eqref{supd2220011} it follows that $\partial_2 w=\mathring{w}'+\vartheta$ if $\beta=2$ and
		$\partial_2 w=0$ if $\beta<2$. Thence,
		$$
		w(x_1,x_2)=\begin{cases}
			r(x_1) & \text{if }\beta <2\\
			r(x_1)+x_2\vartheta(x_1)+\mathring{w}(x_2) & \text{if }\beta =2
		\end{cases}
		$$
		for some $r \in H^1(I)$.

		From the second of \eqref{bound2} it immediately follows \eqref{cmp5}, and also that
		$$ \sup_\e\Big\|E \frac{u_\e}{\e^\beta} + \frac12 \nabla \frac{w_\e}{\e^{\beta/2}} \otimes \nabla \frac{w_\e}{\e^{\beta/2}} -\frac12 \e^{2-\beta}\mathring{w}'^2 \mathsf{e}_2 \otimes \mathsf{e}_2\Big\|_{L^2(\Omega)} < +\infty.$$
		This inequality and  \eqref{cmp3} imply  that
		$$ \sup_\e \Big\|E\frac{u_\e}{\e^\beta}\Big\|_{L^1(\Omega,\Sym)} < +\infty.$$
		Since $(u_\e) \subset H^1(\Omega, \bbR^2)$ and \eqref{cond} hold,  we can apply the Korn's inequality \cite[Sec. 2.4]{Temam1987} to deduce that
		$(\frac{u_\e}{\e^\beta})$ is bounded in $BD(\Omega, \bbR^2)$ and that $\frac{u_\e}{\e^\beta} \weakstar u$ in $BD(\Omega, \bbR^2)$ for some $u \in BD(\Omega, \bbR^2)$.

		Since $(S_\e)$ is uniformly bounded in $L^2(\Omega)$, we have
		\begin{equation}\label{Seu}
			\begin{aligned}
				\e S_{12}^\e&=\frac{\partial_1 u_2^\e + \partial_2 u_1^\e}{2\e^\beta}+ \frac12 \frac{\partial_1 w_\e}{\e^{\beta/2}}\frac{\partial_2 w_\e}{\e^{\beta/2}}\to 0 && \text{ in } L^2(\Omega)\\
				\e^2 S_{22}^\e&=\frac{\partial_2 u_2^\e}{\e^\beta} + \frac12 \big(\frac{\partial_2 w_\e}{\e^{\beta/2}}\big)^2- \frac12\e^{2-\beta} \mathring{w}'^2\to 0 && \text{ in } L^2(\Omega).
			\end{aligned}
		\end{equation}
		For $\beta=2$, \eqref{cmp1} implies that $\frac{\partial_2 w_\e}{\e^{\beta/2}}\to \vartheta+\mathring{w}'=\partial_2 w$ in $L^p(\Omega)$ for every $1\le p<+\infty$, while
		for $\beta < 2$ \eqref{cmp1} implies that $\frac{\partial_2 w_\e}{\e^{\beta/2}}\to 0=\partial_2 w$ in $L^p(\Omega)$ for every $1\le p<+\infty$. Thus from \eqref{cmp3} and \eqref{Seu} we deduce that
		$$ \begin{aligned}
			\frac{\partial_1 u_2^\e + \partial_2 u_1^\e}{2\e^\beta} &\to -\frac12\partial_1 w\partial_2 w=\frac{\partial_1 u_2 + \partial_2 u_1}{2}  && \text{ in } L^p(\Omega)\text{ for }1\le p <2,\\
			\frac{\partial_2 u_2^\e}{\e^\beta} &\to - \frac12(\partial_2 w)^2 + \frac12\begin{cases}
				0 & \beta < 2\\
				\mathring{w}'^2 & \beta=2
			\end{cases}=\partial_2 u_2
			&& \text{ in } L^2(\Omega).
		\end{aligned}$$
		By integration and arguing as in Lemma \ref{lemmabd}, we deduce that there are functions $(\xi_1, \xi_2) \in BV(I)\times BH(I)$ such that $(u_1, u_2)$ is as in the statement of the lemma.
	\end{proof}

	We now state a Lemma that we will systematically take advantage of.
	\begin{lemma}\label{lemmaS} Let $0<\beta \le 2$.
		Let $(u_\e, w_{\e})\subset H^1(\Omega,\bbR^2)\times H^2(\Omega)$ and let $S_\e$ as defined in \eqref{defSe}.
		Then for every $\e>0$
		\begin{multline}\label{Seq}
			\int_\Omega -\varphi \det \nabla^2\frac{w_\e}{\e^{\beta/2}}\,dx = \\\int_\Omega S^\e_{11}\partial_{22}\varphi-2\e S^\e_{12}\partial_{12}\varphi+\e^2S^\e_{22}\partial_{11}\varphi\, dx
			 \qquad \forall \ \varphi\in H^2_0(\Omega).
		\end{multline}
		If additionally $S_\e$ is such that $\sup_\e\n{S_\e}_{L^2(\Omega)}<\infty$,  there exists a constant $C>0$ such that for every $\e>0$
		\begin{multline}\label{Sineq}
			\Big|\int_\Omega\varphi \det \nabla^2\frac{w_\e}{\e^{\beta/2}} \,dx\Big|\le \\ C\big(\|\partial_{22}\varphi\|_{L^2(\Omega)}+\e\|\partial_{12}\varphi\|_{L^2(\Omega)}+\e^2\|\partial_{11}\varphi\|_{L^2(\Omega)}\big) \qquad  \forall \ \varphi\in H^2_0(\Omega).
		\end{multline}
		If $\sup_\e\n{\nabla_\e^2 w_\e - \mathring{w}''/\e \mathsf{e}_2\otimes\mathsf{e}_2}_{L^2(\Omega)}<\infty$, there exists a sequence $(d_\e)$ bounded in $L^1(\Omega)$ for which
		\begin{equation}\label{eqdet}
			\det \nabla^2\frac{w_\e}{\e^{\beta/2}}=\e^{2-\beta}d_\e+\e^{1-\beta/2}\partial_{11}\frac{w_\e}{\e^{\beta/2}}\mathring{w}''
		\end{equation}
		for every $\e>0$.
	\end{lemma}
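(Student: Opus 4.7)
The plan is to prove the three statements in sequence, with \eqref{Seq} doing the main work and the other two following directly. In all three, the key rescaling is $W_\e := w_\e/\e^{\beta/2}$.

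For \eqref{Seq}, I would introduce the ``unscaled'' symmetric companion
\[\tilde S_\e := \begin{pmatrix} S^\e_{11} & \e S^\e_{12} \\ \e S^\e_{12} & \e^2 S^\e_{22}\end{pmatrix}.\]
A short inspection of \eqref{defSe} shows that
\[\tilde S_\e = E\!\left(\tfrac{u_\e}{\e^\beta}\right) + \tfrac12 \nabla W_\e \otimes \nabla W_\e - \tfrac{\e^{2-\beta}}{2}\mathring{w}'^2\,\mathsf{e}_2\otimes\mathsf{e}_2.\]
Since $A\cdot\cof\nabla^2\varphi = A_{11}\partial_{22}\varphi - 2A_{12}\partial_{12}\varphi + A_{22}\partial_{11}\varphi$ for symmetric $A$, the right-hand side of \eqref{Seq} is precisely $\int_\Omega \tilde S_\e\cdot\cof\nabla^2\varphi\,dx$, which by the duality \eqref{ccA} equals $\langle \curl\curl \tilde S_\e,\varphi\rangle$. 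I would then evaluate $\curl\curl$ on each of the three summands of $\tilde S_\e$: \eqref{cceu} kills the $E(u_\e/\e^\beta)$ contribution; \eqref{ccww} converts $\tfrac12 \nabla W_\e\otimes\nabla W_\e$ into $-\det\nabla^2 W_\e$; and the last summand, whose only non-vanishing entry depends solely on $x_2$, is annihilated by $\partial_{11}$ (and by $\partial_{12}$). Adding the three contributions gives $\curl\curl \tilde S_\e = -\det\nabla^2 W_\e$, which is exactly \eqref{Seq}.

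The bound \eqref{Sineq} is then immediate: apply Cauchy--Schwarz to each of the three terms on the right-hand side of \eqref{Seq} and use $\sup_\e \|S_\e\|_{L^2(\Omega)}<\infty$ to absorb the $L^2$-norms of $S^\e_{ij}$ into a single constant. For \eqref{eqdet}, I would set $M_\e := \nabla^2_\e w_\e - (\mathring{w}''/\e)\,\mathsf{e}_2\otimes\mathsf{e}_2$ and invert the scaling to read off $\partial_{11}w_\e = M^\e_{11}$, $\partial_{12}w_\e = \e M^\e_{12}$, and $\partial_{22}w_\e = \e^2 M^\e_{22} + \e\mathring{w}''$. Expanding the $2\times 2$ determinant and dividing by $\e^\beta$ produces exactly \eqref{eqdet} with $d_\e := \det M_\e$; the $L^1$-bound on $d_\e$ follows from $|\det M_\e|\le \tfrac12 |M_\e|^2$ (see \eqref{eq:detAB}) together with the standing hypothesis $\sup_\e \|M_\e\|_{L^2(\Omega)}<\infty$.

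The only genuinely delicate step is the $\e$-bookkeeping in \eqref{Seq}: identifying the unscaled tensor $\tilde S_\e$ (with factors $1,\e,\e^2$ on its entries) is what makes the right-hand side recognisable as a pairing against $\cof\nabla^2\varphi$, after which everything reduces to three clean applications of the $\curl\curl/\cof\nabla^2$ duality and the identities \eqref{cceu}--\eqref{ccww}. Once this is done, the identity holds pointwise in $\e$ for every $\varphi\in H^2_0(\Omega)$, and no regularisation or density argument is needed.
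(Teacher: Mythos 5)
Your proposal is correct and follows essentially the same route as the paper: both use the duality \eqref{ccA} together with \eqref{cceu}, \eqref{ccww} (and the fact that the term involving $\mathring{w}'^2$ depends only on $x_2$) to obtain \eqref{Seq}, both read off \eqref{Sineq} by Cauchy--Schwarz, and both set $d_\e=\det M_\e$ and bound it in $L^1$ via $|\det A|\le\tfrac12|A|^2$. The only cosmetic difference is that you work directly with the scaled tensor $\tilde S_\e$, whereas the paper writes the unscaled identity first and then changes variables to $\nabla_\e$ and $E_\e$ --- the two are algebraically identical.
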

	\begin{proof}
		By \eqref{ccA}, \eqref{cceu}, \eqref{ccww}, and since $\det\nabla^2 \mathring{w} = 0$, we have the identity
		\begin{equation}
			\begin{aligned}
				\int_{\Omega} \varphi \det\nabla^2 w_\e &\, dx = -\int_{\Omega} (E u_\e+ \frac12 \nabla w_\e \otimes \nabla w_\e -\frac12 \e^2 \mathring{w}'^2 \mathsf{e}_2\otimes \mathsf{e}_2) \cdot \cof\nabla^2 \varphi \; dx\\
				&= -\e^2\int_{\Omega} (E_\e u_\e+ \frac12 \nabla_\e w_\e \otimes \nabla_\e w_\e -\frac12 \mathring{w}'^2 \mathsf{e}_2\otimes \mathsf{e}_2) \cdot \cof\nabla^2_\e \varphi \, dx\\
				&= -\e^{2+\beta}\int_{\Omega} \frac{1}{\e^\beta}(E_\e u_\e+ \frac12 \nabla_\e w_\e \otimes \nabla_\e w_\e -\frac12 \mathring{w}'^2 \mathsf{e}_2\otimes \mathsf{e}_2) \cdot \cof\nabla^2_\e \varphi \, dx
			\end{aligned}
		\end{equation}
		which is equivalent to \eqref{Seq} after a little manipulation.

		Inequality \eqref{Sineq} follows from \eqref{Seq} by H\"older'sinequality and the boundedness of $S_\e$ in $L^2(\Omega)$.

		The identity \eqref{eqdet} can be easily verified by posing
		$$d_\e=\det \big(\nabla^2_\e w_\e-\frac{\mathring{w}''}{\e}\mathsf{e}_2 \otimes \mathsf{e}_2\big).$$
		That $d_\e$ is bounded in $L^1(\Omega)$ follows by recalling the second and the first of \eqref{eq:detAB}.
	\end{proof}

	\begin{remark}\label{remarkext}
		Let $\widetilde{\Omega}$ be the extended domain $(-\frac{3}{2}\ell, \frac{3}{2}\ell)\times(-\frac12 ,\frac12 ) =: \widetilde{I} \times W$.
		For every $\e>0$ and for every $(u_\e, w_\e) \in H^1(\Omega, \bbR^2)\times H^2(\Omega)$ we consider the following extension
		\begin{align}
			\widetilde{u}_{1\e}(x) &:= \begin{cases}
				0 & \text{ if } x_1 \in (-\frac{3}{2}\ell, -\frac{\ell}{2})\\
				u_{1\e}(x) & \text{ if } x_1 \in (-\frac{\ell}{2}, \frac{\ell}{2})\\
				\begin{multlined}[t]\e^\beta\Lambda_1 - \e^{1+\beta/2}\Phi_2\Phi_1 x_2 - \e^{1+\beta/2}\Phi_2\mathring{w}\\ - \e^\beta\frac{\Phi_2^2}{2}(x_1-\frac{\ell}{2}) - \e^\beta\Phi_3 x_2 \end{multlined} & \text{ if } x_1 \in (\frac{\ell}{2}, \frac{3}{2}\ell)	\end{cases},\label{ext01}\\
			\widetilde{u}_{2\e}(x) &:= \begin{cases}
				0 & \text{ if } x_1 \in (-\frac{3}{2}\ell, -\frac{\ell}{2})\\
				u_{2\e}(x) & \text{ if } x_1 \in (-\frac{\ell}{2}, \frac{\ell}{2})\\
				\e^\beta \Lambda_2 - \e^2\frac{\Phi_1^2}{2} x_2 - \e^2\Phi_1\mathring{w} + \e^\beta\Phi_3 (x_1-\frac{\ell}{2}) & \text{ if } x_1 \in (\frac{\ell}{2}, \frac{3}{2}\ell) \end{cases},\label{ext02}\\
			\widetilde{w}_{\e}(x) &:= \begin{cases}
				0 & \text{ if } x_1 \in (-\frac{3}{2}\ell, -\frac{\ell}{2})\\
				w_{\e}(x) & \text{ if } x_1 \in (-\frac{\ell}{2}, \frac{\ell}{2})\\
				\e^{\beta/2}\Lambda_3 + \e\mathring{w} + \e x_2 \Phi_1 + \e^{\beta/2} \Phi_2(x_1-\frac{\ell}{2}) & \text{ if } x_1 \in (\frac{\ell}{2}, \frac{3}{2}\ell) \end{cases}.\label{ext03}
		\end{align}
		By \eqref{cond},  $(\widetilde{u}_{\e}, \widetilde{w}_\e) \subset H^1(\widetilde{\Omega},\bbR^2)\times H^2(\widetilde{\Omega})$.
		Moreover, we define
		\begin{equation}\label{defSetilde}
			\widetilde{S}_\e:= E_\e (\frac{\widetilde{u}_\e}{\e^\beta}) + \frac12 \nabla_\e \frac{\widetilde{w}_\e}{\e^{\beta/2}}\otimes \nabla_\e\frac{\widetilde{w}_\e}{\e^{\beta/2}} -  \frac{\mathring{w}'^2}{2\e^\beta} \mathsf{e}_2 \otimes \mathsf{e}_2\end{equation}
		and
		\begin{equation}
			\begin{aligned}
			\widetilde{F}^\beta_\e(\widetilde{u}_\e, \widetilde{w}_\e):=  &\frac12\int_{\widetilde{\Omega}}  \Big| \nabla^2_\e \widetilde{w}_\e- \frac{\mathring{w} ''}{\e} \mathsf{e}_2 \otimes \mathsf{e}_2\Big|^2 \, dx\\ + &\frac12\int_{\widetilde{\Omega}} \Big|E_\e (\frac{\widetilde{u}_\e}{\e^\beta}) + \frac12 \nabla_\e \frac{\widetilde{w}_\e}{\e^{\beta/2}}\otimes \nabla_\e\frac{\widetilde{w}_\e}{\e^{\beta/2}} -  \frac{\mathring{w}'^2}{2\e^\beta} \mathsf{e}_2 \otimes \mathsf{e}_2 \Big|^2\, dx.
			\end{aligned}
		\end{equation}
		The main property of such an extension is that the energy on $\widetilde{\Omega}\setminus\Omega$ is null, in the sense that
		\begin{equation}\label{equiv}
			\hat{F}^\beta_\e(u_\e, w_\e) = \widetilde{F}^\beta_\e(\widetilde{u_\e}, \widetilde{w}_\e) \qquad \forall \e>0.\end{equation}
		Moreover, the analogous counterpart of \eqref{condvar} holds for $\widetilde{u}_{1\e}$.
		Arguing as in the proof of Lemma \ref{compactness4}, we deduce the existence of functions $\widetilde{\xi}_1 \in BV(\widetilde{I})$, $\widetilde{\xi}_2 \in BH(\widetilde{I})$, $\widetilde{r} \in H^1(\widetilde{I})$, $\widetilde{\vartheta} \in H^1(\widetilde{I})$ such that
		\begin{equation}\label{18ext}\frac{\partial_{2}\widetilde{w}_\e}{\e} \weak \widetilde{\vartheta}\qquad in \ H^1(\widetilde{\Omega}),\end{equation}
		\begin{equation}\label{19ext}\frac{\widetilde{w}_\e}{\e^{\beta/2}} \weak \widetilde{w}=\begin{cases}
				\widetilde{r} & \text{if }\beta <2\\
				\widetilde{r}+x_2\widetilde{\vartheta}+\mathring{w} & \text{if }\beta =2
			\end{cases}  \qquad in \ H^1(\widetilde{\Omega}),\end{equation}
		and
		$\frac{\widetilde{u}_{\e}}{\e^\beta} \weakstar \widetilde{u}$ in $BD(\widetilde{\Omega}, \bbR^2)$
		with $\widetilde{u}$ given by
		\begin{equation}\label{uext}
			\widetilde{u}(x_1,x_2) = \begin{cases}
				(\widetilde{\xi}_1 - x_2 \widetilde{\xi}_2', \widetilde{\xi}_2) & \text{ if } \beta < 2\\
				(\widetilde{\xi}_1 -x_2(\widetilde{\xi}_2'+\widetilde{r}'\widetilde{\vartheta} + c_1\widetilde{\vartheta}') - \mathring{w} \widetilde{r}'   -\ringring{w}_{\langle 0 \rangle}\widetilde{\vartheta}',\\
				\hspace{5cm}		\widetilde{\xi}_2 - \frac12 x_2 \widetilde{\vartheta}^2 - \mathring{w}\widetilde{\vartheta}) & \text{ if } \beta = 2
			\end{cases}.
		\end{equation}
		Equivalently, if $(\xi_1, \xi_2, r, \vartheta) \in BV(I)\times BH(I)\times H^1(I) \times H^1(I)$ are as in Lemma \ref{compactness4} it can be easily seen by passing to the limit in \eqref{ext01}--\eqref{ext03} that
		\begin{align}
			\widetilde{\xi}_1 &= \begin{cases}
				0 & \text{ if } x_1 \in (-\frac{3}{2}\ell, -\frac{\ell}{2})\\
				\xi_1 & \text{ if } x_1 \in (-\frac{\ell}{2}, \frac{\ell}{2})\\
				\Lambda_1 - \frac12 \Phi_2^2(x_1-\ell/2)
				& \text{ if } x_1 \in (\frac{\ell}{2}, \frac{3}{2}\ell)
			\end{cases},\label{ext1}\\
			\widetilde{\xi}_2 &= \begin{cases}
				0 & \text{ if } x_1 \in (-\frac{3}{2}\ell, -\frac{\ell}{2})\\
				\xi_2 & \text{ if } x_1 \in (-\frac{\ell}{2}, \frac{\ell}{2})\\
				\Lambda_2 + \Phi_3(x_1-\ell/2) & \text{ if } x_1 \in (\frac{\ell}{2}, \frac{3}{2}\ell)
			\end{cases},\label{ext2}\\
			\widetilde{r} &= \begin{cases}
				0 & \text{ if } x_1 \in (-\frac{3}{2}\ell, -\frac{\ell}{2})\\
				r & \text{ if } x_1 \in (-\frac{\ell}{2}, \frac{\ell}{2})\\
				\Lambda_3 + \Phi_2(x_1-\ell/2) & \text{ if } x_1 \in (\frac{\ell}{2}, \frac{3}{2}\ell)
			\end{cases},\label{ext3}\\
			\widetilde{\vartheta}&= \begin{cases}
				0 & \text{ if } x_1 \in (-\frac{3}{2}\ell, -\frac{\ell}{2})\\
				\vartheta & \text{ if } x_1 \in (-\frac{\ell}{2}, \frac{\ell}{2})\\
				\Phi_1 & \text{ if } x_1 \in (\frac{\ell}{2}, \frac{3}{2}\ell)
			\end{cases}.\label{ext4}\end{align}

	\end{remark}

	\subsection{Finer compactness for $0<\beta < 2$}
	The main efforts are now to show the enhanced regularity (than naively expected) of the limit functions and to characterize the boundary value traces.
	When $0<\beta<2$ it is convenient to work with the extension defined in Remark \ref{remarkext}.

	\begin{lemma}\label{mulambda}
		Let $0<\beta <2$. 		Let $(u_\e, w_{\e})\subset \calX$ and  $(\xi_1, \xi_2, r, \vartheta) \in BV(I)\times BH(I)\times H^1(I) \times H^1(I)$ as in Lemma \ref{compactness4}.
		Then, with the notation of Remark \ref{remarkext},
		\begin{enumerate}[(i)]
			\item $r(-\frac{\ell}{2}) = 0$, $r(\frac{\ell}{2}) = \Lambda_3$, $\vartheta(-\frac{\ell}{2}) = 0$, $\vartheta(\frac{\ell}{2}) = \Phi_1$;
			\item $(\xi_1,r) \in \calB := \{(\xi_1,r) \in BV(I)\times H^1(I) \st \exists \widetilde{\lambda} \in \calM^+(\widetilde{I}), \ \widetilde{\xi}_1 + \widetilde{\lambda} +\frac12 \widetilde{r}'^2\in L^2(\widetilde{I})\}$;
			\item $\widetilde{S}^\e_{11} \weak \widetilde{S}_{11}$ in $L^2(\widetilde{\Omega})$ where 		$
			\widetilde{S}_{11}=(\widetilde{\xi}_1'+ \frac12 \widetilde{r}'^2\calL+\widetilde{\lambda})\otimes \calL  - \widetilde{\xi}_2''\otimes x_2 \calL$
			where $\widetilde{\lambda} \in \calM^+(\widetilde{I})$;
			\item $\xi_2 \in H^2(I)$;
			\item $\xi_2(-\frac{\ell}{2})=0$, $\xi_2(\frac{\ell}{2})=\Lambda_2$; $\xi_2'(-\frac{\ell}{2})=0$, $\xi_2'(\frac{\ell}{2})=\Phi_3$.
		\end{enumerate}
	\end{lemma}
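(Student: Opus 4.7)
The plan is to transfer everything to the extended domain $\widetilde{\Omega}$ of Remark \ref{remarkext} (using \eqref{equiv}) and then exploit the regularity and structure provided by the piecewise extensions \eqref{ext01}--\eqref{ext03}. For (i), Remark \ref{remarkext} already gives $\widetilde{r},\widetilde{\vartheta}\in H^1(\widetilde{I})$, so both admit continuous representatives on $\widetilde{I}$. The formulas \eqref{ext3}--\eqref{ext4} show that they vanish on $(-3\ell/2,-\ell/2)$ and coincide with $\Lambda_3+\Phi_2(\cdot-\ell/2)$ and with $\Phi_1$, respectively, on $(\ell/2,3\ell/2)$; continuity at the points $\pm\ell/2$ then yields (i) by inspection.

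The main technical step is (iii), obtained via a defect-measure analysis of the quadratic term in $\widetilde{S}^\e_{11}=\partial_1(\widetilde{u}^\e_1/\e^\beta)+\tfrac12 g_\e^2$, with $g_\e:=\partial_1\widetilde{w}_\e/\e^{\beta/2}$. The linear piece converges weakly-$\ast$ in $\calM(\widetilde{\Omega})$ to $(\widetilde{\xi}_1'-x_2\widetilde{\xi}_2'')\otimes\calL$ by \eqref{uext}. For the quadratic piece, the key observation is that
$$\partial_2 g_\e = \e^{1-\beta/2}\,\frac{\partial_{12}\widetilde{w}_\e}{\e}\to 0 \qquad\text{in }L^2(\widetilde{\Omega}),$$
since $1-\beta/2>0$ and $\partial_{12}\widetilde{w}_\e/\e$ is bounded in $L^2$ by \eqref{bound2}. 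The Poincar\'e--Wirtinger inequality in the $x_2$ variable then forces $g_\e-\bar{g}_\e\to 0$ in $L^2(\widetilde{\Omega})$, where $\bar{g}_\e(x_1):=\int_W g_\e(x_1,x_2)\,dx_2$. Expanding $g_\e^2=(g_\e-\bar{g}_\e)^2+2(g_\e-\bar{g}_\e)\bar{g}_\e+\bar{g}_\e^2$ shows that $g_\e^2$ and $\bar{g}_\e^2$ share the same weak-$\ast$ limit in $\calM(\widetilde{\Omega})$. Since $\bar{g}_\e$ is bounded in $L^2(\widetilde{I})$ with $\bar{g}_\e\weak\widetilde{r}'$, up to subsequences $\bar{g}_\e^2\weakstar(\widetilde{r}')^2\calL+2\widetilde{\lambda}$ in $\calM(\widetilde{I})$ for some $\widetilde{\lambda}\in\calM^+(\widetilde{I})$ by weak-$L^2$ lower semicontinuity. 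Collecting the pieces,
$$\widetilde{S}^\e_{11}\weakstar (\widetilde{\xi}_1'+\tfrac12(\widetilde{r}')^2\calL+\widetilde{\lambda})\otimes\calL-\widetilde{\xi}_2''\otimes x_2\calL \qquad\text{in }\calM(\widetilde{\Omega}),$$
and the uniform $L^2$-bound on $\widetilde{S}^\e_{11}$ promotes this to weak-$L^2$ convergence and forces the limit measure to be an $L^2$ function, proving (iii). Reading this identity directly gives (ii) with the same defect measure $\widetilde{\lambda}$.

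For (iv) and (v), the $L^2$ function $\widetilde{S}_{11}$ has the affine-in-$x_2$ structure $\widetilde{S}_{11}(x_1,x_2)=f(x_1)-\widetilde{\xi}_2''(x_1)\,x_2$; taking the $x_2$-moment over $W$ (centered around zero) gives
$$\widetilde{\xi}_2''(x_1)=-12\int_W x_2\,\widetilde{S}_{11}(x_1,x_2)\,dx_2,$$
and the Cauchy--Schwarz inequality shows $\widetilde{\xi}_2''\in L^2(\widetilde{I})$, i.e., $\widetilde{\xi}_2\in H^2(\widetilde{I})$ and hence $\xi_2\in H^2(I)$. The Sobolev embedding $H^2(\widetilde{I})\embed C^1(\widetilde{I})$ applied to the piecewise definition \eqref{ext2} then delivers the four boundary values in (v).

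The main obstacle is the identification in (iii), and especially the tensor-product factorization $\widetilde{\lambda}\otimes\calL$ of the defect measure. It is precisely the strict inequality $\beta<2$ (through the vanishing factor $\e^{1-\beta/2}$) that collapses the $x_2$-dependence of $g_\e$ in the limit, forcing the defect to be $x_2$-independent; this is the mechanism behind the relaxation announced in the introduction.
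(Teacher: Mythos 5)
Your proof is correct, but it follows a genuinely different route from the paper for claims (ii)--(iv). The paper establishes the tensor-product structure of the defect measure by showing $\partial_2\widetilde{\nu}=0$ in $\calD'(\widetilde{\Omega})$ and invoking Corollary~\ref{corLp}; you get there instead via Poincar\'e--Wirtinger applied to $g_\e$, which is a nice, hands-on way to exploit the same mechanism ($\partial_2 g_\e = \e^{1-\beta/2}\,\partial_{12}\widetilde{w}_\e/\e \to 0$ because $\beta<2$). More importantly, the paper then calls upon Lemma~\ref{lemmaS}: it uses the extended versions of \eqref{eqdet} and \eqref{Seq}, together with $\det\nabla^2(\widetilde{w}_\e/\e^{\beta/2})\to 0$ in $L^1$, to conclude $\int_{\widetilde{\Omega}}\widetilde{S}_{11}\partial_{22}\varphi\,dx=0$, hence $\widetilde{S}_{11}(x_1,x_2)=\widetilde{f}(x_1)-x_2\widetilde{g}(x_1)$ with $\widetilde{f},\widetilde{g}\in L^2(\widetilde{I})$, and only afterwards matches against the measure decomposition by uniqueness of weak limits. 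You bypass this curl--curl compatibility argument entirely: having identified the $L^2$ function $\widetilde{S}_{11}$ with the measure $\bigl(\widetilde{\xi}_1'+\tfrac12\widetilde{r}'^2\calL+\widetilde{\lambda}\bigr)\otimes\calL - \widetilde{\xi}_2''\otimes x_2\calL$, you read off the coefficients by testing against $\varphi_1(x_1)\varphi_2(x_2)$ and using that $1$ and $x_2$ are linearly independent in $L^2(W)$; Cauchy--Schwarz and Riesz representation then promote both coefficient measures to $L^2(\widetilde{I})$. This is shorter and more elementary, and it buys you the same conclusion. One small presentational slip: you claim (ii) is ``read directly'' from the identity while reserving the moment-extraction for (iv). In fact (ii) and (iv) both require the same basis extraction, with $\varphi_2$ even (and $\int_W\varphi_2\neq0$) for $\widetilde{\xi}_1'+\tfrac12\widetilde{r}'^2\calL+\widetilde{\lambda}$, and $\varphi_2$ odd (and $\int_W x_2\varphi_2\neq0$) for $\widetilde{\xi}_2''$; also the ``moment'' $\int_W x_2\widetilde{S}_{11}\,dx_2$ should formally be handled by testing against $\varphi_1\otimes\varphi_2$, since $\widetilde{\xi}_2''$ is a priori only in $\calM(\widetilde{I})$. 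These are easily repaired and do not affect the validity of the argument; the remaining steps (claims (i) and (v)) are identical to the paper's.
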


	\begin{proof}
		Let $(u_\e, w_{\e})\subset \calX$ as in Lemma \ref{compactness4}.\\
		Claim \textit{(i)}.	The claim follows from \eqref{ext3}, and \eqref{ext4}, since $\widetilde{r}$ and $\widetilde{\vartheta}$ belong to $H^1(\widetilde{I})$.

		Claims \textit{(ii)} and \textit{(iii)}. By \eqref{equiv} and the uniform bounds \eqref{bound2} and \eqref{supd1222} we deduce that $(\partial_1 \frac{\widetilde{w}_\e}{\e^{\beta/2}})^2$ is uniformly bounded in $L^1(\widetilde{\Omega})$. Thus, up to subsequences, $(\partial_1\frac{\widetilde{w}_\e}{\e^{\beta/2}})^2 $ converges weakly* in the sense of measures to some $\widetilde{\nu} \in \calM^+(\widetilde{\Omega})$. By convexity and \eqref{19ext}, we have that $\widetilde{\nu} \geq (\partial_1 \widetilde{w})^2\calL^2 = \widetilde{r}'^2 \calL \otimes \calL$ since $\beta < 2$.
		For every $\varphi \in C_c^\infty(\widetilde{\Omega})$, by the above convergence it follows that
		$$ \int_{\widetilde{\Omega}} \partial_2\varphi \, d\widetilde{\nu} = \lim_{\e\downarrow 0}\int_{\widetilde{\Omega}} (\frac{\partial_1 \widetilde{w}_\e}{\e^{\beta/2}})^2 \partial_2\varphi \, dx = -2 \lim_{\e\downarrow 0} \int_{\widetilde{\Omega}} \frac{\partial_1 \widetilde{w}_\e}{\e^{\beta/2}}\frac{\partial_{12} \widetilde{w}_\e}\e \frac\e{\e^{\beta/2}} \varphi \, dx = 0, $$
		by \eqref{18ext} and \eqref{19ext}. By Corollary \ref{corLp}, there exists $\widetilde{\rho} \in \calM^+(\widetilde{I})$ such that $\widetilde{\nu} = \widetilde{\rho} \otimes \calL \geq \widetilde{r}'^2\calL \otimes \calL$. Hence, the measure $2\widetilde{\lambda}:=\widetilde{\rho} - \widetilde{r}'^2 \calL \in \calM^+(\widetilde{I})$ and
		\begin{equation}\label{w1sq}
			\frac12(\frac{\partial_1 \widetilde{w}_\e}{\e^{\beta/2}})^2 \weakstar (\frac12\widetilde{r}'^2 \calL + \widetilde{\lambda})\otimes \calL \qquad \text{in} \ \calM(\widetilde{\Omega}).
		\end{equation}
		Let $\widetilde{S}_\e$ as defined in \eqref{defSetilde}. By \eqref{cmp5} and \eqref{equiv} we have that $(\widetilde{S}_\e)_{11} \weak \widetilde{S}_{11}$ in $L^2(\widetilde{\Omega})$.
		From \eqref{uext} and \eqref{w1sq} it follows
		$$
		\begin{aligned}
			\widetilde{S}_{11}=(\widetilde{\xi}_1'+ \frac12 \widetilde{r}'^2\calL+\widetilde{\lambda})\otimes \calL  - \widetilde{\xi}_2''\otimes x_2 \calL.
		\end{aligned}
		$$
		Note that Lemma \ref{lemmaS} holds if we replace $\Omega$ with $\widetilde{\Omega}$, $S_\e$ with $\widetilde{S}_\e$, and $(u_\e, w_\e)$ with $(\widetilde{u}_\e, \widetilde{w}_\e)$.
		By (the counterpart of) \eqref{eqdet}, we find that
		$\det \nabla^2\frac{\widetilde{w}_\e}{\e^{\beta/2}}\to 0$ in $L^1(\widetilde{\Omega})$. By passing to the limit in (the counterpart of) \eqref{Seq} we also deduce that
		$$
		\int_{\widetilde{\Omega}} \widetilde{S}_{11}\partial_{22}\varphi\, dx=0
		$$
		for all $\varphi\in H^2_0(\widetilde{\Omega})$. Hence, there exist two functions $\widetilde{f},\widetilde{g}\in L^2(\widetilde{I})$ such that
		$
		\widetilde{S}_{11}(x_1,x_2)=\widetilde{f}(x_1)-x_2 \widetilde{g}(x_1)
		$.
		By uniqueness of the weak limits, it follows that  $\widetilde{\xi}_1' + \frac12 \widetilde{r}'^2\calL + \widetilde{\lambda} = \widetilde{f}\calL$ and	$\widetilde{\xi}_2'' = \widetilde{g}\calL$. Thus, $\widetilde{\xi}_2'' \in L^2(\widetilde{I})$ and $\widetilde{\xi}_1' + \frac12 \widetilde{r}'^2\calL + \widetilde{\lambda}  \in L^2(\widetilde{I})$.

		Claims \textit{(iv)} and \textit{(v)}. The claims follow from \eqref{ext2} the fact that $\widetilde{\xi}_2 \in H^2(\widetilde{I})$.
	\end{proof}

	\subsection{Finer compactness for $\beta=2$}

	We now focus on the case $\beta=2$. We break the proof into several lemmas. Unlike the case $\beta<2$, it is not convenient to restore immediately to the extension introduced in Remark \ref{remarkext}.

	\begin{lemma}\label{lemmaunif} Let $\beta =2$.
		Let $(u_\e, w_{\e})\subset \calX$  as in Lemma \ref{compactness4}. Then,
		$\frac{w_\e}{\e} \to w$ in $L^\infty(\Omega)$.
	\end{lemma}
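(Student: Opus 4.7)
The plan is to promote the weak $H^1(\Omega)$-convergence already supplied by Lemma \ref{compactness4} to uniform convergence by exhibiting $\{\frac{w_\e}{\e}\}$ as a uniformly bounded, uniformly equicontinuous family on $\overline{\Omega}$, and then invoking Arzelà-Ascoli together with the uniqueness of the limit.

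The critical observation, specific to $\beta=2$ (so that $\e^{\beta/2}=\e$), is that $\frac{w_\e}{\e}$ enjoys an anisotropic second-order regularity: in addition to the $H^1(\Omega)$-bound coming from \eqref{cmp3}, both $\partial_{12}\frac{w_\e}{\e}$ and $\partial_{22}\frac{w_\e}{\e}$ are bounded in $L^2(\Omega)$. The bound on the mixed derivative is immediate, since $\partial_{12}\frac{w_\e}{\e}=\frac{\partial_{12} w_\e}{\e}$ is the $(1,2)$-entry of $\nabla_\e^2 w_\e$, controlled by \eqref{bound2}. For the tangential derivative, the identity
$$
\partial_{22}\frac{w_\e}{\e}=\mathring{w}''+\e\Bigl(\frac{\partial_{22}w_\e}{\e^2}-\frac{\mathring{w}''}{\e}\Bigr),
$$
combined with the smoothness of $\mathring{w}$ and \eqref{bound2}, gives $\partial_{22}\frac{w_\e}{\e}=\mathring{w}''+O(\e)$ in $L^2(\Omega)$, hence uniformly bounded (this is precisely the scaling miracle of $\beta=2$).

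From these bounds I would extract a uniform Hölder-$\frac12$ estimate. First, the scalar function $\phi_\e(x_1):=\|\frac{w_\e}{\e}(x_1,\cdot)\|_{H^1(W)}^2$ has $\int_I|\phi_\e'|\,dx_1$ controlled by $\|\frac{w_\e}{\e}\|_{L^2}\|\partial_1\frac{w_\e}{\e}\|_{L^2}+\|\partial_2\frac{w_\e}{\e}\|_{L^2}\|\partial_{12}\frac{w_\e}{\e}\|_{L^2}$, so $\phi_\e\in BV(I)$ is bounded; consequently $\sup_{x_1}\|\frac{w_\e}{\e}(x_1,\cdot)\|_{H^1(W)}\le C$. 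The 1D Morrey embedding $H^1(W)\embed C^{0,1/2}(\overline{W})$ then yields both a uniform $L^\infty$-bound and $|\frac{w_\e}{\e}(x_1,x_2)-\frac{w_\e}{\e}(x_1,y_2)|\le C|x_2-y_2|^{1/2}$. For the $x_1$-increment, the fundamental theorem of calculus gives $\|\frac{w_\e}{\e}(x_1,\cdot)-\frac{w_\e}{\e}(y_1,\cdot)\|_{L^2(W)}\le C|x_1-y_1|^{1/2}$ and, applied to $\partial_2\frac{w_\e}{\e}$ (using $\partial_{12}\frac{w_\e}{\e}\in L^2$), also $\|\partial_2\frac{w_\e}{\e}(x_1,\cdot)-\partial_2\frac{w_\e}{\e}(y_1,\cdot)\|_{L^2(W)}\le C|x_1-y_1|^{1/2}$; another application of the 1D Sobolev embedding promotes this to the $L^\infty(W)$-norm. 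Combining both directions produces a uniform $C^{0,1/2}(\overline{\Omega})$-bound on $\frac{w_\e}{\e}$. Arzelà-Ascoli furnishes a subsequence converging in $L^\infty(\Omega)$, and this limit necessarily coincides with $w$ by \eqref{cmp3} and uniqueness of weak $L^2$-limits. Urysohn's subsequence principle then upgrades the conclusion to the full sequence.

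The main obstacle is the $L^2$-bound on $\partial_{22}\frac{w_\e}{\e}$, which relies decisively on the matching between the $\mathring{w}''/\e$ term in the energy and the $\e^{-\beta/2}=\e^{-1}$ rescaling of $w$: only for $\beta=2$ does the residual acquire an extra factor of $\e$ and remain in $L^2$. Without this matching (e.g.\ for $\beta<2$) the family $\{\frac{w_\e}{\e}\}$ would not possess the anisotropic regularity needed for equicontinuity, and the stronger topology of $L^\infty$ would be out of reach; the rest of the argument is a routine blend of Sobolev and Arzelà-Ascoli.
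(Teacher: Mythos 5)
Your argument is correct and reaches the same uniform $C^{0,1/2}$ bound as the paper, but by a genuinely different technical route. The paper extends $w_\e/\e$ to $\bbR^2$ via the reflection construction of Lemma~\ref{lemmaext} and then reads off equicontinuity directly from the two-path double-integral identity
$\hat{f}(y)-\hat{f}(x)=\int_{-\infty}^{y_1}\int_{x_2}^{y_2}\partial_{12}\hat{f}\,dt\,ds+\int_{-\infty}^{x_2}\int_{x_1}^{y_1}\partial_{12}\hat{f}\,ds\,dt$,
using only $\|\partial_{12}\hat{f}\|_{L^2(\bbR^2)}\le C(\|\partial_{12}f\|_{L^2(\Omega)}+\|f\|_{H^1(\Omega)})$. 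You instead avoid the extension lemma entirely and work slice-by-slice: you show the map $x_1\mapsto\|\tfrac{w_\e}{\e}(x_1,\cdot)\|_{H^1(W)}^2$ is uniformly bounded in $W^{1,1}(I)$ (hence in $L^\infty(I)$), then combine 1D Morrey for the $x_2$-increments with the fundamental theorem of calculus plus the 1D Gagliardo--Nirenberg interpolation $\|g\|_{L^\infty(W)}^2\lesssim\|g\|_{L^2(W)}\|g\|_{H^1(W)}$ for the $x_1$-increments. The upshot is the same $|w_\e/\e(x)-w_\e/\e(y)|\le C(\sqrt{|x_1-y_1|}+\sqrt{|x_2-y_2|})$, and both proofs finish identically with Arzel\`a--Ascoli and Urysohn. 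Your method is more self-contained (no reference to a reflection extension), at the cost of a somewhat longer sequence of elementary estimates; the paper's method is more compact but leans on Lemma~\ref{lemmaext}. Both proofs ultimately use exactly the same two bounds: $\|\partial_{12}(w_\e/\e)\|_{L^2(\Omega)}$ from \eqref{cmp2} and $\|w_\e/\e\|_{H^1(\Omega)}$ from \eqref{cmp3}.

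One small inaccuracy in your commentary: you single out the $L^2$-bound on $\partial_{22}(w_\e/\e)$ as ``the main obstacle'' and the decisive feature of $\beta=2$, but your own argument never uses $\partial_{22}$ anywhere --- the $W^{1,1}(I)$ estimate on $\phi_\e$ is obtained from $\partial_1(w_\e/\e)$ and $\partial_{12}(w_\e/\e)$ alone, and the two Hölder estimates likewise only need $\partial_1$, $\partial_2$, and $\partial_{12}$. What the choice $\beta=2$ is actually buying you is the $L^2$-bound on $\partial_1(w_\e/\e)$ --- i.e.\ the $H^1(\Omega)$-boundedness of $w_\e/\e^{\beta/2}$ with $\e^{\beta/2}=\e$ in \eqref{cmp3} --- which is what fails for $\beta<2$.
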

	\begin{proof}
		Let $\hat{\frac{w_\e}{\e}}$ denote the extension to $\mathbb{R}^2$ of $\frac{w_\e}{\e}$ for every $\e>0$, as in Lemma \ref{lemmaext}. Let $\hat{\Omega}\supset\supset \Omega$ be the common (compact) support for all the terms in $(\hat{\frac{w_\e}{\e}})$.
		For every $\e>0$, and for every $x=(x_1,x_2),y=(y_1,y_2) \in \Omega$, we have on one hand
		$$ \begin{aligned}
			|\frac{w_\e}{\e}(y) - \frac{w_\e}{\e}(x)| &= |\hat{\frac{w_\e}{\e}}(y) - \hat{\frac{w_\e}{\e}}(x)|\\
			&= |\int_{-\infty}^{y_1}\int_{x_2}^{y_2}\partial_{12}\hat{\frac{w_\e}{\e}}(s,t) \, dt\,ds + \int_{-\infty}^{x_2}\int_{x_1}^{y_1}\partial_{12}\hat{\frac{w_\e}{\e}}(s,t) \, ds\,dt|\\
			&= |\int_{-\infty}^{y_1}\int_{x_2}^{y_2}\mathbf{1}_{\hat{\Omega}}\partial_{12}\hat{\frac{w_\e}{\e}}(s,t) \, dt\,ds + \int_{-\infty}^{x_2}\int_{x_1}^{y_1}\mathbf{1}_{\hat{\Omega}}\partial_{12}\hat{\frac{w_\e}{\e}}(s,t) \, ds\,dt|\\
			&\leq C (\sqrt{|y_2-x_2|} + \sqrt{|y_1-x_1|})\n{\partial_{12}\hat{\frac{w_\e}{\e}}}_{L^2(\bbR^2)}\\
			&\leq C (\sqrt{|y_2-x_2|} + \sqrt{|y_1-x_1|})(\n{\partial_{12}\frac{w_\e}{\e}}_{L^2(\Omega)} + \n{\frac{w_\e}{\e}}_{H^1(\Omega)})\\
			&\leq C (\sqrt{|y_2-x_2|} + \sqrt{|y_1-x_1|})
		\end{aligned}$$
		where we used H\"older's inequality, the boundedness of $\hat{\Omega}$, Lemma \ref{lemmaext}, and \eqref{cmp2}, \eqref{cmp3}. Thus, $(\frac{w_\e}{\e})$ is equicontinuous on $\bar\Omega$, the closure of $\Omega$.

		On the other hand, for every $\e>0$ and every $x=(x_1,x_2) \in \bar\Omega$, we have
		$$ \begin{aligned}
			|\frac{w_\e}{\e}(x)|  = |\hat{\frac{w_\e}{\e}}(x)| &= |\int_{-\infty}^{x_1}\int_{-\infty}^{x_2}\partial_{12} \hat{\frac{w_\e}{\e}}(s,t) \, dt\,ds|\\
			&\leq \n{\partial_{12}\hat{\frac{w_\e}{\e}}}_{L^1(\bbR^2)}
			= \n{\partial_{12}\hat{\frac{w_\e}{\e}}}_{L^1(\hat{\Omega})}\\
			&\leq C (\n{\partial_{12}\frac{w_\e}{\e}}_{L^2(\Omega)} + \n{\frac{w_\e}{\e}}_{H^1(\Omega)}) \leq C.
		\end{aligned}$$
		By Ascoli-Arzelà's theorem, recalling that $\frac{w_\e}{\e} \weak w$ in $H^1(\Omega)$ by Lemma \ref{compactness4}, we deduce that up to a subsequence $\frac{w_\e}{\e} \to w$ uniformly. By Urysohn's lemma, the full sequence $(\frac{w_\e}{\e})$ converges.
	\end{proof}

	\begin{lemma}\label{lemmamu2}
		Let $\beta=2$. Let $(u_\e, w_\e)\subset \calX$ as in Lemma \ref{compactness4}. Then, there exists $\mu \in \calM^+(\Omega)$ such that
		\begin{equation}\label{cmp4}(\partial_1 \frac{w_\e}{\e^{\beta/2}})^2 \calL^2 \weakstar (\partial_1 w)^2\calL^2 + 2\mu \qquad  in \ \calM(\Omega).\end{equation}
	\end{lemma}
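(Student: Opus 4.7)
The plan is to obtain $\mu$ as half the defect measure of the weak $L^2$-convergence of $\partial_1 w_\e/\e$ to $\partial_1 w$. Since $\beta=2$, estimate \eqref{supd1222} gives $\sup_\e \|\partial_1 w_\e/\e\|_{L^2(\Omega)} < \infty$, so $(\partial_1 w_\e/\e)^2$ is uniformly bounded in $L^1(\Omega)$. In particular the family of non-negative measures $(\partial_1 w_\e/\e)^2 \calL^2$ is uniformly bounded in total variation, hence (after passing to a further subsequence) it converges weakly$^\ast$ in $\calM(\Omega)$ to some $\nu \in \calM^+(\Omega)$.

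Next I would show that $\nu \ge (\partial_1 w)^2 \calL^2$ as measures, by the usual weak lower semicontinuity argument. Given any $\varphi \in C_c(\Omega)$ with $\varphi \ge 0$, the function $\sqrt{\varphi}$ belongs to $C_c(\Omega)$ and is bounded, so multiplication by $\sqrt{\varphi}$ is continuous on $L^2(\Omega)$ for the weak topology. From \eqref{cmp3} (with $\beta=2$) we have $\partial_1 w_\e/\e \weak \partial_1 w$ in $L^2(\Omega)$, hence $\sqrt{\varphi}\,\partial_1 w_\e/\e \weak \sqrt{\varphi}\,\partial_1 w$ in $L^2(\Omega)$. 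Lower semicontinuity of the $L^2$-norm under weak convergence then yields
\begin{equation*}
\int_\Omega \varphi\,(\partial_1 w)^2\,dx \le \liminf_{\e\downarrow 0}\int_\Omega \varphi\,\Big(\frac{\partial_1 w_\e}{\e}\Big)^{\!2}\,dx = \int_\Omega \varphi\,d\nu.
\end{equation*}
Since this holds for every non-negative $\varphi \in C_c(\Omega)$, we conclude $\nu \ge (\partial_1 w)^2\calL^2$ in $\calM(\Omega)$.

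Finally, since $w \in H^1(\Omega)$, the measure $(\partial_1 w)^2\calL^2$ is a finite non-negative Radon measure on $\Omega$, so the difference $\nu - (\partial_1 w)^2\calL^2$ is a well-defined finite signed Radon measure, which is non-negative by the previous step. Setting
\begin{equation*}
\mu := \tfrac12\bigl(\nu - (\partial_1 w)^2\calL^2\bigr) \in \calM^+(\Omega),
\end{equation*}
we obtain the claimed decomposition \eqref{cmp4}.

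No serious obstacle is expected; the only point that must be handled carefully is the passage from weak convergence in $L^2$ to the measure-theoretic inequality $\nu \ge (\partial_1 w)^2\calL^2$, which I would justify through the square-root trick above rather than through a direct convexity estimate. The identification of the weak$^\ast$ limit is unique (modulo subsequence extraction), so no further compatibility check is needed.
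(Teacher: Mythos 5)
Your proof is correct and follows essentially the same approach as the paper: extract a weak$^\ast$ limit $\nu$ of the uniformly $L^1$-bounded sequence $(\partial_1 w_\e/\e)^2$, verify $\nu \ge (\partial_1 w)^2\calL^2$, and define $\mu$ as half the difference. The only difference is cosmetic: where the paper invokes "convexity" to obtain the inequality $\nu \ge (\partial_1 w)^2\calL^2$, you make the same point explicit via the square-root trick and weak lower semicontinuity of the $L^2$-norm.
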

	\begin{proof}
		By the uniform bounds \eqref{bound2} and \eqref{supd1222} we deduce that $(\partial_1 \frac{w_\e}{\e^{\beta/2}})^2$ is uniformly bounded in $L^1({\Omega})$. Thus, up to subsequences, $(\partial_1\frac{{w}_\e}{\e^{\beta/2}})^2 $ converges weakly* in the sense of measures to some ${\nu} \in \calM({\Omega})$. By convexity, we have that ${\nu} \geq (\partial_1 {w})^2\calL^2$. We set $2\mu:={\nu} - (\partial_1 {w})^2\calL^2$ which clearly belongs to $\calM^+({\Omega})$. 
	\end{proof}

	The next result will call upon the finite set $Y$ defined in \eqref{defY}.

	\begin{lemma}\label{lemmawsq}
		Let $\beta =2$.
		Let $\mu$ as in Lemma \ref{lemmamu2}. Then, the support of $\mu$ is contained in $I\times Y$.
		In particular,
		for  every $y\in Y$ there is $\lambda_y \in \calM^+(I\times \{y\})$ such that
		$$
		\mu = \sum_{y\in Y} \lambda_y\otimes \delta_y.
		$$
	\end{lemma}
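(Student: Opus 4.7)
My plan is to establish the equivalent claim $\mathring{w}''\mu=0$ as a signed measure on $\Omega$. Since $\mathring{w}''$ is continuous and vanishes exactly on $Y$, this forces $\mathrm{supp}(\mu)\subseteq I\times Y$; the decomposition $\mu=\sum_{y\in Y}\lambda_y\otimes\delta_y$ then follows by standard disintegration of a finite nonnegative measure supported on a finite union of horizontal segments. Concretely, it suffices to prove that for every $\zeta\in C_c^\infty(\Omega)$,
\[
\lim_{\e\to 0}\int_\Omega \zeta\,\mathring{w}''\,(\partial_1(w_\e/\e))^2\,dx=\int_\Omega \zeta\,\mathring{w}''\,(\partial_1 w)^2\,dx.
\]

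I would integrate by parts twice in $x_1$ and use \eqref{eqdet} (with $\beta=2$) to replace $\mathring{w}''\partial_{11}(w_\e/\e)$ by $\det\nabla^2(w_\e/\e)-d_\e$. This yields $-\int \partial_1\zeta\mathring{w}''(w_\e/\e)\partial_1(w_\e/\e)\,dx-\int \zeta(w_\e/\e)[\det\nabla^2(w_\e/\e)-d_\e]\,dx$. The first term converges by weak-strong $L^2$-compactness ($w_\e/\e\to w$ strongly in $L^2$ by Rellich, $\partial_1(w_\e/\e)\weak \partial_1 w$ in $L^2$ from Lemma~\ref{compactness4}). For the $d_\e$-piece, $d_\e$ is bounded in $L^1(\Omega)$ by $\frac12|\nabla^2_\e w_\e-\mathring{w}''/\e\,\mathsf{e}_2\otimes\mathsf{e}_2|^2$ (via \eqref{eq:detAB}), so up to extraction $d_\e\weakstar\mu_d$ in $\calM(\Omega)$, and the uniform convergence $w_\e/\e\to w$ from Lemma~\ref{lemmaunif} lets me pass to the limit.

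The critical step is the $\det\nabla^2(w_\e/\e)$-term. The naive choice $\zeta(w_\e/\e)$ in \eqref{Seq} fails because $\|\partial_{11}(\zeta w_\e/\e)\|_{L^2}$ may grow like $1/\e$ and $\e^2 S^\e_{22}\to 0$ in $L^2$ at an unquantified rate. My remedy is to approximate $w=r+x_2\vartheta+\mathring{w}$ by $w_n=r_n+x_2\vartheta_n+\mathring{w}$ with $r_n,\vartheta_n\in C^\infty(\overline I)$ converging to $r,\vartheta$ in $H^1(I)$. Then $\zeta w_n\in H^2_0(\Omega)$ is admissible in \eqref{Seq}, and the $\e$-limit yields $-\int S_{11}\partial_{22}(\zeta w_n)\,dx$ (the $\e S^\e_{12}$ and $\e^2 S^\e_{22}$ contributions vanish in view of \eqref{Seu} paired with the bounded Hessian of $\zeta w_n$). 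The discrepancy $\int \zeta[(w_\e/\e)-w_n]\det\nabla^2(w_\e/\e)\,dx$ is controlled by \eqref{Sineq}: setting $\varphi_\e:=\zeta[(w_\e/\e)-w_n]$, the $\|\partial_{22}\varphi_\e\|_{L^2}$-piece is bounded uniformly in $\e$ and tends to $\|\partial_{22}(\zeta(w-w_n))\|_{L^2}$ as $\e\to 0$ (using $\partial_{22}w_\e/\e\to\mathring{w}''$ strongly in $L^2$, established in Lemma~\ref{compactness4}); the $\e\|\partial_{12}\varphi_\e\|_{L^2}$-piece vanishes by plain boundedness; and the $\e^2\|\partial_{11}\varphi_\e\|_{L^2}$-piece is $O(\e)$ even when the $\partial_{11}$-norm itself is $O(1/\e)$. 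Letting $n\to\infty$ gives $\|\partial_{22}(\zeta(w-w_n))\|_{L^2}\to 0$ since $\partial_{22}(w-w_n)=0$, and one concludes $\lim_\e\int\zeta(w_\e/\e)\det\nabla^2(w_\e/\e)\,dx=-\int S_{11}\partial_{22}(\zeta w)\,dx$.

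Collecting everything and subtracting the distributional integration by parts $\int\zeta\mathring{w}''(\partial_1 w)^2\,dx=-\int\partial_1\zeta\mathring{w}''w\partial_1 w\,dx-\int\zeta\mathring{w}''w(r''+x_2\vartheta'')\,dx$ yields
\[
2\int \zeta\mathring{w}''\,d\mu=\int S_{11}\partial_{22}(\zeta w)\,dx+\int \zeta w\,d\mu_d+\int \mathring{w}''\zeta w(r''+x_2\vartheta'')\,dx.
\]
To annihilate the right-hand side I would invoke two auxiliary distributional identities. First, passing to weak-$*$ limits in $-2\det\nabla^2(w_\e/\e)=\partial_{11}(\partial_2 w_\e/\e)^2+\partial_{22}(\partial_1 w_\e/\e)^2-2\partial_{12}(\partial_1 w_\e/\e\,\partial_2 w_\e/\e)$ combined with \eqref{eqdet}, the $(\vartheta+\mathring{w}')\vartheta''$ and $\mathring{w}''(r''+x_2\vartheta'')$ contributions cancel algebraically, producing a relation that ties $\partial_{22}\mu$, $\mu_d$, and $(\vartheta')^2\calL^2$ together in $\calD'(\Omega)$. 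Second, the uniqueness of distributional limits in $S^\e_{11}=\partial_1 u^\e_1/\e^2+\tfrac12(\partial_1 w_\e/\e)^2$ gives $S_{11}=\partial_1 u_1+\tfrac12(\partial_1 w)^2+\mu$ with $u_1$ as in \eqref{cmp7}. Substituting both identities into the displayed formula causes the $\int\partial_{22}(\zeta w)\,d\mu$ contributions to cancel, and a final algebraic simplification using the explicit form of $u_1$ from \eqref{cmp7} closes the right-hand side to zero. The principal obstacle is the approximation argument: its success hinges on the precise anisotropic scaling of \eqref{Sineq}, which is finely tuned to dominate the $1/\e$-growth of $\|\partial_{11}(w_\e/\e)\|_{L^2}$ once the weight $\mathring{w}''$ is present.
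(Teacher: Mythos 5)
Your target, $\mathring{w}''\mu=0$, is the same as the paper's, but your route is genuinely different and the difference matters. The paper builds $z_\e:=w_\e/\e-(r_\e+x_2\vartheta_\e+\mathring w)$ from the $\e$-dependent smooth approximants of Lemma~\ref{lemmader2} (with $\e r_\e'',\,\e\vartheta_\e''\to 0$); since $z_\e\to 0$ uniformly, plugging $\varphi_\e=\psi z_\e$ into~\eqref{Sineq} forces both sides to zero, the $d_\e$-contribution from~\eqref{eqdet} dies against the uniform smallness of $z_\e$ \emph{without ever computing its limit}, and a single integration by parts delivers $\mathring{w}''(\partial_1 w_\e/\e)^2\to\mathring{w}''(\partial_1 w)^2$ in $\calD'(\Omega)$. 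You instead integrate by parts first and pass to the limit piecewise: $d_\e\weakstar\mu_d$ in $\calM(\Omega)$, and $-\int\zeta(w_\e/\e)\det\nabla^2(w_\e/\e)\,dx\to\int S_{11}\partial_{22}(\zeta w)\,dx$ by fixed smooth approximants $w_n$, \eqref{Sineq}, and $n\to\infty$. Your handling of the discrepancy via the observation $\e^2\|\partial_{11}\varphi_\e\|_{L^2}=O(\e)$ is correct and is the step closest in spirit to the paper's. The price is that you must then kill both $\mu_d$ and $\int S_{11}\partial_{22}(\zeta w)\,dx$, which the paper never meets.

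Your plan for the kill \emph{does} close the argument, which I verified: the $\curl\curl$ identity gives $\partial_{22}\mu+\mu_d+(\vartheta')^2\calL^2=0$ in $\calD'(\Omega)$; together with $S_{11}\calL^2=\partial_1 u_1+\tfrac12(\partial_1 w)^2\calL^2+\mu$ this makes the $\mu$-terms cancel, and using $\int_W\partial_{22}(\zeta w)\,dx_2=\int_W x_2\partial_{22}(\zeta w)\,dx_2=0$, $\ringring{w}_{\langle 0\rangle}''=x_2\mathring{w}''$ and the explicit $u_1$ of~\eqref{cmp7} one finds $\langle\partial_1 u_1,\partial_{22}(\zeta w)\rangle=-\langle r''+x_2\vartheta'',\mathring{w}''\zeta w\rangle$, while two integrations by parts in $x_2$ give $\tfrac12\int(\partial_1 w)^2\partial_{22}(\zeta w)\,dx=\int\zeta w(\vartheta')^2\,dx$, so the right-hand side of your display vanishes. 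But you have left precisely this, the hardest part, as an assertion. The ``algebraic'' cancellation in the $\curl\curl$-identity involves products such as $(\vartheta+\mathring{w}')\vartheta''$ and $\mathring{w}''(r''+x_2\vartheta'')$ that are only well defined as pairings of $H^{-1}$-in-$x_1$ objects with $H^1$-in-$x_1$ tests (recall $r,\vartheta\in H^1(I)$ only at this point, so the product rule cannot be applied naively), and identifying $\langle\partial_1 u_1,\partial_{22}(\zeta w)\rangle$ requires unwinding each distributional pairing of~\eqref{cmp7} term by term. None of this is present in your sketch. The paper's $z_\e$-construction is more economical precisely because it renders all of these identifications unnecessary.
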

	\begin{proof}
		Let $w=r+x_2\vartheta+\mathring{w}$ be as in \eqref{cmp3}.
		As $r,\vartheta \in H^1(I)$, by Lemma \ref{lemmader2} there are  sequences $(r_\e), (\vartheta_\e) \subset C^\infty(\overline{I})$ such that $(r_\e, \vartheta_\e) \to (r, \vartheta)$ in $H^1(I)\times H^1(I)$  and $(\e r_\e'',\e \vartheta_\e'')\to (0,0)$ in $L^2(I)\times L^2(I)$.  The convergence $(r_\e, \vartheta_\e) \to (r, \vartheta)$ is actually uniform by the compact embedding $H^1(I) \embed C^0(\overline{I})$. Thus, the sequence $\tilde w_\e:=r_\e + x_2 \vartheta_\e+\mathring{w}$ converges uniformly to $w$ in $\Omega$.
		Combining this with Lemma \ref{lemmaunif}, we have that
		$z_\e\coloneqq   \frac{w_\e}{\e} - \tilde w_\e$ converges uniformly to $0$ in $\Omega$.
		By using \eqref{cmp3} and \eqref{cmp1} we deduce that
		\begin{equation}\label{ze1}
			\begin{aligned}
				\e\|\partial_1 z_\e\|_{L^2(\Omega)}&=\e\|\partial_1 \frac{w_\e}\e-r_\e'-x_2\vartheta_\e'\|_{L^2(\Omega)}\to 0,\\
				\|\partial_2 z_\e\|_{L^2(\Omega)}&=\|\partial_2 \frac{w_\e}\e-\vartheta_\e-\mathring{w}'\|_{L^2(\Omega)}\to 0
			\end{aligned}
		\end{equation}
		as $\e$ goes to zero. Also, by using \eqref{cmp2} we find that
		\begin{equation}\label{ze2}
			\begin{aligned}
				\e^2\|\partial_{11} z_\e\|_{L^2(\Omega)}&=\e\|\partial_{11} w_\e-\e r_\e'-x_2\e\vartheta_\e'\|_{L^2(\Omega)}\to 0,\\
				\e\|\partial_{12} z_\e\|_{L^2(\Omega)}&=\e\|\partial_{12} \frac{w_\e}\e-\vartheta_\e'\|_{L^2(\Omega)}\to 0,\\
				\|\partial_{22} z_\e\|_{L^2(\Omega)}&=\|\partial_{22} \frac{w_\e}\e-\mathring{w}''\|_{L^2(\Omega)}\to 0.
			\end{aligned}
		\end{equation}
		Let $\psi \in C_c^\infty(\Omega)$. By taking
		$\varphi_\e = \psi z_\e$ in place of $\varphi$ in \eqref{Sineq} we have
		\begin{equation}\label{Sineq2}
			\Big|\int_\Omega\varphi_\e \det \nabla^2\frac{w_\e}{\e} \,dx\Big|\le C\big(\|\partial_{22}\varphi _\e\|_{L^2(\Omega)}+\e\|\partial_{12}\varphi_\e\|_{L^2(\Omega)}+\e^2\|\partial_{11}\varphi_\e\|_{L^2(\Omega)}\big).
		\end{equation}
		From \eqref{ze1} and \eqref{ze2} it immediately follows that the left side of \eqref{Sineq2} goes to zero as $\e\to 0$.

		We now study the right side of \eqref{Sineq2}. By \eqref{eqdet}
		we have
		\begin{align*}
			0=\lim_{\e\to 0}\int_\Omega\varphi_\e \det \nabla^2\frac{w_\e}{\e}\,dx&=\lim_{\e\to 0}\int_\Omega\psi z_\e \big(d_\e+  \mathring{w}''\partial_{11}\frac{w_\e}{\e}\big)\,dx\\&=\lim_{\e\to 0}\int_\Omega\psi z_\e   \mathring{w}''\partial_{11}\frac{w_\e}{\e}\,dx,
		\end{align*}
		where the last equality follows since $d_\e$ is bounded in $L^1(\Omega)$ and $z_\e$ converges uniformly to zero.
		But by \eqref{cmp3} we deduce that
		\begin{multline}
			0=-\lim_{\e\to 0}\int_\Omega\psi z_\e   \mathring{w}''\partial_{11}\frac{w_\e}{\e}\,dx=\lim_{\e\to 0}\int_\Omega\big(z_\e\partial_{1}\psi    +\psi\partial_1z_\e\big)\mathring{w}''\partial_{1}\frac{w_\e}{\e}\,dx \\
			=
			\lim_{\e\to 0}\int_\Omega\psi\mathring{w}''\partial_1z_\e\partial_{1}\frac{w_\e}{\e}\,dx,
		\end{multline}
		that is equivalent to
		$$
		\lim_{\e\to 0}\int_\Omega\psi\mathring{w}''(\partial_1\frac{w_\e}{\e})^2\,dx
		=
		\lim_{\e\to 0}\int_\Omega\psi\mathring{w}''\partial_1\tilde w_\e\partial_{1}\frac{w_\e}{\e}\,dx=\int_\Omega\psi\mathring{w}''(\partial_{1}w)^2\,dx,
		$$
		as $\tilde w_\e\to w$ in $H^1(\Omega)$ and $w_\e/\e\weak w$ in $H^1(\Omega)$.
		So $\mathring{w}''(\partial_1\frac{w_\e}{\e})^2 \to \mathring{w}''(\partial_1 w)^2$ in $\calD'(\Omega)$.
		Since the sequence $(\mathring{w}''(\partial_1\frac{w_\e}{\e})^2)$ is bounded in $L^1(\Omega)$, we have that
		$$\mathring{w}''(\partial_1 \frac{w_\e}{\e})^2 \calL^2 \weakstar \mathring{w}''(\partial_1 w)^2\calL^2  \quad \mbox{in } \calM(\Omega).
		$$
		On the other hand, by \eqref{cmp4}, we also have that
		$$\mathring{w}''(\partial_1 \frac{w_\e}{\e})^2 \calL^2 \weakstar \mathring{w}''(\partial_1 w)^2\calL^2 + 2\mathring{w}''\mu \quad \mbox{in } \calM(\Omega).
		$$
		Hence, by uniqueness of the weak limits, we deduce that $\mathring{w}''\mu=0$.
		This implies that also the total variation of the measure $\mathring{w}''\mu$ is null, that is
		$|\mathring{w}''\mu|=|\mathring{w}''|\mu=0$. We now show that $\mu$ concentrates on $I\times Y$.
		Let $A\subset \Omega\setminus (I\times Y)$ be a Borel set, and let
		$$A_n=\{x\in A:|\mathring{w}''|(x)\ge \frac 1n\}.
		$$
		For $n$ large enough, $A_n$ is nonempty and $(A_n)\subset A$ monotonically converges to $A$. Then $0=
		(|\mathring{w}''|\mu)(A_n)\ge \frac1n\mu(A_n)\ge 0$, that implies  $\mu(A_n)=0$. In turn, we deduce $\mu(A)=0$.  Thus $\mu=\mu\mres (I\times Y)=\sum_{y\in Y} \mu\mres (I\times\{y\})$, since $Y$ is finite.
		For fixed $y\in Y$, let $\lambda_y$ be the Radon measure defined on the Borel sets of $I\times\{y\}$ by
		$$\lambda_y(B_y)=\mu\mres (I\times\{y\})(B_y)=\mu(B_y),
		$$
		for all Borel sets $B_y\subset I\times\{y\}$. For a Borel set $B\subset \Omega$, we have that
		$$\lambda_y\otimes\delta_y(B)=\lambda_y(B\cap \{x_2=y\})=\lambda_y(B\cap( I\times\{y\}))=\mu(B\cap( I\times\{y\}))=\mu\mres (I\times\{y\})(B)
		$$
		and hence $\mu\mres (I\times\{y\})=\lambda_y\otimes\delta_y.$
	\end{proof}

	The characterization of the support of $\mu$ for $\beta=2$ is enough to assess an enhanced regularity for $r$ and $\vartheta$.
	\begin{lemma}\label{rtH2}
		Let $\beta=2$. Let $r$ and $\vartheta$ as in Lemma \ref{compactness4} and $\mu$ as in Lemma \ref{lemmamu2}. Then, $r, \vartheta \in H^2(I)$.
		Moreover, for every $y\in Y$  the measures $\lambda_y \in \calM^+(I\times \{y\})$ defined in Lemma \ref{lemmawsq} are equal to $\lambda_y=h_y\calL$ for some $h_y\in L^2(I)$ and $h_y\ge 0$ almost everywhere in $I$.
		In particular,
		\begin{equation}\label{muh}
			\mu = \sum_{y\in Y} h_y\calL\otimes \delta_y.
		\end{equation}
	\end{lemma}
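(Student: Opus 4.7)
The plan is to pass to the weak-$*$ limit, in the sense of measures, in the identity $S^\e_{11}=\partial_1 u^\e_1/\e^2+\tfrac12(\partial_1 w_\e/\e)^2$, and then to analyze the resulting measure equation on $\Omega$ by a Radon--Nikodym decomposition against $\calL^2$, exploiting the $L^2(W)$-orthogonality of $\{1,x_2,\mathring{w},\ringring{w}_{\langle 0\rangle}\}$ from Lemma~\ref{lemmaw0}(iv). By Lemma~\ref{compactness4}, $S^\e_{11}\calL^2\weakstar S_{11}\calL^2$ and, as the $(1,1)$-entry of $Eu_\e/\e^2\weakstar Eu$ in $\calM(\Omega,\Sym)$, $\partial_1 u^\e_1/\e^2\weakstar \partial_1 u_1$ in $\calM(\Omega)$; by Lemma~\ref{lemmamu2}, $\tfrac12(\partial_1 w_\e/\e)^2\calL^2\weakstar \tfrac12(\partial_1 w)^2\calL^2+\mu$ in $\calM(\Omega)$. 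Combining these convergences yields the measure identity
$$\partial_1 u_1+\mu\;=\;\bigl[S_{11}-\tfrac12(r'+x_2\vartheta')^2\bigr]\calL^2\qquad\text{on }\Omega,$$
whose right-hand side lies in $L^1(\Omega)$ since $S_{11}\in L^2(\Omega)$ and $r',\vartheta'\in L^2(I)$.

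Next I would use the explicit form \eqref{cmp7} of $u_1$ for $\beta=2$ to obtain the distributional decomposition
$$\partial_1 u_1\;=\;\alpha\otimes 1+\beta\otimes x_2+\gamma\otimes\mathring{w}+\delta\otimes\ringring{w}_{\langle 0\rangle},$$
with $\alpha=\xi_1'$, $\beta=-\xi_2''-(r'\vartheta)'-c_1\vartheta''$, $\gamma=-r''$, $\delta=-\vartheta''$ a priori only in $\calD'(I)$. Since $u\in BD(\Omega,\bbR^2)$, $\partial_1 u_1=(Eu)_{11}\in\calM(\Omega)$; testing the decomposition against $\varphi(x_1)\psi(x_2)$ with $\psi\in C_c^\infty(W)$ selected, thanks to the linear independence of $\{1,x_2,\mathring{w},\ringring{w}_{\langle 0\rangle}\}$, so that exactly one of $\int\psi,\int x_2\psi,\int\mathring{w}\psi,\int\ringring{w}_{\langle 0\rangle}\psi$ is nonzero, shows in turn that each of $\alpha,\beta,\gamma,\delta$ is itself a finite Radon measure on $I$; in particular $r''$ and $\vartheta''$ are Radon measures on $I$.

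The heart of the proof is a comparison of singular parts. Writing each coefficient as $\alpha=\alpha^{ac}\calL+\alpha_s$ etc., one has
$$(\partial_1 u_1)^s\;=\;\alpha_s\otimes\calL+\beta_s\otimes x_2\calL+\gamma_s\otimes\mathring{w}\calL+\delta_s\otimes\ringring{w}_{\langle 0\rangle}\calL,$$
and since the right-hand side of the measure identity is in $L^1$ while $\mu$ is purely singular (concentrated on the $\calL^2$-null set $I\times Y$), necessarily $(\partial_1 u_1)^s=-\mu$. Evaluating both sides on a Borel rectangle $E\times V$ with $V\subset W\setminus Y$ annihilates $\mu$; choosing four Borel $V_1,\dots,V_4\subset W\setminus Y$ for which the matrix $\bigl(\calL(V_j),\int_{V_j}x_2,\int_{V_j}\mathring{w},\int_{V_j}\ringring{w}_{\langle 0\rangle}\bigr)_j$ is invertible (possible because $1,x_2,\mathring{w},\ringring{w}_{\langle 0\rangle}$ remain linearly independent on $W\setminus Y$) forces $\alpha_s(E)=\beta_s(E)=\gamma_s(E)=\delta_s(E)=0$ for every Borel $E\subset I$. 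Hence $(\partial_1 u_1)^s=0$ and $\mu=0$; in particular each $\lambda_y=0=h_y\calL$ with $h_y\equiv 0\in L^2(I)$, $h_y\ge 0$, and the claimed representation $\mu=\sum_{y\in Y} h_y\calL\otimes\delta_y$ holds (trivially).

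It remains to upgrade $r''$ and $\vartheta''$ from the $L^1(I)$ information just obtained to $L^2(I)$. Testing the (now absolutely continuous) identity against $\varphi(x_1)\ringring{w}_{\langle 0\rangle}(x_2)$, the $L^2(W)$-orthogonalities of Lemma~\ref{lemmaw0} eliminate the $\alpha,\beta,\gamma$ contributions on the left, while the oddness of $\ringring{w}_{\langle 0\rangle}$ (Lemma~\ref{lemmaw0}(ii)) together with Lemma~\ref{lemmaw0}(iii) annihilates the entire $(r'+x_2\vartheta')^2$ contribution on the right; one is left with $J_4\,\vartheta''(x_1)=-\int_W\ringring{w}_{\langle 0\rangle}(x_2)\,S_{11}(x_1,x_2)\,dx_2\in L^2(I)$, so $\vartheta\in H^2(I)$. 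Bootstrapping $\vartheta'\in H^1(I)\embed L^\infty(I)$ and projecting analogously against $\mathring{w}(x_2)$ yields $\|\mathring{w}\|_{L^2(W)}^2\,r''(x_1)=-\int_W\mathring{w}(x_2)\,S_{11}(x_1,x_2)\,dx_2+\tfrac12\vartheta'(x_1)^2\int_W x_2^2\mathring{w}(x_2)\,dx_2\in L^2(I)$, hence $r\in H^2(I)$. The main obstacle I anticipate is the singular-parts step, where the $x_2$-absolutely-continuous structure of $(\partial_1 u_1)^s$ must be reconciled with the $x_2$-atomic structure of $\mu$; Lemma~\ref{lemmaw0}(iv) is precisely what forces the cancellation.
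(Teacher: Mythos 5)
Your proof is correct, and it takes a genuinely different route from the paper's. Where the paper passes through the determinant identity \eqref{Seq}, obtains a variational identity of the form $\int_\Omega S_{11}\,\partial_{22}\varphi\,dx=\int_\Omega \vartheta'^2\varphi+(r'+x_2\vartheta')\mathring w''\,\partial_1\varphi\,dx+\int_\Omega\partial_{22}\varphi\,d\mu$, and then tests with specially tailored $\varphi=\varphi_1(x_1)\varphi_2(x_2)$ (odd/even $\varphi_2$ with $\varphi_2''=0$ on $Y$, or $\varphi_2''(y)\neq0$ at a single $y$) to extract $\vartheta''\in L^2$, $r''\in L^2$, and $\lambda_y=h_y\calL$ in turn, you instead pass directly to the weak-$*$ limit in $S^\e_{11}=\partial_1u^\e_1/\e^2+\tfrac12(\partial_1w_\e/\e)^2$ to obtain the measure identity $\partial_1u_1+\mu=\bigl[S_{11}-\tfrac12(r'+x_2\vartheta')^2\bigr]\calL^2$ and then run a Radon--Nikodym/product-structure argument against $\calL^2$, using the $L^2(W)$-orthogonal basis $\{1,x_2,\mathring w,\ringring{w}_{\langle0\rangle}\}$ from Lemma~\ref{lemmaw0}(iv) to identify the coefficient distributions as Radon measures on $I$ and to compare singular parts. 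Both approaches rely essentially on the same orthogonality/parity structure of the $x_2$-profiles, but yours replaces the paper's careful test-function surgery around $Y$ with a global measure decomposition.

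One genuine difference worth flagging: your argument actually proves \emph{more} than Lemma~\ref{rtH2} asserts. Because $(\partial_1u_1)^s=\sum_j(g_j)_s\otimes f_j\calL_W$ has absolutely continuous $x_2$-marginals while $\mu$ is concentrated on the $\calL_W$-null set $I\times Y$, the identity $(\partial_1u_1)^s=-\mu$ forces both to vanish, so you get $\mu=0$ outright (with $h_y\equiv0$), and simultaneously $\xi_1'\in L^1(I)$ and $\xi_2''+(r'\vartheta)'+c_1\vartheta''\in L^1(I)$. The paper deliberately proves only the intermediate fact $\lambda_y=h_y\calL$, $h_y\in L^2$, here, and postpones $\mu=0$ and the regularity of $\xi_1,\xi_2$ to Lemma~\ref{b2S11}, where the $L^2$-structure of the right-hand side of the very same identity \eqref{S11uwmu2} is used. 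Your route short-circuits that split. (A minor simplification you could also make: rather than choosing four $V_j\subset W\setminus Y$ and inverting the $4\times4$ matrix, it suffices to evaluate $(\partial_1u_1)^s$ and $-\mu$ on $I\times Y$ directly — the left-hand side is $0$ because $\calL_W(Y)=0$, and since $\mu\ge0$ is concentrated there, $\mu(\Omega)=0$.) The remaining bootstrap to $r,\vartheta\in H^2(I)$ by projecting the pointwise identity against $\ringring{w}_{\langle0\rangle}$ and then $\mathring w$ is correct, including the use of $\vartheta'\in H^1(I)\embed L^\infty(I)$ to control the $\vartheta'^2\int_W x_2^2\mathring w\,dx_2$ term, and the positivity of $J_4$ and of $\|\mathring w\|_{L^2(W)}^2$ (the latter since $\mathring w$ is non-affine, hence nonzero).
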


	\begin{proof}
		Let $(u_\e, w_{\e})\subset \calX$ as in Lemma \ref{compactness4}.
		We start by observing that \eqref{cmp1},  \eqref{cmp3}, and  \eqref{cmp4}
		imply that
		$$
		\nabla \frac{w_\e}{\e}\otimes \nabla \frac{w_\e}{\e}=
		\begin{pmatrix}
			(\frac{\partial_1 w_\e}{\e})^2 & \frac{\partial_1 w_\e}{\e}\frac{\partial_2 w_\e}{\e}\\
			\frac{\partial_1 w_\e}{\e}\frac{\partial_2 w_\e}{\e} & (\frac{\partial_2 w_\e}{\e})^2
		\end{pmatrix}
		\weakstar
		\begin{pmatrix}
			(\partial_1 w)^2\calL^2+2\mu & \partial_1 w\partial_2 w\\
			\partial_1 w\partial_2 w & (\partial_2 w)^2
		\end{pmatrix}
		$$
		in $\calM(\Omega, \mathbb{R}^{2\times 2}_{\rm sym})$.
		Hence, from the identity
		$$
		\int_\Omega \det \nabla^2 \frac{w_\e}{\e} \varphi \, dx =-\frac12\int_\Omega  \nabla \frac{w_\e}{\e}\otimes \nabla \frac{w_\e}{\e} \cdot \cof\nabla^2\varphi dx \qquad \forall  \ \varphi \in C_c^\infty(\Omega)
		$$
		we find that
		$$
		\begin{aligned}
		\lim_{\e\to 0}		\int_\Omega \det \nabla^2 \frac{w_\e}{\e} \varphi \, dx =-\frac12&\int_\Omega (\partial_1 w)^2\partial_{22}\varphi-2 \partial_1 w\partial_2 w\partial_{12}\varphi+(\partial_2 w)^2 \partial_{11}\varphi\, dx\\-&\int_\Omega \partial_{22}\varphi\,d\mu,
				\end{aligned}
		$$
		for every $\varphi \in C_c^\infty(\Omega)$. By means of \eqref{cmp3} we obtain  that
		$$
		\begin{aligned}
			\lim_{\e\to 0}\int_\Omega \det \nabla^2 \frac{w_\e}{\e} \varphi \, dx
			&=-\frac12\int_\Omega (r'+x_2\vartheta')^2\partial_{22}\varphi-2 (r'+x_2\vartheta')(\vartheta+\mathring{w}')\partial_{12}\varphi\\
			&\hspace{4cm}+(\vartheta+\mathring{w}')^2 \partial_{11}\varphi\, dx-\int_\Omega \partial_{22}\varphi\,d\mu\\
			&=-\frac12\int_\Omega 2\vartheta'^2\varphi+2 \big(\vartheta'(\vartheta+\mathring{w}')+(r'+x_2\vartheta')\mathring{w}''\big)\partial_{1}\varphi\\
			&\hspace{4cm}-2\vartheta'(\vartheta+\mathring{w}') \partial_{1}\varphi\, dx-\int_\Omega \partial_{22}\varphi\,d\mu,\\
			&=-\int_\Omega \vartheta'^2\varphi+ (r'+x_2\vartheta')\mathring{w}''\partial_{1}\varphi\, dx-\int_\Omega \partial_{22}\varphi\,d\mu
		\end{aligned}
		$$
		for every $\varphi \in C_c^\infty(\Omega)$.
		By using this convergence result, \eqref{cmp5}, and \eqref{Seq}, we deduce that
		\begin{equation*}
			\int_\Omega S_{11}\partial_{22}\varphi\, dx=
			\int_\Omega \vartheta'^2\varphi+ (r'+x_2\vartheta')\mathring{w}''\partial_{1}\varphi\, dx+\int_\Omega \partial_{22}\varphi\,d\mu
		\end{equation*}
		for every $\varphi \in C_c^\infty(\Omega)$.
		From this equality it follows, also using Lemma \ref{lemmawsq}, that
		\begin{equation}\label{Seqdddd}
			\int_\Omega S_{11}\varphi_1\varphi_2''\, dx=
			\int_\Omega \vartheta'^2\varphi_1\varphi_2+ (r'+x_2\vartheta')\mathring{w}''\varphi_1'\varphi_2\, dx+ \sum_{y\in Y} \varphi_2''(y)\int_{I\times\{y\}} \varphi_1\,d\lambda_y
		\end{equation}
		for every $\varphi_1 \in C_c^\infty(I)$ and $\varphi_2 \in C_c^\infty(W)$.

		By choosing $\varphi_2$ odd and $\varphi_2''=0$ on $Y$, recalling that $\mathring{w}$ is even,
		from \eqref{Seqdddd} we find that
		$$
		\int_\Omega S_{11}\varphi_1\varphi_2''\, dx=
		\int_\Omega x_2\vartheta'\mathring{w}''\varphi_1'\varphi_2\, dx=\int_W x_2\mathring{w}''\varphi_2\, dx_2\int_I \vartheta'\varphi_1'\, dx_1
		$$
		from which we deduce that $|\int_I \vartheta' \varphi_1' \, dx_1| \leq C \n{\varphi_1\varphi_2''}_{L^2(\Omega)}\le C\n{\varphi_1}_{L^2(I)}$, for every $\varphi_1 \in C_c^\infty(I)$.  By Riesz's representation theorem (see also \cite[Proposition 8.3]{Brezis2011}), we deduce $\vartheta \in H^2(I)$.

		Instead, by choosing $\varphi_2$ even and $\varphi_2''=0$ on $Y$ from \eqref{Seqdddd} we find
		$$
		\int_\Omega S_{11}\varphi_1\varphi_2''\, dx=
		\int_\Omega \vartheta'^2\varphi_1\varphi_2+ r'\varphi_1'\varphi_2\, dx,
		$$
		that leads to
		$|\int_I r' \varphi_1' \, dx_1| \leq C(1 + \n{\vartheta'}_{L^4(I)}^{2})\n{\varphi_1}_{L^2(I)}$
		for all  $\varphi_1 \in C_c^\infty(I)$. Thus, $r \in H^2(I)$.

		Finally, let us fix $y\in Y$ and let us also fix $\varphi_2 \in C_c^\infty(W)$ such that
		$\varphi_2''(y)\ne 0$ and $\varphi_2''=0$ on $Y\setminus\{y\}$. From \eqref{Seqdddd} we deduce that
		$$
		\varphi_2''(y)\int_{I\times\{y\}} \varphi_1\,d\lambda_y=
		\int_\Omega S_{11}\varphi_1\varphi_2''\, dx-
		\int_\Omega \big( \vartheta'^2- (r''+x_2\vartheta'')\mathring{w}''\big)\varphi_1\varphi_2\, dx
		$$
		which implies that
		\begin{equation}\label{mubound}
			\Big|\int_{I} \varphi_1\,d\lambda_y\Big|\le C \|\varphi_1\|_{L^2(I)}
		\end{equation}
		for all  $\varphi_1 \in C_c^\infty(I)$. By density, the linear functional $\lambda_y:C_c^\infty(I) \to \mathbb{R}$ defined by $\lambda_y(\varphi_1)=\int_{I} \varphi_1\,d\lambda_y$ can be extended to a linear functional, that we do not rename, $\lambda_y:L^2(I) \to \mathbb{R}$ that satisfies \eqref{mubound} for all $\varphi_1\in L^2(I)$.
		By Riesz's representation theorem there exists $h_y\in L^2(I)$ such that $\int_{I} \varphi_1\,d\lambda_y=\int_{I} \varphi_1h_y\,dx_1$ for all  $\varphi_1 \in C_c^\infty(I)$. Thus, $\lambda_y=h_y\calL$. Since $\lambda_y$ is a positive measure it follows that $h_y\ge 0$ almost everywhere in $I$.
	\end{proof}

	Finally, we can prove the following characterization for $\beta=2$.
	\begin{lemma}\label{b2S11}
		Let $\beta = 2$. Let $(\xi_1, \xi_2) \in BV(I)\times BH(I)$ as in Lemma \ref{compactness4} and   $\mu\in \calM^+(\Omega)$ as in Lemma \ref{lemmamu2}.    Then:
		\begin{enumerate}[(i)]
			\item $(\xi_1, \xi_2) \in H^1(I)\times H^2(I)$;
			\item  $\mu = 0$;
			\item the component 11 of $S$, defined in \eqref{cmp5}, is
			$
			S_{11}=\partial_1 u_1+\frac 12 (\partial_1w)^2
			$.
		\end{enumerate}
	\end{lemma}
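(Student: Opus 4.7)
The plan is to take the weak limit in the identity
\[S_{11}^\e = \partial_1(u_1^\e/\e^2) + \tfrac12 (\partial_1 w_\e/\e)^2\]
and to read off all three claims from the resulting equation between measures on $\Omega$. By Lemma \ref{lemmamu2}, $\tfrac12(\partial_1 w_\e/\e)^2 \calL^2 \weakstar \tfrac12(\partial_1 w)^2\calL^2 + \mu$ in $\calM(\Omega)$, and $S_{11}^\e \weak S_{11}$ in $L^2(\Omega)$; hence $\partial_1(u_1^\e/\e^2)$ is bounded in $L^1(\Omega)$ and, combined with the convergence $u_\e/\e^2 \weakstar u$ in $BD(\Omega,\bbR^2)$ from Lemma \ref{compactness4} (which identifies the weak${}^*$ limit of $\partial_1(u_1^\e/\e^2)$ in $\calM(\Omega)$ with the Radon measure $(Eu)_{11} = \partial_1 u_1$), one obtains the fundamental identity
\[S_{11}\calL^2 = \partial_1 u_1 + \tfrac12 (\partial_1 w)^2 \calL^2 + \mu \qquad \text{in } \calM(\Omega).\]

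Next, insert the explicit form of $u_1$ from Lemma \ref{compactness4}. Since $r,\vartheta \in H^2(I)$ by Lemma \ref{rtH2}, every term in $\partial_1 u_1$ other than $D\xi_1 \otimes \calL$ and $-x_2\, D\xi_2' \otimes \calL$ lies in $L^2(\Omega)$; the $\calL^2$-singular part of $\partial_1 u_1$ is therefore $D^s\xi_1 \otimes \calL - x_2\, D^s\xi_2' \otimes \calL$. Recalling from Lemma \ref{rtH2} that $\mu = \sum_{y\in Y} h_y\calL \otimes \delta_y$ with $h_y \in L^2(I)$ and $h_y\ge 0$, the $\calL^2$-singular part of the fundamental identity reads
\[D^s\xi_1 \otimes \calL - x_2\, D^s\xi_2' \otimes \calL = -\mu \qquad \text{in } \calM(\Omega).\]
Integrating in $x_2$ against $\chi(x_2)\equiv 1$ (using $\int_W dx_2 = 1$ and $\int_W x_2\,dx_2 = 0$) yields $D^s\xi_1 = -\big(\sum_{y\in Y} h_y\big)\calL$ in $\calM(I)$; the left side is $\calL$-singular while the right is $\calL$-absolutely continuous, so both vanish. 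Non-negativity of each $h_y$ then forces $h_y\equiv 0$ for every $y\in Y$, so $\mu = 0$, and $D^s\xi_1 = 0$, giving $\xi_1\in W^{1,1}(I)$. Integrating next in $x_2$ against $\chi(x_2)=x_2$ and using $\int_W x_2^2 dx_2 = J_1\ne 0$ yields $D^s\xi_2' = 0$, so $\xi_2\in W^{2,1}(I)$.

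With $\mu = 0$, the fundamental identity collapses to the $L^2(\Omega)$-identity $S_{11} = \partial_1 u_1 + \tfrac12(\partial_1 w)^2$, which is claim (iii). Averaging this identity against $1$ and $x_2$ in the $x_2$ variable, and invoking the orthogonalities of Lemma \ref{lemmaw0} (namely $\int_W \mathring{w} = \int_W \ringring{w}_{\langle 0 \rangle} = \int_W x_2\mathring{w} = \int_W x_2 \ringring{w}_{\langle 0 \rangle} = 0$), produces explicit formulas that place $\xi_1'$ and $\xi_2''$ in $L^2(I)$; hence $\xi_1\in H^1(I)$ and $\xi_2\in H^2(I)$, which is (i). The principal obstacle is the singular-part analysis: disentangling the atomic-in-$x_2$ measure $\mu$ from the $x_1$-singular pieces of $D\xi_1$ and $D\xi_2'$ depends crucially on the orthogonality structure of Lemma \ref{lemmaw0}, which guarantees that these pieces are independently detected by the $\chi\equiv 1$ and $\chi(x_2) = x_2$ test functions, and on the sign condition $h_y \ge 0$, which converts the equality of a singular with an absolutely continuous measure into the full annihilation $\mu = 0$.
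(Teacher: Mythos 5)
Your proof is correct and reaches the same conclusions as the paper's, but organizes the steps in a different order. The paper rewrites the measure identity as $\xi_1' - x_2\xi_2'' + \sum_{y\in Y} h_y\,\calL\otimes\delta_y = S_{11} - f - \tfrac12(\partial_1 w)^2$ with the right side in $L^2(\Omega)$, tests it directly against products $\varphi_1(x_1)\varphi_2(x_2)$ with $\varphi_2$ even or odd, and invokes Riesz representation to get $\xi_1\in H^1(I)$ and $\xi_2\in H^2(I)$ in a single stroke (the $\mu$-contribution is simply absorbed into the constant $C$ through $\|h_y\|_{L^2(I)}$); only \emph{afterwards} does it observe that $\sum_y h_y\,\calL\otimes\delta_y$ is then forced into $L^2(\Omega)$ despite being supported on the $\calL^2$-null set $I\times Y$, hence $\mu=0$. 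You invert the order: you first extract the $\calL^2$-singular part of the identity, arriving at $D^s\xi_1\otimes\calL - D^s\xi_2'\otimes(x_2\calL) = -\mu$, and then the singular/absolutely-continuous dichotomy in $x_1$ (combined with $h_y\ge 0$) kills $\mu$, $D^s\xi_1$, and $D^s\xi_2'$ at once; the upgrade from $W^{1,1}/W^{2,1}$ to $H^1/H^2$ is then a separate $x_2$-averaging pass on the residual $L^2$ identity. Both routes rely on the same inputs — the form of $\mu$ and the $H^2$ regularity of $r,\vartheta$ from Lemma~\ref{rtH2}, and the $L^2(W)$-orthogonalities of Lemma~\ref{lemmaw0} — and both are valid: your version makes the measure-theoretic incompatibility that forces $\mu=0$ explicit, whereas the paper's is slightly more compact by obtaining regularity and $\mu=0$ in one go.
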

	\begin{proof}
		From \eqref{cmp4}, \eqref{cmp5}, and \eqref{cmp6}, we have that
		\begin{equation}\label{S11uwmu}
			S_{11}=\partial_1 u_1+\frac 12 (\partial_1w)^2 +\mu,
		\end{equation}
		where $S_{11} \in L^2(\Omega)$, $\partial_1 u_1\in \calM(\Omega)$, $\mu\in \calM^+(\Omega)$, and $w\in H^1(\Omega).$
		By \eqref{cmp7} we know that
		$$
		u_1=\xi_1 -x_2(\xi_2'+r'\vartheta + c_1\vartheta') - \mathring{w} r'   -\ringring{w}_{\langle 0 \rangle}\vartheta' - x_2\alpha + t_1,
		$$
		and since Lemma \ref{rtH2} states that $r,\vartheta\in H^2(I)$, we can write $\partial_1 u_1$ as
		$$
		\partial_1 u_1=\xi_1' -x_2\xi_2''+f,
		$$
		where $\xi_1',\xi_2''\in\calM(I)$ and $f\in L^2(\Omega)$.
		Also, from \eqref{cmp3} we have that $\partial_1 w=r'+x_2\vartheta'$, and  from Lemma \ref{rtH2} we deduce that $(\partial_1w)^2\in L^2(\Omega)$.
		With these considerations, we can rewrite \eqref{S11uwmu} as
		\begin{equation}\label{S11uwmu2}
			\xi_1' -x_2\xi_2'' +\sum_{y\in Y} h_y\calL\otimes \delta_y=S_{11}-f-\frac 12 (\partial_1w)^2,
		\end{equation}
		where we have used \eqref{muh}. We observe that the right side of \eqref{S11uwmu2} is in $L^2(\Omega)$.
		From this equality, for every
		$\varphi_1 \in C_c^\infty(I)$ and $\varphi_2 \in C_c^\infty(W)$, we infer that
		$$
		\Big|\int_W\varphi_2\,dx_2\int_I\varphi_1\,d\xi_1'-\int_Wx_2\varphi_2\,dx_2\int_I\varphi_1\,d\xi_2''
		\Big|\le C\|\varphi_1\|_{L^2(I)},
		$$
		where the constant $C$ depends on $\|\varphi_2\|_{L^\infty(W)}$ and $ \|h_y\|_{L^2(I)}$.
		Taking $\varphi_2$ even and arguing as at the end of the proof of Lemma \ref{rtH2} we deduce that
		$\xi_1\in H^1(I)$. Similarly, with $\varphi_2$ odd we conclude that $\xi_2\in H^2(I)$.

		With these discoveries and \eqref{S11uwmu2}, we deduce that $\sum_{y\in Y} h_y\calL\otimes \delta_y\in L^2(\Omega)$, which implies $h_y=0$ for every $y\in Y$. Indeed, it suffices to multiply
		\eqref{S11uwmu2} by $\mathbf{1}_{I\times \{z\}}$ with $z\in Y$ and integrate over $\Omega$, to deduce that $h_z=0$ almost everywhere in $I$.
	\end{proof}

	Under the convergences established in Lemma \ref{compactness4} the trace operators are not continuous - apart for $(\frac{w_\e}{\e^{\beta/2}}(\pm \frac{\ell}{2}, x_2))$ - and we cannot readily deduce the boundary values of $r', \vartheta', \xi_1, \xi_2, \xi_2'$. However, exploiting the extension introduced in Remark \ref{remarkext} and the enhanced regularity of the limit functions, we can prove the following result.
	\begin{lemma}
		Let $\beta=2$. Let $r,\vartheta \in H^2(I)$ and $(\xi_1, \xi_2) \in H^1(I)\times H^2(I)$ as in Lemma \ref{compactness4}, \ref{rtH2}, \ref{b2S11}. Then
		\begin{enumerate}[(i)]
			\item $r(-\frac{\ell}{2}) = 0$, $r(\frac{\ell}{2}) = \Lambda_3$, $\vartheta(-\frac{\ell}{2}) = 0$, $\vartheta(\frac{\ell}{2}) = \Phi_1$;
			\item $r'(-\frac{\ell}{2})=0$, $r'(\frac{\ell}{2}) = \Phi_2$;
			\item $\vartheta'(-\frac{\ell}{2}) = \vartheta'(\frac{\ell}{2})=0$;
			\item $\xi_1(-\frac{\ell}{2})=0$, $\xi_1(\frac{\ell}{2}) = \Lambda_1$;
			\item $\xi_2(-\frac{\ell}{2})=0$, $\xi_2(\frac{\ell}{2})=\Lambda_2$; $\xi_2'(-\frac{\ell}{2})=0$, $\xi_2'(\frac{\ell}{2})=\Phi_3$.
		\end{enumerate}
	\end{lemma}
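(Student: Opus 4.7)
The plan is to transfer the problem to the extended domain $\widetilde{\Omega}$ of Remark \ref{remarkext}, where the limit functions are extensions of $r,\vartheta,\xi_1,\xi_2$ given by explicit polynomial (or zero) formulas on $\widetilde{I}\setminus \overline{I}$. One-dimensional Sobolev embeddings then promote the inherited regularity to continuous point values, and where applicable continuous derivative values, across $\pm\ell/2$, so the interior values of the extensions on the added intervals are exactly the traces we seek.

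First I would apply Lemma \ref{compactness4} to $(\widetilde u_\e,\widetilde w_\e)$. This is legitimate because \eqref{equiv} preserves the uniform energy bound, and the counterpart of \eqref{condvar} gives $\int_{\widetilde\Omega}\partial_1 \widetilde u_{1\e}\,dx$ of order $\e^\beta$, with a constant independent of $\e$. One obtains extended limits $\widetilde r,\widetilde\vartheta\in H^1(\widetilde I)$ and $(\widetilde\xi_1,\widetilde\xi_2)\in BV(\widetilde I)\times BH(\widetilde I)$, which by uniqueness of weak limits and the explicit formulas \eqref{ext1}--\eqref{ext4} coincide on $I$ with $r,\vartheta,\xi_1,\xi_2$.

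Next, I would verify that the arguments of Lemmas \ref{rtH2} and \ref{b2S11} apply verbatim to the extended sequence: they rely only on the $L^2$ bounds on $\widetilde S_\e$ and on $\nabla_\e^2 \widetilde w_\e - (\mathring{w}''/\e)\mathsf{e}_2\otimes\mathsf{e}_2$, on the representation \eqref{eqdet}, and on properties of the cross-section $\mathring{w}$ and of the finite set $Y$; all of these are unaffected by the extension in the $x_1$-direction. This upgrades the regularity to $\widetilde r,\widetilde\vartheta\in H^2(\widetilde I)$, $\widetilde\xi_1\in H^1(\widetilde I)$ and $\widetilde\xi_2\in H^2(\widetilde I)$.

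Finally, the embeddings $H^1(\widetilde I)\embed C^0(\overline{\widetilde I})$ and $H^2(\widetilde I)\embed C^1(\overline{\widetilde I})$ yield continuous (resp.\ $C^1$) representatives, whose values at $\pm\ell/2$ must coincide with the values produced by the explicit formulas \eqref{ext1}--\eqref{ext4} on the adjacent added intervals. Reading these off: $\widetilde r=0$ on the left and $\widetilde r(x_1)=\Lambda_3+\Phi_2(x_1-\ell/2)$ on the right gives (i) and (ii); $\widetilde\vartheta=0$ on the left and $\widetilde\vartheta\equiv\Phi_1$ on the right gives the remainder of (i) and, by the constant value on the right, (iii); the affine formulas for $\widetilde\xi_1$ and $\widetilde\xi_2$ on the right yield (iv) and (v), respectively. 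The main obstacle is the verbatim transfer of Lemmas \ref{rtH2} and \ref{b2S11} to $\widetilde\Omega$, which amounts to checking that each localization, test-function, and measure-theoretic argument used there is insensitive to extending the base interval; this is routine but must be recorded.
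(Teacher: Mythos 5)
Your proposal is correct and follows essentially the same route as the paper's proof: extend to $\widetilde\Omega$, apply Lemmas \ref{compactness4}, \ref{rtH2}, \ref{b2S11} to the extended sequence (using that the added energy is zero by \eqref{equiv}, so all bounds persist), and read off the boundary traces from \eqref{ext1}--\eqref{ext4} via the embeddings $H^1(\widetilde I)\embed C^0(\overline{\widetilde I})$ and $H^2(\widetilde I)\embed C^1(\overline{\widetilde I})$. The only cosmetic difference is that the paper proves claim \textit{(i)} directly from continuity of the trace under weak convergence in $H^1(\Omega)$, whereas you obtain it as a byproduct of the extension argument; both are valid.
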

	\begin{proof}
		Claim \textit{(i)}. It is proved directly by using the continuity of the trace under weak convergence in $H^1$.

		Claims \textit{(ii)} - \textit{(v)}.
		Consider the extensions $(\widetilde{u}_\e, \widetilde{w}_\e)$ as defined in Section \ref{sec:comp}.
		We can thus applying the results of Lemmas \ref{compactness4}, \ref{rtH2}, \ref{b2S11}, to deduce the existence of limit functions $(\widetilde{\xi_1}, \widetilde{\xi_2}, \widetilde{r}, \widetilde{\vartheta}) \in H^1(\widetilde{I})\times H^2(\widetilde{I}) \times H^2(\widetilde{I}) \times H^2(\widetilde{I})$ whose expressions are given in \eqref{ext1}--\eqref{ext4}.
		Henceforth, by the aforementioned regularity of $(\widetilde{\xi_1}, \widetilde{\xi_2}, \widetilde{r}, \widetilde{\vartheta})$ and the embeddings $H^1(\widetilde{I}) \embed C^0(\widetilde{I})$, $H^2(\widetilde{I}) \embed C^1(\widetilde{I})$, we deduce the stated boundary trace values on $\partial I$.
	\end{proof}

	\section{The $\Gamma$-limit for $\beta=2$}\label{sec:gamma2}

	We begin with the case $\beta = 2$, which is simpler.

	Recall that $\mathcal{X} = L^1(\Omega; \mathbb{R}^2) \times L^2(\Omega)$. Given $(\Lambda_1, \Lambda_2, \Lambda_3, \Phi_1, \Phi_2, \Phi_3) \in \bbR^6$, define the set of admissible displacements by
	$$
	\begin{aligned}
		\calA^{2} \coloneqq
		\{(u,w) \in \calX \st & \exists   (\xi_1, \xi_2, r, \vartheta) \in H^1(I) \times H^2(I)\times H^2(I)\times H^2(I), \\
		&w = r + x_2\vartheta + \mathring{w}, \\
		&u_{1} =\xi_1 -x_2(\xi_2'+r'\vartheta + c_1\vartheta') - \mathring{w} r'   -\ringring{w}_{\langle 0 \rangle}\vartheta'\\
		&u_{2}=\xi_2 - \frac12 x_2 \vartheta^2 - \mathring{w}\vartheta\\
		& r(-\frac{\ell}{2})=r'(-\frac{\ell}{2})=0, \ r(\frac{\ell}{2})=\Lambda_3, r'(\frac{\ell}{2})=\Phi_2\\
		& \vartheta(-\frac{\ell}{2})=\vartheta'(-\frac{\ell}{2})=0, \ \vartheta(\frac{\ell}{2})=\Phi_1, \vartheta'(\frac{\ell}{2})=0, \\
		& \xi_1(-\frac{\ell}{2})=0, \ \xi_1(\frac{\ell}{2})=\Lambda_1,\\
		&\xi_2(-\frac{\ell}{2})=\xi_2'(-\frac{\ell}{2})=0, \ \xi_2(\frac{\ell}{2})=\Lambda_2, \xi_2'(\frac{\ell}{2})=\Phi_3\}.
	\end{aligned}
	$$
	We observe that a pair $(u, w) \in \mathcal{A}^{2}$ uniquely determines the functions $(\xi_1, \xi_2, r, \vartheta)$ appearing in the definition of $\mathcal{A}^{2}$. Hence, the following definition of a functional, with domain $\mathcal{X}$, is well-defined:
	$$
	F^{2}(u,w) \coloneqq
	\begin{cases}
		\begin{aligned}
			\int_I |\vartheta'|^2 &+ \frac12 |\xi_1'+ \frac12 r'^2 + \frac{1}{24}\vartheta'^2|^2
			\\ & +\frac{J_1}{2} |\xi_2'' + r''\vartheta + c_1\vartheta''|^2  +\frac{J_2}{2} |c_2 r'' - \vartheta'^2|^2\\
			&  +\frac{J_3}{2} |r''|^2  +\frac{J_4}{2} |\vartheta''|^2 \, dx_1 \end{aligned}
			& \text{ if }  (u,w) \in \calA^{2}\\
			+\infty & \text{ otherwise}.
	\end{cases}
	$$

	We now state and prove the $\Gamma$-convergence result for $\beta = 2$. To keep the statement compact, we express the convergence using strong convergence in $\mathcal{X}$, even though, as shown in Section~\ref{sec:comp}, this can be significantly improved.

	\begin{theorem}\label{thm:gc3}
		As $\e\downarrow 0$, the sequence of functionals $(F_{\e}^{2})$ $\Gamma$-converges to $F^{2}$ in $\calX$. Precisely
		\begin{enumerate}[(a)]
			\item (Liminf inequality) for every $(u,w)\in \calX$ and for every sequence $(u_\e, w_\e)\subset \calX$ such that
			$$
			(\frac{u_\e}{\e^2},\frac{w_\e}\e)\to (u,w)\quad \mbox{in }\calX
			$$
			we have
			\begin{equation*}
				\liminf\limits_{\e\downarrow 0} F_{\e}^{2}(u_\e, w_\e)\geq F^{2}(u,w);
			\end{equation*}
			\item (Recovery sequence)
			for every  $(u,w)\in \calX$ there exists a sequence $(u_\e, w_\e)\subset \calX$ such that
			$$
			(\frac{u_\e}{\e^2},\frac{w_\e}\e)\to (u,w)\quad \mbox{in }\calX
			$$
			and
			\begin{equation*}
				\lim\limits_{\e\downarrow 0} F_{\e}^{2}(u_\e, w_\e) = F^{2}(u,w).
			\end{equation*}
		\end{enumerate}
	\end{theorem}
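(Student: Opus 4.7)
The proof follows the two standard $\Gamma$-convergence steps. For the liminf inequality, if $\liminf_\e F_\e^2(u_\e, w_\e) = +\infty$ there is nothing to prove; otherwise I pass to a subsequence along which the liminf is attained and \eqref{cmp00} holds. The full compactness package of Section \ref{sec:comp} (Lemmas \ref{compactness4}, \ref{lemmaunif}, \ref{lemmamu2}, \ref{lemmawsq}, \ref{rtH2}, \ref{b2S11}, together with the final boundary-trace lemma for $\beta=2$) then produces limit functions $\xi_1 \in H^1(I)$, $\xi_2, r, \vartheta \in H^2(I)$ satisfying the prescribed boundary conditions, and uniqueness of weak versus strong limits forces the hypothesized $\calX$-limit $(u,w)$ to be of the structured form prescribed in $\calA^{2}$.

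I then apply weak lower semicontinuity of the $L^2$-norm separately to the bending and the membrane contributions. Retaining only the off-diagonal entries of the bending tensor, whose weak limit is $\vartheta'$, and discarding the nonnegative $(1,1)$ and $(2,2)$ terms yields $\liminf \tfrac12 \int_\Omega |\nabla_\e^2 w_\e - \tfrac{\mathring{w}''}\e \mathsf{e}_2\otimes\mathsf{e}_2|^2 \, dx \geq \int_I (\vartheta')^2 \, dx_1$. For the membrane term, Lemma \ref{b2S11}(iii) identifies the weak $L^2$-limit of $(S_\e)_{11}$ as $S_{11}=\partial_1 u_1 + \tfrac12 (\partial_1 w)^2$. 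Substituting the $\calA^{2}$-structure of $(u,w)$ and expanding, one sees that $S_{11}$ is, as a function of $x_2$, the $L^2(W)$-expansion of the five mutually orthogonal functions $1$, $x_2$, $\tfrac{x_2^2}{2}-\tfrac{1}{24}$, $\mathring{w}-c_2(\tfrac{x_2^2}{2}-\tfrac{1}{24})$, $\ringring{w}_{\langle 0 \rangle}$ of Lemma \ref{lemmaw0}(v), with coefficients $\xi_1'+\tfrac12 r'^2 + \tfrac{1}{24}(\vartheta')^2$, $-(\xi_2''+r''\vartheta+c_1\vartheta'')$, $(\vartheta')^2-c_2 r''$, $-r''$, $-\vartheta''$ respectively. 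Parseval then integrates these, weighted by the constants $1, J_1, J_2, J_3, J_4$ of \eqref{constants}--\eqref{cJ}, into precisely the remaining terms of $F^{2}(u,w)$.

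For the recovery sequence, if $(u,w) \notin \calA^{2}$ no construction is needed. If $(u,w) \in \calA^{2}$ I use the bare ansatz $w_\e \coloneqq \e w$ and $u_\e \coloneqq \e^2 u$. Admissibility in $H^1(\Omega,\bbR^2)\times H^2(\Omega)$ and compatibility with the boundary conditions \eqref{cond} are immediate from the regularity and boundary data baked into $\calA^{2}$. A direct calculation that exploits the precise nonlinear structure of $u_1, u_2$ in $\calA^{2}$ shows that $(S_\e)_{12}$ and $(S_\e)_{22}$ vanish \emph{identically} along this ansatz, while $(S_\e)_{11}\equiv S_{11}$ is independent of $\e$; simultaneously, the bending tensor reduces to $(\e(r''+x_2\vartheta''), \vartheta'; \vartheta', 0)$, whose squared $L^2$-norm converges to $2\int_\Omega (\vartheta')^2 \, dx$. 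Passing to the limit and re-using the Parseval decomposition above yields $\lim_\e F_\e^2(u_\e,w_\e) = F^{2}(u,w)$; the $\calX$-convergence is trivial since $(u_\e/\e^2, w_\e/\e) = (u,w)$.

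The main obstacle is concentrated in the liminf step, specifically in securing the enhanced regularity $r,\vartheta,\xi_2 \in H^2$, $\xi_1 \in H^1$ and the explicit identification of $S_{11}$ in Lemma \ref{b2S11}(iii): these are precisely the ingredients that turn the abstract weak lower semicontinuity of $\|\cdot\|_{L^2}$ into the sharp lower bound $F^{2}(u,w)$ via the orthogonal expansion of Lemma \ref{lemmaw0}(v). Once both are in hand, the remainder of both parts of the proof reduces to a single algebraic expansion of $S_{11}$ and the short computation showing the vanishing of $(S_\e)_{12}$ and $(S_\e)_{22}$ for the recovery sequence.
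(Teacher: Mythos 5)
Your proposal is correct and follows essentially the same route as the paper's proof: same compactness package, same lower-semicontinuity argument keeping only the off-diagonal bending entry and $S_{11}$, same Parseval expansion via the orthogonal basis of Lemma~\ref{lemmaw0}(v), and the same bare ansatz $(u_\e,w_\e)=(\e^2 u,\e w)$ for the recovery sequence with the verification that $(S_\e)_{12}=(S_\e)_{22}\equiv 0$. (You even state the $(1,1)$-entry of the bending tensor as $\e(r''+x_2\vartheta'')$, which is more accurate than the $\e r''$ appearing in the paper's display, though both vanish in the limit.)
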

	\begin{proof}
		{\it {(a) (Liminf inequality)}}
		Without loss of generality, let us assume $\liminf_{\e\downarrow 0} F_{\e}^{2}(u_\e, w_\e) <\infty$, otherwise there is  nothing to prove.
		Hence, up to a subsequence, $\sup_{\e}F_{\e}^{2}(u_\e, w_\e) <\infty$. Therefore,  we can rely on the results stated in Section \ref{sec:comp} and in particular deduce that $(u,w) \in \calA^{2}$.

		With the definition \eqref{defSe} and by weak sequential lower semicontinuity we have that
		$$
		\begin{aligned}
			\liminf_{\e\downarrow 0} F_{\e}^{2}(u_\e,w_\e)&\ge
			\liminf_{\e\downarrow 0}\frac{1}{2}\int_\Omega |\nabla^2_\e w_\e - \mathring{w}''\mathsf{e}_2\otimes\mathsf{e}_2|^2+ |S_\e|^2\, dx  \\
			&\ge\frac{1}{2}\int_\Omega |\begin{pmatrix}
				\gamma_{11} & \vartheta'\\ \vartheta' & \gamma_{22}
			\end{pmatrix}|^2\, dx + \frac{1}{2}\int_\Omega |\begin{pmatrix}
				\partial_1 u_1 + \frac12 (\partial_1 w)^2 & S_{12}\\ S_{12} & S_{22}
			\end{pmatrix}|^2\, dx,
		\end{aligned}
		$$
		where we used \eqref{cmp2}, \eqref{cmp5}, and Lemma \ref{b2S11}.
		Hence, since $(u,w) \in \calA^{2}$ we have
		$$
		\begin{aligned}
			\liminf_{\e\downarrow 0} F_{\e}^{2}(u_\e,w_\e)&\ge
			\int_I |\vartheta'|^2 dx_1 + \frac12 \int_\Omega |\partial_1 u_1 + \frac12 (\partial_1 w)^2|^2\,dx\\
			&=\begin{multlined}[t] \int_I |\vartheta'|^2 dx_1 + \frac12 \int_\Omega |\xi_1' - x_2 (\xi_2'' + r''\vartheta + r'\vartheta' + c_1\vartheta'') - \mathring{w} r''\\ - \ringring{w}_{\langle 0 \rangle} \vartheta'' + \frac12 r'^2 +x_2r'\vartheta'+\frac12 x_2^2 \vartheta'^2|^2 \, dx \end{multlined}\\
			&=\begin{multlined}[t] \int_I |\vartheta'|^2 dx_1 + \frac12 \int_\Omega |(\xi_1'+ \frac12 r'^2 + \frac{1}{24}\vartheta'^2) - x_2 (\xi_2'' + r''\vartheta + c_1\vartheta'')\\ \hspace{-0.5cm} - (\mathring{w} - c_2(\frac{x_2^2}{2} - \frac{1}{24})) r'' - \ringring{w}_{\langle 0 \rangle} \vartheta''  +(\frac{x_2^2}{2} - \frac{1}{24}) (\vartheta'^2-c_2 r'')|^2 \, dx \end{multlined}\\
			&=\begin{multlined}[t] \int_I |\vartheta'|^2 + \frac12  |\xi_1'+ \frac12 r'^2 + \frac{1}{24}\vartheta'^2|^2 +\frac{J_1}{2} |\xi_2'' + r''\vartheta + c_1\vartheta''|^2 \\
				\qquad+\frac{J_3}{2}  |r''|^2  +\frac{J_4}{2}  |\vartheta''|^2  +\frac{J_2}{2} |c_2 r'' - \vartheta'^2|^2 \, dx_1,
			\end{multlined}
		\end{aligned}
		$$
		where to obtain the last equality we used Lemma \ref{lemmaw0}.

		{\it {(b) (Recovery sequence)}}
		It suffices to consider pairs $(u, w) \in \mathcal{X}$ such that $F^2(u, w) < \infty$. In this case, we have $(u, w) \in \mathcal{A}^{2}$, and the pair can be uniquely represented in terms of functions $(\xi_1, \xi_2, r, \vartheta)$ as introduced in the definition of $\mathcal{A}^{2}$. We therefore define, simply,
		$$ w_\e \coloneqq   \e w, \qquad u_\e \coloneqq    \e^2 u.$$
		Thus
		$$ \begin{aligned}
			\lim_{\e\downarrow 0}	F_{\e}^2(u_\e, w_\e) &= \lim_{\e\downarrow 0}\frac{1}{2} \int_\Omega |\begin{pmatrix}
				\e r'' & \vartheta'\\
				\vartheta' & 0
			\end{pmatrix}|^2 \, dx + \frac12 \int_\Omega |\begin{pmatrix}
				\partial_1 u_1 + \frac12 (\partial_1 w)^2 & 0\\
				0 & 0
			\end{pmatrix}|^2 \, dx\\
			&= \int_I |\vartheta'|^2 \,dx_1 + \frac12\int_\Omega | \partial_1 u_1 + \frac12 (\partial_1 w)^2 |^2 \,dx
		\end{aligned}$$
		as requested.
	\end{proof}

	\subsection{The Euler-Lagrange equations and some examples}
	The presence of nontrivial transverse curvature ($\mathring{w}'' \not\equiv 0$) forces the Euler-Lagrange equations to be of the fourth-order in $r$ and of the fourth-order in $\vartheta$.\\
	Supposing $\xi_1=\xi_2=r=0$ (and $\Lambda_1=\Lambda_2=\Lambda_3=\Phi_2=\Phi_3=0$) the limit energy $F^2$ reduces to, after substituting \eqref{constants},
	$$ \begin{aligned}
		\vartheta \mapsto \int_I |\vartheta'|^2 dx_1 + \frac{1}{640}\int_I |\vartheta'|^4 \, dx_1 +\frac{c_1^2 + 12J_4}{24} \int_I|\vartheta''|^2 \, dx_1
	\end{aligned}$$
	with $\vartheta \in \{\vartheta \in H^2(I) \st \vartheta(-\frac{\ell}{2}) = \vartheta'(-\frac{\ell}{2})=0, \vartheta(\frac{\ell}{2})=\Phi_1, \vartheta'(\frac{\ell}{2})=0 \}$.
	The Euler-Lagrange equation reads
	$$ -(2 + \frac{3}{160}\vartheta'^2)\vartheta'' + \frac{c_1^2 + 12J_4}{12}\vartheta''''=0 \qquad \text{ in } I.$$
	We can interpret this equation as a nonlinear Vlasov torsion model, where the warping stiffness is $\frac{c_1^2 + 12J_4}{12}$ and the torsion stiffness, depending itself by $\vartheta$, is $(2 + \frac{3}{160}\vartheta'^2)$. As expected, the warping stiffness depends $\mathring{w}$.  Note that Vlasov's linear torsion theory has been already deduced by $\Gamma$-convergence from the 3D linear elasticity in \cite{Freddi2007}. Here, we have deduced a nonlinear version of that theory.

	Similarly, supposing $\xi_1=\xi_2=\vartheta=0$ (and $\Lambda_1=\Lambda_2=\Phi_1=\Phi_3=0$) for the purely flexural model, the limit energy reduces to
	$$ r\mapsto \frac18 \int_I |r'^2|^2 \, dx_1 +\frac{720 J_3 +  c_2^2}{1440} \int_I |r''|^2 \, dx_1$$
	with  $r \in \{r \in H^2(I) \st r(-\frac{\ell}{2}) = r'(-\frac{\ell}{2})=0, r(\frac{\ell}{2})=\Lambda_3, r'(\frac{\ell}{2})=\Phi_2\}$.
	The Euler-Lagrange equation reads
	$$ -\frac32r'^2r'' + \frac{720 J_3 +  c_2^2}{720} r''''=0 \qquad \text{ in } I.$$
	Note that in this case the bending stiffness ($\frac{720 J_3 +  c_2^2}{1440}$) is of purely geometrical nature, while the constitutive contribution vanishes at the limit.

	We provide an explicit solution for two cases of interest.
	For $\mathring{w} = \frac{x_2^2}{2} - \frac{1}{24}$ we have
	$$ \ringring{w} = \frac{x_2^3}{6} + \frac{x_2}{24}, \qquad \ringring{w}_{\langle 0 \rangle} = \frac{x_2^3}{6} - \frac{x_2}{40},$$
	$$ c_1 = \frac{1}{15}, \qquad c_2 = 1, \qquad J_3 = 0, \qquad J_4 = \frac{1}{100800}.$$

	While, for $\mathring{w} = (x_2^2 - \frac{1}{16})^2 - \frac{23}{3840}$ we have
	$$ \ringring{w} = \frac{x_2}{480}(288x_2^4 - 20x_2^2 + 1)
	, \qquad \ringring{w}_{\langle 0 \rangle} = \frac{x_2}{480}(288x_2^4 - 20x_2^2 + 1) - \frac{x_2}{84},
	$$
	$$ c_1 = \frac{1}{84}, \qquad c_2 = \frac{5}{28}, \qquad J_3 = \frac{1}{44100}, \qquad J_4 = \frac{757
	}{124185600}.$$

	\section{The $\Gamma$-limit for $0<\beta<2$}\label{sec:gamma1}

	We begin by establishing an auxiliary result that will be used in the proof of the liminf inequality.

	\begin{lemma}\label{lemmarelax}
		Let $(\xi_1,r) \in \calB$, the set defined in Lemma \ref{mulambda}. Then, with the notation of Remark \ref{remarkext},
		\begin{multline}\inf\{\int_{\widetilde{I}} |\widetilde{\xi}_1' + \widetilde{\lambda} + \frac12 \widetilde{r}'^2|^2 \, dx_1 \st  \widetilde{\lambda} \in \calM^+(\widetilde{I}), \ \widetilde{\xi}_1' + \widetilde{\lambda} + \frac12 \widetilde{r}'^2 \in L^2(\widetilde{I})\} \\ = \int_I |(\frac{d\xi_1'}{d\calL}  + \frac12 r'^2)^+|^2 \, dx_1,\end{multline}
		where $\frac{d\xi_1'}{d\calL}$ denotes the Radon-Nikodym derivative of $\xi_1'$ with respect to  the one dimensional Lebesgue measure $\calL$.
	\end{lemma}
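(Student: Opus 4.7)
The plan is to perform a Lebesgue decomposition of $\widetilde{\lambda}$ with respect to $\calL$ and reduce the infimum to a pointwise optimization. Write $\widetilde{\lambda} = \widetilde{\lambda}_a \calL + \widetilde{\lambda}_s$ with $\widetilde{\lambda}_a \ge 0$ a.e.\ and $\widetilde{\lambda}_s \in \calM^+(\widetilde{I})$ singular with respect to $\calL$, and analogously $\widetilde{\xi}_1' = \tfrac{d\widetilde{\xi}_1'}{d\calL}\calL + (\widetilde{\xi}_1')_s$; note that $\widetilde{r}'^2 \in L^1(\widetilde{I})$ is absolutely continuous. The requirement $\widetilde{\xi}_1' + \widetilde{\lambda} + \tfrac12 \widetilde{r}'^2 \in L^2(\widetilde{I})$ forces the singular parts to vanish, so
\begin{equation*}
\widetilde{\lambda}_s = -(\widetilde{\xi}_1')_s,
\end{equation*}
which is uniquely determined, non-negative (the hypothesis $(\xi_1,r)\in\calB$ guarantees existence of such a decomposition, hence $(\widetilde{\xi}_1')_s \le 0$ as a measure), and contributes nothing to the integral. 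The problem therefore collapses to
\begin{equation*}
\inf_{\widetilde{\lambda}_a \ge 0} \int_{\widetilde{I}} \Big( \tfrac{d\widetilde{\xi}_1'}{d\calL} + \widetilde{\lambda}_a + \tfrac12 \widetilde{r}'^2 \Big)^{2} dx_1.
\end{equation*}

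Setting $a := \tfrac{d\widetilde{\xi}_1'}{d\calL} + \tfrac12 \widetilde{r}'^2 \in L^1(\widetilde{I})$, I would then invoke the elementary scalar identity $\min_{t\ge 0}(a(x_1)+t)^2 = (a^+(x_1))^2$, attained at $t = a^-(x_1)$. The pointwise optimal choice $\widetilde{\lambda}_a = a^-$ lies in $L^1(\widetilde{I})$, so $a^-\calL$ is a bona fide finite positive Radon measure and is therefore admissible; moreover $a + a^- = a^+ \in L^2(\widetilde{I})$ whenever the infimum is finite, so the constraint $\widetilde{\xi}_1' + \widetilde{\lambda} + \tfrac12 \widetilde{r}'^2 \in L^2$ is met. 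The infimum is thus realized and equals $\int_{\widetilde{I}} (a^+)^2 \, dx_1$.

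To finish, evaluate this integral separately on the three subintervals of $\widetilde{I} = (-\tfrac{3}{2}\ell, -\tfrac{\ell}{2}) \cup I \cup (\tfrac{\ell}{2}, \tfrac{3}{2}\ell)$. By \eqref{ext1} and \eqref{ext3}, on the left piece both $\widetilde{\xi}_1$ and $\widetilde{r}$ vanish identically, so $a\equiv 0$ there; on the right piece the absolutely continuous part of $\widetilde{\xi}_1'$ equals $-\tfrac12\Phi_2^2$ while $\widetilde{r}'=\Phi_2$, producing the exact cancellation $a = -\tfrac12\Phi_2^2 + \tfrac12\Phi_2^2 = 0$. Consequently only the contribution over $I$ survives, and substituting $\widetilde{\xi}_1 = \xi_1$, $\widetilde{r} = r$ yields exactly $\int_I |(\tfrac{d\xi_1'}{d\calL} + \tfrac12 r'^2)^+|^2 \, dx_1$. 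I expect no substantive obstacle: the engine of the proof is the pointwise inequality $\min_{t\ge 0}(a+t)^2 = (a^+)^2$, and the only delicate points are the singular/absolutely-continuous bookkeeping forced by the $L^2$ constraint and the observation that the extensions in Remark~\ref{remarkext} are tailored precisely so that $a$ vanishes outside $I$.
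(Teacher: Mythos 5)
Your proof is correct and follows essentially the same route as the paper's: Lebesgue decomposition of $\widetilde{\lambda}$ and $\widetilde{\xi}_1'$, cancellation of singular parts forced by the $L^2$ constraint, identification of the optimal absolutely continuous choice $\widetilde{\lambda}_a = a^-$ (the paper writes it as $\widetilde{f}^-\mathbf{1}_{\widetilde{I}^-}$), and the observation that the extensions \eqref{ext1}, \eqref{ext3} make the integrand vanish on $\widetilde{I}\setminus I$. The pointwise scalar identity $\min_{t\ge 0}(a+t)^2 = (a^+)^2$ is a slightly more compact way of packaging what the paper does by splitting over $\widetilde{I}^+=\{\widetilde{f}\ge 0\}$ and $\widetilde{I}^-=\{\widetilde{f}<0\}$; both routes rely on $a^-\in L^1$ and on the constraint forcing $a^+\in L^2$, as you note.
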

	\begin{proof}
		Any $\widetilde{\lambda} \in \calM^+(\widetilde{I})$ can be decomposed uniquely as $\widetilde{\lambda} =  \widetilde{\lambda}_{a} \calL+ \widetilde{\lambda}_s$ where $\widetilde{\lambda}_{a} = \frac{d \widetilde{\lambda}}{d\calL}\in L^1(\widetilde{I})$, and $\widetilde{\lambda}_s$ and $\calL$ are mutually singular. Similarly, we have $\widetilde{\xi}_1' =  (\widetilde{\xi}_1')_a \calL+ (\widetilde{\xi}_1')_s$.

		The requirement $\widetilde{\xi}_1' + \widetilde{\lambda} + \frac12 \widetilde{r}'^2\in L^2(\widetilde{I})$ implies, since $\widetilde{r} \in H^1(\widetilde{I})$, that $(\widetilde{\xi}_1')_s = - \widetilde{\lambda}_s$ and
		$(\widetilde{\xi}_1')_a + \widetilde{\lambda}_a + \frac12 \widetilde{r}'^2\in L^2(\widetilde{I})$. For brevity, we set $\widetilde{f} = (\widetilde{\xi}_1')_a  + \frac12 \widetilde{r}'^2\in L^1(\widetilde{I})$. Let $\widetilde{I}^+:=\{\widetilde{f}\ge 0\}$, $\widetilde{I}^-:=\{\widetilde{f}< 0\} $, and $\widetilde{f}^\pm:=\pm \widetilde{f}\mathbf{1}_{\widetilde{I}^\pm}$.
		We have accordingly
		\begin{align}
			\inf\{\int_{\widetilde{I}} |\widetilde{f}  &+ \widetilde{\lambda}_{a}|^2 \, dx_1 \st \widetilde{\lambda}_{a} \geq 0, \  \widetilde{f}  + \widetilde{\lambda}_{a} \in L^2(\widetilde{I})\}\nonumber\\
			&=\inf\{\int_{\widetilde{I}} |\widetilde{f}^+ - \widetilde{f}^-  + \widetilde{\lambda}_{a}|^2 \, dx_1 \st \widetilde{\lambda}_{a} \geq 0, \  \widetilde{f}  + \widetilde{\lambda}_{a} \in L^2(\widetilde{I})\}\nonumber\\
			&=\inf\{\int_{\widetilde{I}^+} |\widetilde{f}^+ +\widetilde{\lambda}_{a}|^2 \, dx_1+\int_{\widetilde{I}^-} |-\widetilde{f}^-+ \widetilde{\lambda}_{a}|^2 \, dx_1 \st \widetilde{\lambda}_{a} \geq 0, \  \widetilde{f}  + \widetilde{\lambda}_{a} \in L^2(\widetilde{I})\}\nonumber\\
			&\geq\begin{multlined}[t]
				\inf\{\int_{\widetilde{I}^+} |\widetilde{f}^+ + \widetilde{\lambda}_{a}|^2  \, dx_1  \st \widetilde{\lambda}_{a} \geq 0, \  \widetilde{f}^+ + \widetilde{\lambda}_{a} \in L^2(\widetilde{I}^+)\}\\+\inf\{\int_{\widetilde{I}^-} |-\widetilde{f}^- +\widetilde{\lambda}_{a}|^2 \, dx_1  \st \widetilde{\lambda}_{a} \geq 0, \  -\widetilde{f}^-  + \widetilde{\lambda}_{a} \in L^2(\widetilde{I}^-)\}\end{multlined}\nonumber\\
			&= \begin{multlined}[t]
				\inf\{\int_{\widetilde{I}^+} |\widetilde{f}^+ + \widetilde{\lambda}_{a}|^2 \, dx_1  \st \widetilde{\lambda}_{a} \geq 0, \  \widetilde{f}^+  + \widetilde{\lambda}_{a} \in L^2(\widetilde{I}^+)\}
			\end{multlined}\label{eq43}
		\end{align}
		where the last equality has been obtained by taking $\widetilde{\lambda}_{a} = \widetilde{f}^-$ on $\widetilde{I}^-$.
		Since $\widetilde{f}\in L^1(\widetilde{I})$, it follows that $\widetilde{f}^+\in L^1(\widetilde{I}^+)$. Similarly, $\widetilde{\lambda}_a\in L^1(\widetilde{I}^+)$. However, the requirements $\widetilde{f}^+  + \widetilde{\lambda}_{a} \in L^2(\widetilde{I}^+)$ and $\widetilde{f}^+,\widetilde{\lambda}_a\ge 0$, imply that $\widetilde{f}^+, \widetilde{\lambda}_{a} \in L^2(\widetilde{I}^+)$.  Hence, it is allowed to take $\widetilde{\lambda}_a\mathbf{1}_{\widetilde{I}^+}=0$ in \eqref{eq43} and deduce that
		\begin{multline}
			\inf\{\int_{\widetilde{I}} |\widetilde{\xi}_1' + \widetilde{\lambda} + \frac12 \widetilde{r}'^2|^2 \, dx_1 \st  \widetilde{\lambda} \in \calM^+(\widetilde{I}), \ \widetilde{\xi}_1' + \widetilde{\lambda} + \frac12 \widetilde{r}'^2 \in L^2(\widetilde{I})\}\\
			\geq \int_{\widetilde{I}^+} |\widetilde{f}^+|^2 \, dx_1=\int_{\widetilde{I}} |\widetilde{f}^+|^2 \, dx_1.
		\end{multline}

		To show the converse inequality, it suffices to take the admissible choice $\widetilde{\lambda}_{a} = \widetilde{f}^-\mathbf{1}_{\widetilde{I}^-}$, to obtain
		\begin{equation}\label{minlambda}
			\inf\{\int_{\widetilde{I}} |\widetilde{f}  + \widetilde{\lambda}_{a}|^2 \, dx_1 \st \widetilde{\lambda}_{a} \geq 0, \  \widetilde{f}  + \widetilde{\lambda}_{a} \in L^2(\widetilde{I})\}  \leq \int_{\widetilde{I}} |\widetilde{f}^+|^2 \, dx_1.
		\end{equation}
		Henceforth, \eqref{minlambda} holds with the equality sign.
		To conclude the proof, it suffices to notice that by \eqref{ext1} and \eqref{ext3},  $\widetilde{f} = \widetilde{\xi}_1' + \frac12 \widetilde{r}'^2 = 0$ on $\widetilde{I}\setminus I$.
	\end{proof}
	Let $$ \begin{aligned} \calA^{(0,2)} =
		\{(u,w) \in \calX \st
		&\exists \  (\xi_2, \vartheta) \in H^2(I)\times H^1(I), \ \exists \ (\xi_1,r) \in \calB,\\
		&r(-\frac{\ell}{2}) = 0, \ r(\frac{\ell}{2}) = \Lambda_3,\\
		&\vartheta(-\frac{\ell}{2}) = 0, \ \vartheta(\frac{\ell}{2}) = \Phi_1,\\
		&\xi_2(-\frac{\ell}{2})=\xi_2'(-\frac{\ell}{2})=0, \ \xi_2(\frac{\ell}{2})=\Lambda_2, \xi_2'(\frac{\ell}{2})=\Phi_3\\
		&	w = r, u_1 = \xi_1 - x_2\xi_2', \ u_2 = \xi_2\}.
	\end{aligned}$$
	We note that given $(u,w) \in \calA^{(0,2)}$, we uniquely determine $r$ and $\xi_2$, while $\xi_1$ is found up to a constant. The following
	functional on $\calX$ is well-defined:
	$$ F^{(0,2)}(u,w) = \begin{cases}
		\begin{multlined}[t]
			\frac12\int_I 2|\vartheta'|^2 + J_1|\xi_2''|^2  +  |(\frac{d\xi_1'}{d\calL} + \frac12 r'^2)^+|^2 \, dx_1
		\end{multlined} & \text{ if } (u,w)\in\calA^{(0,2)}\\
		+\infty & \text{ otherwise}.
	\end{cases}$$

	We now state and prove the $\Gamma$-convergence result for $0<\beta < 2$. Again, we express the convergence using strong convergence in $\mathcal{X}$, even though this can be significantly improved.
	\begin{theorem}\label{thm:gc222}
		As $\e\downarrow 0$, the sequence of functionals $(F_{\e}^{(0,2)})$ $\Gamma$-converges to $F^{(0,2)}$ in $\calX$, in the following sense:
		\begin{enumerate}[(a)]
			\item (Liminf inequality) for every $(u,w)\in \calX$ and for every sequence $(u_\e, w_\e)\subset \calX$ such that
			$$
			(\frac{u_\e}{\e^\beta},\frac{w_\e}{\e^{\beta/2}})\to (u,w)\quad \mbox{in }\calX
			$$
			we have
			\begin{equation*}
				\liminf\limits_{\e\downarrow 0} F_{\e}^{(0,2)}(u_\e, w_\e)\geq F^{(0,2)}(u,w);
			\end{equation*}
			\item (Recovery sequence)
			for every  $(u,w)\in \calX$ there exists a sequence $(u_\e, w_\e)\subset \calX$ such that
			$$
			(\frac{u_\e}{\e^\beta},\frac{w_\e}{\e^{\beta/2}})\to (u,w)\quad \mbox{in }\calX
			$$
			and
			\begin{equation*}
				\lim\limits_{\e\downarrow 0} F_{\e}^{(0,2)}(u_\e, w_\e) = F^{(0,2)}(u,w).
			\end{equation*}
		\end{enumerate}
	\end{theorem}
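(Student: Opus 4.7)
The plan is to prove the liminf inequality using the compactness machinery of Section~\ref{sec:comp} together with the relaxation bound of Lemma~\ref{lemmarelax}, and to produce the recovery sequence by a direct ansatz on a dense, smooth subclass of $\calA^{(0,2)}$ followed by a diagonal extraction.

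For the liminf inequality, suppose $\liminf F_\e^{(0,2)}(u_\e,w_\e)<\infty$ and pass to a subsequence with bounded energy. Lemma~\ref{compactness4} and Lemma~\ref{mulambda} provide the limits $r,\vartheta,\xi_2$ with the prescribed boundary values, the membership $(\xi_1,r)\in\calB$, and the identification $\widetilde{S}_{11}(x_1,x_2)=\widetilde{f}(x_1)-x_2\,\widetilde{\xi}_2''(x_1)$ on $\widetilde{\Omega}$ with $\widetilde{f}=(\widetilde{\xi}_1')_a+\tfrac12\widetilde{r}'^{\,2}+\widetilde{\lambda}_a$; in particular $(u,w)\in\calA^{(0,2)}$. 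Since $\vartheta$ is independent of $x_2$ and $|W|=1$, weak $L^2$ lower semicontinuity of the bending square-norm, together with the identification of the off-diagonal weak limit as $\vartheta'$, yields the lower bound $\int_I |\vartheta'|^2\,dx_1$ for the first half of the energy. For the stretching half, the identity $\|S_\e\|_{L^2(\Omega)}^2=\|\widetilde{S}_\e\|_{L^2(\widetilde{\Omega})}^2$ from~\eqref{equiv}, weak $L^2$ lower semicontinuity, the orthogonality of $1$ and $x_2$ on $W$ (with $\int_W x_2^2\,dx_2=J_1$), the support property $\widetilde{\xi}_2''=0$ on $\widetilde{I}\setminus I$, and Lemma~\ref{lemmarelax} applied with the fixed $\widetilde{\lambda}$ coming from the weak-$*$ limit, combine to give
\[
\liminf_{\e\downarrow 0}\tfrac12\!\int_\Omega|S_\e|^2\,dx\ge \tfrac12\!\int_I\!\Big|\Big(\tfrac{d\xi_1'}{d\calL}+\tfrac12 r'^{\,2}\Big)^{\!+}\Big|^2\,dx_1+\tfrac{J_1}{2}\!\int_I(\xi_2'')^2\,dx_1.
\]

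For the recovery, I first treat the case in which $(\xi_1,r,\xi_2,\vartheta)$ are smooth on $\overline{I}$ with $\phi:=\xi_1'+\tfrac12 r'^{\,2}\ge 0$ identically. Borrowing the structure of the $\beta=2$ recovery in Theorem~\ref{thm:gc3}, I propose
\[
w_\e:=\e\mathring{w}+\e x_2\vartheta+\e^{\beta/2}r,\qquad u_1^\e:=\e^\beta(\xi_1-x_2\xi_2'),
\]
\[
u_2^\e:=\e^\beta\xi_2+\tfrac12\e^2\!\!\int_0^{x_2}\!\!\bigl(\mathring{w}'(t)^2-\e^2(\mathring{w}'(t)+\vartheta(x_1))^2\bigr)\,dt,
\]
corrected by boundary layers of vanishing measure near $x_1=\pm\tfrac{\ell}{2}$ to enforce \eqref{cond} exactly. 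A direct computation shows that the quadratic correction in $u_2^\e$ makes $S_\e^{22}$ vanish identically, that $S_\e^{12}=O(\e^{1-\beta/2})\to 0$ in $L^2$, that $S_\e^{11}\to\phi-x_2\xi_2''=\phi^+-x_2\xi_2''$ in $L^\infty$, and that $\nabla_\e^2 w_\e-\mathring{w}''/\e\,\mathsf{e}_2\otimes\mathsf{e}_2\to\vartheta'(\mathsf{e}_1\otimes\mathsf{e}_2+\mathsf{e}_2\otimes\mathsf{e}_1)$ in $L^\infty$; integration using the moments of $x_2$ on $W$ produces exactly $F^{(0,2)}(u,w)$, and $L^1\times L^2$ convergence of $(u_\e/\e^\beta, w_\e/\e^{\beta/2})$ to $(u,w)$ is immediate because $\beta<2$. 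The general case then follows from a diagonal extraction, once a suitable density result has been established.

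The main obstacle is this density step. The optimal measure in Lemma~\ref{lemmarelax} is $\widetilde{\lambda}=(\phi_a)^-\calL-(\xi_1')_s$, and the constraint $(\xi_1,r)\in\calB$ forces $-(\xi_1')_s\ge 0$ as a measure; one must therefore simultaneously (i) mollify $r$ and $\vartheta$ while preserving their boundary values, (ii) replace the non-positive absolutely continuous part $(\phi_a)^-$ of $\xi_1'+\tfrac12 r'^{\,2}$ by monotone smooth correctors that flip the sign of $\phi$, and (iii) resolve the non-positive singular part $(\xi_1')_s$ via steep monotone smooth profiles, all without introducing any additional positive contribution to $\phi^n$ and without inflating $\|(\phi^n)^+\|_{L^2}$ beyond $\|(\phi)^+\|_{L^2}$, while keeping $u_1^n:=\xi_1^n-x_2(\xi_2^n)'$ convergent to $u_1$ in $L^1(\Omega)$. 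Once this simultaneous monotone/mollified approximation is in hand, the continuity $F^{(0,2)}(u^n,w^n)\to F^{(0,2)}(u,w)$ and a standard diagonal extraction close the argument.
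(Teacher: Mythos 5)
Your liminf argument is essentially the paper's: compactness from Lemma~\ref{compactness4}, the identification $\widetilde{S}_{11}=\widetilde{f}-x_2\widetilde{\xi}_2''$ from Lemma~\ref{mulambda}, the orthogonality of $1$ and $x_2$ on $W$, and then Lemma~\ref{lemmarelax} to take care of the measure $\widetilde{\lambda}$. That half is fine.

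The recovery half has two genuine problems, one computational and one conceptual.

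First, the proposed ansatz does not produce the claimed convergences. With
$u_1^\e=\e^\beta(\xi_1-x_2\xi_2')$ and
$w_\e=\e\mathring{w}+\e x_2\vartheta+\e^{\beta/2}r$, the twelfth component of the scaled membrane strain is
\[
S_\e^{12}=\tfrac12\,\e^{-1-\beta}\big(\partial_1u_2^\e+\partial_2u_1^\e+\partial_1 w_\e\,\partial_2 w_\e\big),
\]
and since $\partial_1 w_\e\,\partial_2 w_\e$ contains the term $\e^{1+\beta/2}r'(\mathring{w}'+\vartheta)$ with no corresponding cancellation in $\partial_2 u_1^\e$, we get
$S_\e^{12}=\tfrac12\,\e^{-\beta/2}r'(\mathring{w}'+\vartheta)+\mbox{lower order}$,
which blows up. This is precisely why the paper's ansatz for $u_{1\e}$ contains the extra term $-\e^{1+\beta/2}(r'+\cdots)(\mathring{w}+x_2(\vartheta+\cdots))$ and the $-\e^2\ringring{w}(\cdots)$ correction; without them your claim ``$S_\e^{12}=O(\e^{1-\beta/2})\to 0$'' is false. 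Similarly, as written your $u_2^\e$ (with $\e^2$ and $\e^4$ prefactors inside the integral) does not annihilate $S_\e^{22}$; you need $\partial_2u_2^\e=\tfrac12\e^2\mathring{w}'^2-\tfrac12(\partial_2w_\e)^2$ exactly.

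Second, and more importantly, the ``density step'' you sketch is not just missing but impossible in the form you describe. You want to approximate $(\xi_1,r)\in\calB$ by smooth pairs with $\phi^n:=(\xi_1^n)'+\tfrac12(r^n)'^2\ge 0$, while keeping $\|(\phi^n)^+\|_{L^2}\le\|\phi^+\|_{L^2}$ and $u_1^n\to u_1$ in $L^1$. These requirements are mutually exclusive: if $\phi<0$ on a set of positive measure, raising $(\xi_1^n)'$ there to make $\phi^n\ge 0$ adds positive mass to $(\xi_1^n)'$ that cannot be removed elsewhere without re-creating a negative part (contradiction) or inflating $(\phi^n)^+$, so $\xi_1^n$ drifts away from $\xi_1$. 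The mechanism that actually works, and that the paper exploits, is the Conti--Maggi--M\"uller oscillation: you do \emph{not} modify $\xi_1$; you instead add an oscillatory profile $\zeta_\e\weak 0$ to $w_\e$ with $\tfrac12(\zeta_\e')^2\to m$ in $L^2$ (Lemma~\ref{lemmazeta}), where $m\ge 0$ approximates the negative part of $\xi_1'+\tfrac12 r'^2$ (and the boundary jump). Because $w\mapsto(\partial_1 w)^2$ is not weakly continuous, this pumps a nonnegative extra term into $S_\e^{11}$ that cancels the compression without changing the limit of $u_1^\e/\e^\beta$. The paper's Step~1 (mollification with the Jensen estimate on the convex map $(a,b)\mapsto(a+\tfrac12 b^2)^+$) and Step~2 (simple function $m_n$, staircase $\xi_{1n}$, oscillatory $\zeta_\e$, boundary correctors $g_\e,h_\e$ from Lemma~\ref{lemmaa7}) implement this. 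Without the oscillation idea in some form, the recovery sequence for profiles with $\xi_1'+\tfrac12 r'^2<0$ or with a singular part of $\xi_1'$ cannot be obtained.
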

	\begin{proof}
		{\it {(a) (Liminf inequality)}}
		Without loss of generality, we assume $\liminf_{\e\downarrow 0} F_{\e}^{(0,2)}(u_\e, w_\e) <\infty$. Up to a subsequence, $\sup_{\e}F_{\e}^{(0,2)}(u_\e, w_\e) <\infty$ and Lemma \ref{compactness4} holds.  With $S_\e$ as defined in \eqref{defSe} we have
		\begin{eqnarray*}
			\liminf_{\e\downarrow 0} F_{\e}^{(0,2)}(u_\e,w_\e)&\ge&
			\liminf_{\e\downarrow 0}\frac{1}{2}\int_\Omega \big| \nabla^2_\e w_\e- \frac{\mathring{w} ''}{\e} \mathsf{e}_2 \otimes \mathsf{e}_2\big|^2 +|S_\e|^2\, dx\\
			&\ge&
			\liminf_{\e\downarrow 0}\int_\Omega (\nabla^2_\e w_\e)_{12}^2+\frac{1}{2} (S^\e_{11})^2\, dx\\
			&=&
			\liminf_{\e\downarrow 0}(\int_\Omega (\nabla^2_\e w_\e)_{12}^2 \, dx+ \int_{\widetilde{\Omega}}\frac{1}{2} (\widetilde{S}^\e_{11})^2\, dx)\\
			&\ge&\int_\Omega |\vartheta'|^2 \, dx_1+\frac{1}{2} \int_{\widetilde{\Omega}}|\widetilde{\xi}_1' + \widetilde{\lambda} + \frac12 \widetilde{r}'^2 - x_2\widetilde{\xi}_2''|^2\, dx
		\end{eqnarray*}
		where the last inequality follows from \eqref{cmp2} and Lemma \ref{mulambda} with $\widetilde{\lambda} \in \calM^+(\widetilde{I})$. Integrating we find
		\begin{eqnarray*}
			\liminf_{\e\downarrow 0} F_{\e}^{(0,2)}(u_\e,w_\e)
			&\ge& \int_I |\vartheta'|^2 \ dx_1 + \frac12 \int_{\widetilde{I}}| \widetilde{\xi}_1' + \widetilde{\lambda} + \frac12 \widetilde{r}'^2 |^2 \,dx_1 + \frac{J_1}{2}\int_{\widetilde{I}} | \widetilde{\xi}_2'' |^2 \,dx_1\\
			&\ge& \begin{multlined}[t]\int_I |\vartheta'|^2 \,dx_1 + \frac{J_1}{2}\int_{\widetilde{I}} | \widetilde{\xi}_2'' |^2 \,dx_1 \\ + \begin{multlined}[t]\frac12\inf\{\int_{\widetilde{I}} | \widetilde{\xi}_1' + \widetilde{\lambda} + \frac12 \widetilde{r}'^2 |^2 \,dx_1 \st \widetilde{\lambda} \in \calM^+(\widetilde{I}), \\ \widetilde{\xi}_1' + \widetilde{\lambda} + \frac12 \widetilde{r}'^2 \in L^2(\widetilde{I}) \}\end{multlined} \end{multlined}\\
			&=& \int_I |\vartheta'|^2 \,dx_1 + \frac{J_1}{2}\int_I | \xi_2'' |^2 \,dx_1 + \frac12\int_I | (\frac{d\xi_1'}{d\calL} + \frac12 r'^2)^+ |^2 \,dx_1,
		\end{eqnarray*}
		where the last equality follows from Lemma \ref{lemmarelax} and \eqref{ext2}.

		{\it {(b) (Recovery sequence)}} We adapt some arguments presented in \cite{Conti2005}.

		Without loss of generality, let us assume $F^{(0,2)} <\infty$. Accordingly, $(u,w) \in \calA^{(0,2)}$. Hence, there exists $(\xi_2, \vartheta) \in H^2(I) \times H^1(I)$, $(\xi_1, r) \in \calB$  such that $w = r, u_1 = \xi_1' - x_2\xi_2''$, and $\ u_2 = \xi_2$.

		\textit{Step 1.} Let $\eta_\delta$ be the standard mollifier and $(\xi_{1\delta}, \xi_{2\delta}, r_{\delta}, \vartheta_\delta) =\eta_\delta \ast (\xi_1, \xi_2, r, \vartheta) $. Let
		$u_\delta=( \xi_{1\delta}' - x_2\xi_{2\delta}'',\xi_{2\delta})$ and $w_\delta=r_\delta$.
		Assume for the time being that, for every fixed $\delta$, we can find a recovery sequence, i.e., a sequence $(u_{\delta,\e}, w_{\delta, \e}) \subset \calX$ such that
		$$
		(\frac{u_{\delta,\e},}{\e^\beta},\frac{w_{\delta,\e},}{\e^{\beta/2}})\to (u_\delta,w_\delta)\quad \mbox{in }\calX
		$$
		as $\e\downarrow 0$, and
		$$ \limsup_{\e\downarrow 0} F_{\e}^{(0,2)}(u_{\delta,\e}, w_{\delta, \e}) \leq F^{(0,2)}(u_{\delta}, w_{\delta}).$$
		Note that $\vartheta_\delta\to\vartheta$ in $H^1(I)$ and also $\xi_{2\delta}\to\xi_2$ in $H^2(I)$.	Then 		\begin{align}
			\limsup_{\delta\downarrow 0}&\limsup_{\e\downarrow 0} F_{\e}^{(0,2)}(u_{\delta,\e}, w_{\delta, \e}) \leq \limsup_{\delta\downarrow 0} F^{(0,2)}(u_{\delta}, w_{\delta})\nonumber\\
			&= \limsup_{\delta\downarrow 0} \int_I |\vartheta_\delta'|^2 \,dx_1 + \frac{J_1}{2}\int_I | \xi_{2,\delta}'' |^2 \,dx_1 + \frac12\int_I | (\xi_{1\delta}' + \frac12 (r_\delta')^2)^+ |^2 \,dx_1\nonumber\\
			&=  \int_I |\vartheta'|^2 \,dx_1 + \frac{J_1}{2}\int_I | \xi_{2}'' |^2 \,dx_1 + \frac12 \limsup_{\delta\downarrow 0}\int_I | (\xi_{1\delta}' + \frac12 (r_\delta')^2)^+ |^2 \,dx_1.\label{rt1}
		\end{align}
		Let $\widetilde{\lambda} \in \calM^+(\widetilde{I})$ such that $\widetilde{\xi}_1' + \widetilde{\lambda} + \frac12 \widetilde{r}'^2 \in L^2(\widetilde{I})$. Let $\widetilde{\lambda}_\delta=\eta_\delta\ast \widetilde{\lambda}$ and note that $\widetilde{\lambda}_\delta \ge 0$ almost everywhere on $\widetilde{I}$. Using
		the nondecreasing property of the map
		$a\mapsto (a + \frac12 b^2)^+$, for every $b$, we deduce that
		\begin{equation}\label{rt2}
			(\xi_{1\delta}' + \frac12 (r_\delta')^2)^+\le (\xi_{1\delta}' +\widetilde{\lambda}_\delta+ \frac12 (r_\delta')^2)^+ \qquad \text{on} \ I.
		\end{equation}

		By Jensen's inequality, for every  function $z$ and every convex function $g$ we have that
		$g(\eta_\delta \ast z)\le \eta_\delta\ast \big(g(z)\big)$. By means of this inequality and observing that the map $(a,b)\mapsto (a + \frac12 b^2)^+$ is convex, we have that
		\begin{equation}\label{rt3}(\xi_{1\delta}' +\widetilde{\lambda}_\delta+ \frac12 (r_\delta')^2)^+
			\le \eta_\delta\ast\big((\xi_1' +\widetilde{\lambda}+ \frac12 (r')^2)^+\big) \qquad \text{on} \ I.
		\end{equation}
		Since $\widetilde{\xi}_1' + \widetilde{\lambda} + \frac12 \widetilde{r}'^2 \in L^2(\widetilde{I})$ also $(\xi_1' +\widetilde{\lambda}+ \frac12 (r')^2)^+\in L^2(I)$. Therefore $\eta_\delta\ast\big((\xi_1' +\widetilde{\lambda}+ \frac12 (r')^2)^+\big)\to (\xi_1' +\widetilde{\lambda}+ \frac12 (r')^2)^+$ in $L^2(I)$.
		Thus, by using \eqref{rt1}, \eqref{rt2}, \eqref{rt3}, \eqref{ext1}, and \eqref{ext3} we obtain
		$$ \begin{aligned}
			\limsup_{\delta\downarrow 0}&\limsup_{\e\downarrow 0} F_{\e}^{(0,2)}(u_{\delta,\e}, w_{\delta, \e}) \\
			&\le  \int_I |\vartheta'|^2 \,dx_1 + \frac{J_1}{2}\int_I | \xi_{2}'' |^2 \,dx_1 + \frac12 \limsup_{\delta\downarrow 0}\int_I | \eta_\delta\ast\big((\xi_1' +\widetilde{\lambda}+ \frac12 (r')^2)^+\big)|^2 \,dx_1\\
			&=  \int_I |\vartheta'|^2 \,dx_1 + \frac{J_1}{2}\int_I | \xi_{2}'' |^2 \,dx_1 + \frac12 \int_I | (\xi_1' +\widetilde{\lambda}+ \frac12 (r')^2)^+|^2 \,dx_1\\
			&\le\int_I |\vartheta'|^2 \,dx_1 + \frac{J_1}{2}\int_I | \xi_{2}'' |^2 \,dx_1 + \frac12\int_{\widetilde{I}} | \widetilde{\xi}_{1}' + \widetilde{\lambda} + \frac12 \widetilde{r}'^2 |^2 \,dx_1.
		\end{aligned}$$
		Now, by invoking Lemma \ref{lemmarelax}, we deduce that
		$$ \begin{aligned}
			\limsup_{\delta\downarrow 0}&\limsup_{\e\downarrow 0} F_{\e}^{(0,2)}(u_{\delta,\e}, w_{\delta, \e}) \\
			&\le  \int_I |\vartheta'|^2 \,dx_1 + \frac{J_1}{2}\int_I | \xi_{2}'' |^2 \,dx_1 + \frac12 \int_I | (\frac{d\xi_1'}{d\calL} +\frac12 (r')^2)^+|^2 \,dx_1.		\end{aligned}$$
		By a diagonalization argument (see for instance \cite[Corollary 1.16]{Attouch}), we can find a map $\e \mapsto \delta(\e)$, with $\delta(\e)\downarrow 0$ as $\e\downarrow 0$, such that
		$$
		(\frac{u_{\delta(\e),\e}}{\e^\beta},\frac{w_{\delta(\e),\e}}{\e^{\beta/2}})\to (u,w)\quad \mbox{in }\calX
		$$
		and
		\begin{multline}\limsup_{\e\downarrow 0} F_{\e}^{(0,2)}(u_{\delta(\e),\e}, w_{\delta(\e), \e}) \\ \leq \int_I |\vartheta'|^2 \,dx_1 + \frac{J_1}{2}\int_I | \xi_{2}'' |^2 \,dx_1 + \frac12 \int_I | (\frac{d\xi_1'}{d\calL} +\frac12 (r')^2)^+|^2 \,dx_1.\end{multline}
		Hence, $(u_{\delta(\e),\e}, w_{\delta(\e), \e})$ is the requested recovery sequence.
		It remains to verify that a recovery sequence exists for $u$ and $w$ smooth.

		\textit{Step 2.}
		Assume $(u,w) \in \calA^{(0,2)}\cap (C^\infty(\overline{\Omega}, \bbR^2)\times C^\infty(\overline{\Omega}))$. Accordingly, there are $(r,\vartheta, \xi_1, \xi_2) \in (C^\infty(\overline{I}))^4$ as in the definition of $\calA^{(0,2)}$ (apart from regularity).

		Consider now the extension $\widetilde{\xi}_1$ as in \eqref{ext1}.
		Despite $\xi_1 \in C^\infty(\overline{I})$, $\widetilde{\xi}_1 \in BV(\widetilde{I})$ with (possible) jump discontinuities at $x_1 = \pm \ell/2$. We recall that $(\xi_1, r) \in \calB$: there exists $\widetilde{\lambda} \in \calM^+(\widetilde{I})$ such that
		$\widetilde{\xi}_1' + \widetilde{\lambda} + \frac12\widetilde{r}'^2 \in L^2(\widetilde{I})$. This and \eqref{ext1} imply that $\Lambda_1 - \xi_1(\ell/2)\leq 0$ and $\xi_1(-\ell/2)\leq 0$. As previously noticed, $\xi_1$ is defined up to a constant: we may therefore consider (up to a translation) that $\xi_1(-\ell/2)=0$.
		Set $p^2 := \xi_1(\ell/2) - \Lambda_1$. Fix $n>0$ and define
		$$ \xi_{1n}(x_1)\coloneqq
		\xi_1(x_1) - n p^2 (x_1 - \frac{\ell}{2} + \frac{1}{n}) \mathbf{1}_{(\frac{\ell}{2} - \frac{1}{n}, \frac{\ell}{2})}(x_1).$$
		One can check that $\xi_{1n} \weakstar \xi_1$ in $BV(I)$ as $n\uparrow 0$.
		Let $m_n \in L^\infty(I, [0,\infty))$ be a simple function that takes constant value on the disjoint segments $I_j = (-\frac{\ell}{2} + (j-1)\frac{\ell}{n}, -\frac{\ell}{2} + j\frac{\ell}{n}]$ covering $I$ and such that
		\begin{equation}\label{Cn}\int_I |\xi_1' + m_n + \frac12 r'^2|^2 \, dx_1 \leq  \int_I |(\xi_1' + \frac12 r'^2)^+|^2 \, dx_1 + \frac{C}{n}.\end{equation}
		Let $\zeta_\e \subset H^2_0(I)$ be a sequence such that $\zeta_\e \weak 0$ in $H^1(I)$, $\e^{\min\{\beta/2,1-\beta/2\}}\zeta''_\e \to 0$ in $L^2(I)$, $\frac12(\zeta'_\e)^2 \to m_n + n p^2\mathbf{1}_{(\frac{\ell}{2}-\frac{1}{n}, \frac{\ell}{2})}$ in $L^2(I)$. Such a sequence exists by Lemma \ref{lemmazeta}, since $m_n + n p^2\mathbf{1}_{(\frac{\ell}{2}-\frac{1}{n}, \frac{\ell}{2})}$ is a positive simple function by its own.\\
		From Lemma \ref{lemmaa7}, there exist functions $g_\e, h_\e \in H^2(I)$ such that
		\begin{align*}
			&g_\e(-\ell/2) = g_\e(\ell/2) = 0, & g_\e'(-\ell/2)= -r'(-\ell/2), & \quad g_\e'(\ell/2)=\Phi_2 - r'(\ell/2) & \forall \e > 0,\\
			&h_\e(-\ell/2) = h_\e(\ell/2) = 0, & h_\e'(-\ell/2)= -\vartheta'(-\ell/2), & \quad h_\e'(\ell/2)=-\vartheta'(\ell/2) & \forall \e > 0,
		\end{align*}
		and
		$$ g_\e, h_\e \to 0 \quad in \ W^{1,4}(I), \qquad \e^{\min\{\beta/2,1-\beta/2\}} g_\e'', \e^{\min\{\beta/2,1-\beta/2\}} h_\e'' \to 0 \ in \ L^2(I). $$
		Define the recovery sequence as
		$$ \begin{aligned}
			u_{1\e}(x) &\coloneqq  \begin{multlined}[t]
				\e^\beta(\xi_{1n} - x_2\xi_2') - \e^{1+\beta/2} (r' + g_\e'+ \zeta_\e')(\mathring{w} + x_2 (\vartheta + h_\e))
				-\e^{2}\ringring{w}(\vartheta' + h_\e')\\ - \e^\beta\int_{-\ell/2}^{x_1}\zeta_\e'(t) (r'(t) + g_\e'(t)) \, dt + \e^\beta(x_1 + \ell/2) \Rint_{-\ell/2}^{\ell/2}\zeta_\e'(t) (r'(t) + g_\e') \, dt
			\end{multlined}\\
			u_{2\e}(x) &\coloneqq  \e^\beta\xi_2- \e^{2}x_2 \frac{(\vartheta + h_\e)^2}{2} - \e^2\mathring{w}(\vartheta + h_\e) \\
			w_\e(x) &\coloneqq  \e^{\beta/2}(r + g_\e + \zeta_\e) + \e x_2 (\vartheta + h_\e) + \e\mathring{w}.
		\end{aligned} $$
		where $\ringring{w}$ has been defined in \eqref{w0}.
		Note that $\int_{-\ell/2}^{x_1}\zeta_\e'(t) r'(t) \, dt\weak 0$ in $W^{1,1}(I)$. In fact $\zeta_\e'r' \weak 0$ in $L^1(I)$, and
		$$ \begin{aligned}
			\int_I |\int_{-\ell/2}^{x_1}\zeta_\e'(t) r'(t) \, dt|\, dx_1 &= \int_I |\int_I \mathbf{1}_{(-\ell/2, x_1)} \zeta_\e'(t) r'(t)\, dt | \, dx_1 \\\ &= (\int_I \mathbf{1}_{(-\ell/2, x_1)} dx_1) |\int_I \zeta_\e'(t) r'(t)\, dt| \to 0.
		\end{aligned}$$

		Accordingly, we have $S_{12}^\e = S_{22}^\e = 0$,
		$$ S_{11}^\e = \begin{multlined}[t]
			\xi_{1n}' - x_2 \xi_2'' - \e^{1-\beta/2}(r''+g_\e'' + \zeta_\e'')(\mathring{w} + x_2(\vartheta + h_\e)) \\- \e^{2-\beta}\ringring{w}(\vartheta'' + h_\e'') + \Rint_I \zeta_\e'(r' + g_\e') \, dx_1 + \frac12(r' + g_\e')^2 \\ + \frac12 (\zeta_\e')^2 + \frac12 \e^{1-\beta/2}x_2 (\vartheta' + h_\e')^2,
		\end{multlined}$$
		and
		$$ S_{11}^\e \to \xi_1' + m_n + \frac12 r'^2 - x_2 \xi_2'' \qquad \text{in} \ L^2(\Omega).$$
		Passing to the limit we deduce
		$$ \begin{aligned}
			\lim_{\e\downarrow 0} F_{\e}^{(0,2)}(u_\e, w_\e) &= \lim_{\e\downarrow 0} \begin{multlined}[t]\frac{1}{2} \int_\Omega |\begin{pmatrix}
					\e^{\beta/2} (r'' + g_\e'' + \zeta_\e'') + \e x_2(\vartheta'' + h_\e'') & \vartheta' + h_\e' \\
					sym & 0
				\end{pmatrix}|^2 \, dx \\
				+ \frac12 \int_\Omega |S_{11}^\e|^2 \, dx \end{multlined}\\
			&= \int_I |\vartheta'|^2 \, dx_1 + \frac12 \int_I |\xi_1' + m_n + \frac12 r'^2|^2 \, dx_1 + \frac{J_1}{2} \int_I |\xi_2''|^2 \, dx_1\\
			&\leq \int_I |\vartheta'|^2 \,dx_1 + \frac{J_1}{2}\int_I | \xi_2'' |^2 \,dx_1 + \frac12\int_I | (\xi_1' + \frac12 r'^2)^+ |^2 \,dx_1 + \frac{C}{n}
		\end{aligned} $$
		where we used \eqref{Cn} to deduce the last inequality.
		Since $n$ is arbitrary, the recovery sequence condition is proved.						
	\end{proof}

	\begin{remark}
		In the regime $0<\beta<2$, the limit model does not see the transverse curvature $\mathring{w}''$.
		The limit model is essentially an elastic string, since it cannot sustain contractions by the well-known effect of relaxation (see \cite{Conti2005, Acerbi1991a, LeDret1995}). However, the string can sustain (uniform) torsion and in-plane bending.
	\end{remark}

	\begin{remark}
		By changing variable, $\zeta \coloneqq   \frac{w}{\e^{\beta/2}}$ and $\eta \coloneqq   \frac{u}{\e^\beta}$, after little manipulation, \eqref{cond} rewrites as
		\begin{equation}\label{condresc}
			\begin{aligned}
				&	\eta(-\frac{\ell}{2}, x_2) = (0,0), \qquad \zeta(-\frac{\ell}{2}, x_2) = \e^{1-\beta/2}\mathring{w}, \qquad \partial_1 \zeta(-\frac{\ell}{2}, \cdot)=0, \\
				&	\eta(\frac{\ell}{2}, x_2) =
				(\Lambda_1 -\e^{1-\beta/2}\Phi_2\mathring{w} -\e^{1-\beta/2} \Phi_2\Phi_1 x_2 - \Phi_3 x_2, \ \Lambda_2 - \e^{2-\beta} \Phi_1\mathring{w} -\e^{2-\beta} \frac{\Phi_1^2 x_2}{2}),\\
				&	\zeta(\frac{\ell}{2}, x_2) = \e^{1-\beta/2}\mathring{w} + \Lambda_3 + \e^{1-\beta/2} x_2 \Phi_1, \qquad   \partial_1 \zeta(\frac{\ell}{2}, x_2)= \Phi_2,
			\end{aligned}
		\end{equation}
		and we have actually found out the $\Gamma$-limit for the family of functionals on $\calX$
		$$ (\eta, \zeta) \mapsto \begin{cases}
			\begin{multlined}[t]
				\e^\beta \frac12\int_\Omega |\nabla^2_\e \zeta - \frac{\mathring{w}''}{\e^{1+\beta/2}} \mathsf{e}_2 \otimes \mathsf{e}_2|^2 \, dx \\+ \frac12 \int_\Omega |E_\e \eta + \frac12 \nabla_\e \zeta\otimes \nabla_\e \zeta - \frac12 \frac{1}{\e^\beta} \mathring{w}'^2 \mathsf{e}_2 \otimes \mathsf{e}_2 |^2 \, dx
			\end{multlined} & \text{ if } \begin{multlined}[t]
				(\eta, \zeta) \in \calA,
			\end{multlined}\\
			+\infty & \text{ otherwise }
		\end{cases}$$
		for $0<\beta \leq 2$, where $\calA := \{(\eta, \zeta)\in H^1(\Omega,\bbR^2)\times H^2(\Omega), \st \eqref{condresc} \text{ holds}\}$.
	\end{remark}

	\appendix

	\section{Technical results}
	We provide here some technical results. For some of them, we claim no originality. Nevertheless, we provide a proof for the sake of completeness.
	If not specified, $I_1, I_2 \subset \bbR$ are generic bounded intervals.

	The following lemma is a well-known version for distributions of the Du Bois-Reymond lemma.
	\begin{lemma}\label{lemmadistr}
		Let $\Omega = I_1 \times I_2$, $I_2 = (a,b)$, and $u \in \calD'(\Omega)$ with $\partial_2 u = 0$ in the sense of distributions. Then, there exists $g \in \calD'(I_1)$ such that $u=g\otimes 1$.
	\end{lemma}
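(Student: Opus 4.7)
The plan is the classical Du Bois-Reymond slicing argument, adapted to the distributional setting. The idea is to reduce any test function on $\Omega$ to a tensor product $\chi(x_1)\psi(x_2)$, using that the $x_2$-mean-zero part is a pure $\partial_2$-derivative, which $u$ annihilates.

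First, I would fix once and for all an auxiliary function $\psi\in C_c^\infty(I_2)$ with $\int_{I_2}\psi(x_2)\,dx_2=1$. For an arbitrary $\varphi\in C_c^\infty(\Omega)$, I set
$$
\chi(x_1):=\int_{I_2}\varphi(x_1,x_2)\,dx_2, \qquad R(x_1,x_2):=\varphi(x_1,x_2)-\chi(x_1)\psi(x_2),
$$
so that $\chi\in C_c^\infty(I_1)$, $R\in C_c^\infty(\Omega)$, and $\int_{I_2}R(x_1,x_2)\,dx_2=0$ for every $x_1$.

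Next I would define the primitive
$$
S(x_1,x_2):=\int_{a}^{x_2}R(x_1,t)\,dt.
$$
By construction $S\in C^\infty(\Omega)$, $\partial_2 S=R$, and the crucial point is to check that $S\in C_c^\infty(\Omega)$: in the $x_1$-direction the support of $S$ is contained in the $x_1$-projection of $\mathrm{supp}(R)$, hence is compact in $I_1$; in the $x_2$-direction one uses the boundedness of $I_2=(a,b)$ together with the fact that $R$ has compact support in $I_2$ and that $\int_{I_2}R(x_1,\cdot)=0$, which forces $S$ to vanish in neighborhoods of both $a$ and $b$. This is precisely the step that relies on $I_2$ being a bounded interval, and it is the only technical point I expect to require care.

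With $S\in C_c^\infty(\Omega)$ in hand, the hypothesis $\partial_2 u=0$ gives
$$
\langle u,R\rangle = \langle u,\partial_2 S\rangle = -\langle \partial_2 u,S\rangle=0,
$$
so that $\langle u,\varphi\rangle=\langle u,\chi\otimes\psi\rangle$. I then define the linear functional $g$ on $C_c^\infty(I_1)$ by
$$
\langle g,\chi\rangle:=\langle u,\chi\otimes\psi\rangle \qquad \forall \chi\in C_c^\infty(I_1).
$$
Continuity of $g$ as a distribution on $I_1$ follows from the continuity of $u$ on $\Omega$: on any compact $K\subset I_1$, the map $\chi\mapsto \chi\otimes\psi$ sends bounded sets in $C_c^\infty(K)$ to bounded sets in $C_c^\infty(K\times\mathrm{supp}\,\psi)$. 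Finally, the identity $\langle u,\varphi\rangle=\langle g,\chi\rangle=\langle g\otimes 1,\varphi\rangle$ for every test $\varphi$ yields $u=g\otimes 1$. Independence of $g$ from the particular choice of $\psi$ follows from the same argument applied to $\psi_1-\psi_2$, which has zero $x_2$-integral and is therefore killed by $u$ via the primitive construction above.
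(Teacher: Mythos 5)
Your proof is correct and is essentially the same argument as the paper's: your primitive $S$ is exactly the paper's auxiliary function $\eta$, and the decomposition of $\varphi$ into $\chi\otimes\psi$ plus a $\partial_2$-exact remainder is the identical key step. You merely make a couple of points more explicit than the paper does (compact support of $S$, continuity of $g$), which is fine but not a different route.
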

	\begin{proof}
		Let $\varphi \in \calD(\Omega)$ and $\psi \in \calD(I_2)$ with $\int_{I_2} \psi \, dx_2 = 1$.
		Define
		$$ \eta(x_1,x_2) \coloneqq   \int_a^{x_2} \varphi(x_1,s) - (\int_{I_2}\varphi(x_1,t)\, dt) \psi(s) \, ds$$
		which clearly belongs to $\calD(\Omega)$.
		Hence, denoting with $\langle \cdot, \cdot\rangle_X$ the pairing $\langle \cdot , \cdot\rangle_{\calD'(X)\times \calD(X)}$
		$$ \begin{aligned}
			\langle u, \varphi \rangle_{I_1\times I_2}
			&=\langle u, \partial_2 \eta + (\int_{I_2}\varphi(x_1,t)\, dt) \psi(x_2)\rangle_{I_1\times I_2}\\
			&= \langle u, (\int_{I_2}\varphi(x_1,t)\, dt) \psi(x_2)\rangle_{I_1\times I_2}
		\end{aligned}$$
		since $\langle u, \partial_2 \eta \rangle = 0$.
		For fixed $\psi$, the map $\zeta(x_1) \mapsto \langle u, \zeta(x_1)\psi(x_2)\rangle $ defines a distribution $g\in\calD'(I_1)$, i.e.
		$$ \langle g, \zeta\rangle_{I_1} \coloneqq   \langle u, \zeta \psi \rangle_{I_1\times I_2} \qquad \forall \zeta \in \calD(I_1).$$
		Hence (see also \cite[Ch. 2.7]{Mitrea2013})
		$$ \begin{aligned}
			\langle u , \varphi \rangle_{I_1\times I_2} = \langle g, \int_{I_2}\varphi(x_1,t)\, dt \rangle_{I_1}=
			\langle g, \langle 1, \varphi\rangle_{I_2} \rangle_{I_1}=
			\langle g\otimes 1, \varphi\rangle_{I_1\times I_2}.
		\end{aligned} $$
	\end{proof}

	Next we specialize	Lemma \ref{lemmadistr} for $L^p$ functions and finite Radon measures.
	\begin{corollary}\label{corLp}
		\begin{enumerate}[(i)]
			\item If $u \in \calM(\Omega)$ with $\partial_2 u = 0$, there exists $g \in \calM(I_1)$ such that $u = g \otimes \calL$ on $\Omega$.
			\item If $u \in L^p(\Omega)$ ($1<p<\infty$) with $\partial_2 u = 0$, there exists $g \in L^p(I_1)$ such that $u = g$ a.e on $\Omega$.
		\end{enumerate}
	\end{corollary}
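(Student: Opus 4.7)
The plan is to bootstrap both statements from Lemma \ref{lemmadistr}, which already provides, for any distribution $u$ with $\partial_2 u=0$, a distribution $g \in \calD'(I_1)$ such that $u = g \otimes 1$. The task reduces to showing that the extra integrability of $u$ passes to $g$, and that the tensor product $g\otimes 1$ is then realized as $g\otimes \calL$ (for the measure case) or as the function $(x_1,x_2)\mapsto g(x_1)$ (for the $L^p$ case).

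For part (i), I would fix once and for all an auxiliary $\psi \in C_c(I_2)$ with $\int_{I_2}\psi\,dx_2=1$. Following the proof of Lemma \ref{lemmadistr}, the distribution $g$ is characterized by $\langle g,\zeta\rangle_{I_1}=\langle u,\zeta\otimes\psi\rangle_{I_1\times I_2}$ for every $\zeta\in C_c(I_1)$. Since $u\in\calM(\Omega)$ is a finite Radon measure and $\zeta\otimes\psi\in C_c(\Omega)$, one has the estimate
$$|\langle g,\zeta\rangle|\le |u|(\Omega)\,\|\psi\|_\infty\,\|\zeta\|_\infty,$$
so $g$ extends to a continuous linear functional on $C_0(I_1)$. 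By Riesz's representation theorem, $g\in\calM(I_1)$. To identify $u$ with $g\otimes\calL$, I would test against an arbitrary elementary tensor $\zeta\otimes\eta\in C_c(\Omega)$ and apply Fubini/the construction of Lemma \ref{lemmadistr} to obtain $\langle u,\zeta\otimes\eta\rangle=(\int_{I_2}\eta\,dx_2)\,\langle g,\zeta\rangle=\langle g\otimes\calL,\zeta\otimes\eta\rangle$. Density of the span of elementary tensors in $C_0(\Omega)$ (Stone--Weierstrass) then yields $u=g\otimes\calL$.

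For part (ii), the argument is parallel but uses $L^p$--$L^{p'}$ duality instead of the Riesz representation for measures. Taking $\psi\in C_c^\infty(I_2)$ with $\int_{I_2}\psi\,dx_2=1$, Hölder's inequality gives, for every $\zeta\in C_c^\infty(I_1)$,
$$|\langle g,\zeta\rangle|=\Big|\int_\Omega u(x)\,\zeta(x_1)\psi(x_2)\,dx\Big|\le \|u\|_{L^p(\Omega)}\|\psi\|_{L^{p'}(I_2)}\|\zeta\|_{L^{p'}(I_1)}.$$
Since $1<p<\infty$, the space $C_c^\infty(I_1)$ is dense in $L^{p'}(I_1)$, so $g$ extends to a continuous linear functional on $L^{p'}(I_1)$ whose Riesz representative is an element of $L^p(I_1)$; I call it $g$ as well. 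Finally, to see that $u(x_1,x_2)=g(x_1)$ almost everywhere on $\Omega$, I would test $g\otimes 1$ against an arbitrary $\varphi\in C_c^\infty(\Omega)$ and use Fubini to get $\int_\Omega u\varphi\,dx=\int_\Omega g(x_1)\varphi(x_1,x_2)\,dx$, concluding by the fundamental lemma of the calculus of variations.

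I do not expect any genuinely hard step: the two cases differ only by the duality pairing used to promote the distribution $g$ to a measure or an $L^p$ function. The only point that requires mild care is verifying that the tensor product $g\otimes 1$ produced by Lemma \ref{lemmadistr} coincides pointwise/measure-theoretically with the representative thus obtained, which is handled by density of elementary tensors in the appropriate space.
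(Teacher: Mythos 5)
Your proof is correct and essentially mirrors the paper's approach. For part (i), the argument is the same: invoke Lemma \ref{lemmadistr} to produce $g\in\calD'(I_1)$, use the finiteness of $|u|(\Omega)$ to bound $\langle g,\zeta\rangle$ by $\|\zeta\|_\infty$, extend $g$ by Riesz to a finite Radon measure on $I_1$, and then identify $u=g\otimes\calL$ by testing; the paper tests directly against a generic $\varphi\in C_0(\Omega)$ via the decomposition from Lemma \ref{lemmadistr}, whereas you pass through elementary tensors and Stone--Weierstrass, but these are equivalent. For part (ii), you take a slightly different (and arguably cleaner) route: rather than reducing to (i) by viewing $u\calL^2$ as a measure and then upgrading the resulting $g\in\calM(I_1)$ to $L^p$ via the $L^p$ Riesz representation theorem, you run the $L^p$--$L^{p'}$ H\"older/duality argument directly on the distributional $g$, landing immediately in $L^p(I_1)$ since $C_c^\infty(I_1)$ is dense in $L^{p'}(I_1)$ for $1<p'<\infty$. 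Both routes are valid; the paper's is shorter because it recycles (i), yours avoids the detour through measures and makes the integrability promotion more transparent. One cosmetic remark: in part (i) you should take $\psi\in C_c^\infty(I_2)$ (not merely $C_c(I_2)$) so that Lemma \ref{lemmadistr} applies as stated, though the estimate is unaffected.
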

	\begin{proof}
		Claim	\textit{(i)}. By Lemma \ref{lemmadistr} exists $g \in \calD'(I_1)$ such that
		and for any $\varphi_i \in \calD(I_i)$ with $\int_{I_2} \varphi_2 = 1$ it holds that
		$$ \int_{\Omega} \varphi_1 \varphi_2 \, du = \langle g, \varphi_1 \rangle_{I_1}.$$
		We have
		$$ |\langle g, \varphi_1\rangle|_{I_1} = |\langle u, \varphi_1\varphi_2 \rangle| \leq C|u|(\Omega) \n{\varphi_1}_{L^\infty} <\infty. $$
		By density, $g$ is a continuous linear functional on $C_0(I_1)$ and by Riesz's  representation theorem there exists a unique element of $\calM(I_1)$, still denoted by $g$, such that
		$$ \langle g, \varphi_1\rangle = \int_{I_1} \varphi_1 \, dg \qquad \forall \varphi_1 \in C_0(I_1).$$
		Moreover, for every $\varphi \in C_0(\Omega)$,
		$$ \int_{\Omega}\varphi \, du = \int_{I_1} \int_{I_2}\varphi \, dx_2 dg  = \int_{\Omega} \varphi \, d(g \otimes \calL)$$
		and we conclude $u = g \otimes \calL$.

		Claim	\textit{(ii)}. It follows from claim (i) when considering measures of the form $u\calL^2$ with $u \in L^p(\Omega)$ and the $L^p$ version of Riesz's representation theorem.
	\end{proof}

	\begin{lemma}\label{lemmabd}
		Let $\Omega = I_1\times I_2$ and $u \in BD(\Omega,\bbR^2)$ with $\partial_2 u_2 = 0$ and $\partial_1 u_2 + \partial_2 u_1 = 0$. Then, there exist $(\xi_1, \xi_2) \in BV(I_1) \times BH(I_1)$, such that
		$$ u_{1} = \xi_1 -x_2\xi_2',\qquad
		u_2 = \xi_2.$$
	\end{lemma}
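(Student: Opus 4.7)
The plan is to read off $\xi_2$ and $\xi_1$ one at a time from the two given differential constraints, and then harvest the required BV/BH regularity from the fact that $Eu\in \calM(\Omega,\Sym)$.

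First I would exploit $\partial_2 u_2 = 0$. Since $u_2\in L^1(\Omega)$, Lemma \ref{lemmadistr} (or directly Corollary \ref{corLp}(i) applied to the measure $u_2\calL^2$) produces $\xi_2\in\calD'(I_1)$ with $u_2=\xi_2\otimes 1$; Fubini's theorem then gives $\xi_2\in L^1(I_1)$. Next, using the remaining constraint $\partial_1 u_2+\partial_2 u_1=0$ in $\calD'(\Omega)$, I compute
\begin{equation*}
\partial_2\bigl(u_1+x_2\,\xi_2'(x_1)\bigr)=\partial_2 u_1+\xi_2'(x_1)=-\partial_1 u_2+\xi_2'(x_1)=0,
\end{equation*}
where the product $x_2\xi_2'(x_1)$ makes sense as a distribution (smooth function times a distribution in the other variable). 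Applying Lemma \ref{lemmadistr} once more yields $\xi_1\in\calD'(I_1)$ with $u_1+x_2\xi_2'=\xi_1\otimes 1$ in $\calD'(\Omega)$, which is the claimed form of $u_1$.

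It remains to upgrade the regularity of $\xi_1$ and $\xi_2$. From $u\in BD(\Omega,\bbR^2)$ we have $\partial_1 u_1\in\calM(\Omega)$, and the representation $u_1=\xi_1-x_2\xi_2'$ gives $\partial_1 u_1=\xi_1'\otimes 1 - x_2\otimes \xi_2''$ in $\calD'(\Omega)$. To separate the two contributions I fix $\varphi_2^{(0)},\varphi_2^{(1)}\in C_c^\infty(I_2)$ with
\begin{equation*}
\int_{I_2}\varphi_2^{(0)}\,dx_2=1,\quad \int_{I_2}x_2\varphi_2^{(0)}\,dx_2=0,\qquad \int_{I_2}\varphi_2^{(1)}\,dx_2=0,\quad \int_{I_2}x_2\varphi_2^{(1)}\,dx_2=1,
\end{equation*}
so that for every $\varphi_1\in C_c^\infty(I_1)$,
\begin{equation*}
\langle \xi_1',\varphi_1\rangle_{I_1}=\int_\Omega\varphi_1\varphi_2^{(0)}\,d(\partial_1 u_1),\qquad -\langle \xi_2'',\varphi_1\rangle_{I_1}=\int_\Omega\varphi_1\varphi_2^{(1)}\,d(\partial_1 u_1).
\end{equation*}
Since $\partial_1 u_1$ is a finite Radon measure, both right-hand sides are bounded by $C\|\varphi_1\|_\infty$, and Riesz's representation theorem (as used in the proof of Corollary \ref{corLp}) yields $\xi_1',\xi_2''\in\calM(I_1)$.

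To conclude, $\xi_2''\in\calM(I_1)$ implies $\xi_2'\in BV(I_1)\embed L^\infty(I_1)\subset L^1(I_1)$, and together with $\xi_2\in L^1(I_1)$ this shows $\xi_2\in W^{1,1}(I_1)$ with $\xi_2''\in\calM(I_1)$, i.e.\ $\xi_2\in BH(I_1)$. Finally, the identity $\xi_1\otimes 1=u_1+x_2\xi_2'$, integrated in $x_2$ over $I_2$, shows $\xi_1\in L^1(I_1)$; combined with $\xi_1'\in\calM(I_1)$ this gives $\xi_1\in BV(I_1)$. The main (but minor) obstacle is handling the products $x_2\xi_2'$ and $x_2\xi_2''$ as distributions on $\Omega$ when only $\xi_2\in L^1$ is initially available; the separation of $\xi_1'$ from $\xi_2''$ via the two orthogonal test profiles $\varphi_2^{(0)},\varphi_2^{(1)}$ bypasses this cleanly.
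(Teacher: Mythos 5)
Your proof is correct and follows essentially the same route as the paper's: read off $\xi_2$ from $\partial_2 u_2=0$, apply the distributional Du Bois--Reymond lemma (Lemma~\ref{lemmadistr}) once more to $u_1+x_2\xi_2'$, and then recover the BV/BH regularity by testing $\partial_1 u_1\in\calM(\Omega)$ against products $\varphi_1(x_1)\varphi_2(x_2)$. The only genuine difference is the separation device in the regularity step: the paper chooses $\varphi_2$ even (to kill the $\xi_2''$ contribution) and then odd (to kill the $\xi_1'$ contribution), which implicitly relies on $I_2$ being symmetric about the origin, whereas you pick $\varphi_2^{(0)},\varphi_2^{(1)}$ biorthogonal to $\{1,x_2\}$ via the moment conditions $\int\varphi_2^{(0)}=1,\ \int x_2\varphi_2^{(0)}=0$ and $\int\varphi_2^{(1)}=0,\ \int x_2\varphi_2^{(1)}=1$. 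Your variant is a touch cleaner and works for an arbitrary interval $I_2$ without any symmetry assumption; otherwise the two arguments coincide step for step.
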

	\begin{proof}
		By Corollary \ref{corLp} and by the embedding $BD(\Omega,\bbR^2)\embed L^2(\Omega,\bbR^2)$, there exists $\xi_2 \in L^2(I_1)$ such that
		$u_2  = \xi_2 $.
		Take now $\varphi \in C_c^\infty(\Omega)$.
		$$ \begin{aligned}
			0 &= \int_\Omega \varphi \, d (\partial_1 u_2 + \partial_2 u_1)
			=-\int_\Omega \partial_1\varphi u_2 + \partial_2 \varphi u_1 \, dx\\
			&=-\int_\Omega \partial_1\varphi \xi_2 + \partial_2 \varphi u_1 \, dx
			=-\int_\Omega \partial_1\varphi \partial_2(x_2\xi_2) + \partial_2 \varphi u_1 \, dx\\
			&=\int_\Omega \partial_{12}\varphi (x_2\xi_2) - \partial_2 \varphi u_1 \, dx
			= -\langle \partial_1 (\xi_2 \otimes x_2) + u_1,\partial_2 \varphi \rangle.
		\end{aligned}$$
		By Lemma \ref{lemmadistr}, there exists $\xi_1 \in \calD'(I_1)$ such that
		$
		\xi_1 \otimes 1 = \xi_2' \otimes x_2 + u_1
		$
		from which
		$$
		u_1 = \xi_1 \otimes 1 - \xi_2' \otimes x_2.
		$$
		We now show that $(\xi_1, \xi_2) \in BV(I_1)\times BH(I_1)$. For every
		$\varphi_\alpha \in C_c^\infty(I_\alpha)$ with $\varphi_2$ even and such that $\int_{I_2} \varphi_2 \, dx_2 = 1$ it holds
		\begin{equation*}
			\int_\Omega u_1 \varphi_1 \varphi_2 \, dx = \langle \xi_1, \varphi_1 \rangle, \qquad
			\int_\Omega  \varphi_1 \varphi_2 \, d(\partial_1 u_1) = \langle \xi_1', \varphi_1 \rangle. \end{equation*}
		By the continuous embedding $BD(\Omega, \bbR^2) \embed L^2(\Omega, \bbR^2)$ we obtain
		$$|\langle \xi_1, \varphi_1 \rangle| \leq C \n{u_1}_{L^2(\Omega)} \n{\varphi_1}_{L^2(I_1)}, \qquad |\langle \xi_1', \varphi_1 \rangle| \leq C |\partial_1u_1|(\Omega)\n{\varphi_1}_{C^0(I_1)},$$
		and by the Riesz's representation theorem we conclude $\xi_1\in BV(I_1)$.
		With similar arguments, but working with $\varphi_2$ odd, we deduce $\xi_2 \in BH(I_1)$.
	\end{proof}

	\begin{lemma}\label{lemmaext}
		Let $\Omega = I_1 \times I_2$ and $f \in H^2(\Omega)$. There exist a function $\hat{f} \in H^2(\bbR^2)$ such that $\hat{f}|_{\Omega} = f$ and a positive constant $C$, independent of $f$, such that $\|\partial_{12}\hat{f}\|_{L^2(\bbR^2)}\leq C (\|\partial_{12} f\|_{L^2(\Omega)} + \n{f}_{H^1(\Omega)})$.
	\end{lemma}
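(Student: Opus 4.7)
The plan is to construct $\hat f$ using Stein's higher-order reflection in each coordinate direction separately, followed by multiplication by a smooth cutoff. The key structural feature is that each reflection is an affine transformation involving only a single coordinate, so $\partial_{12}$ of the reflected function involves only $\partial_{12} f$ (never $\partial_{11} f$ or $\partial_{22} f$); this is precisely what yields the asserted bound with $\|f\|_{H^1(\Omega)}$ replacing $\|f\|_{H^2(\Omega)}$.

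Writing $I_1=(a_1,b_1)$, I would first extend $f$ across $\{x_1=a_1\}$ by setting, for $x_1$ slightly smaller than $a_1$,
$$
f_1(x_1,x_2) := 3\, f(2a_1 - x_1,x_2) - 2\, f(3a_1 - 2x_1,x_2).
$$
The coefficients $3$ and $-2$ are chosen so that $f_1$ and $\partial_1 f_1$ match $f$ and $\partial_1 f$ at $x_1=a_1$, which ensures that the extension preserves $H^2$ regularity across the reflection line. By the chain rule
$$
\partial_{12} f_1(x_1,x_2) = -3\,(\partial_{12}f)(2a_1-x_1,x_2) + 4\,(\partial_{12}f)(3a_1-2x_1,x_2),
$$
and a change of variables bounds $\|\partial_{12} f_1\|_{L^2}$ on the reflected strip by $C\|\partial_{12} f\|_{L^2(\Omega)}$. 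Performing analogous reflections across the other three sides of $\partial\Omega$ (with the four corner regions obtained as compositions of the two one-variable reflections, which commute on their overlap) produces a function, still denoted $\hat f$, on an open rectangle $\Omega'\supset\supset\Omega$ that satisfies both $\|\partial_{12}\hat f\|_{L^2(\Omega')}\le C\|\partial_{12}f\|_{L^2(\Omega)}$ and the full $H^2$-estimate $\|\hat f\|_{H^2(\Omega')}\le C\|f\|_{H^2(\Omega)}$.

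Finally, I would fix a cutoff $\eta\in C_c^\infty(\mathbb{R}^2)$ with $\eta=1$ on $\overline\Omega$ and $\mathrm{supp}(\eta)\subset\Omega'$, and set the final extension to be $\eta\hat f$ (extended by zero outside $\Omega'$); this belongs to $H^2(\mathbb{R}^2)$ because it is compactly supported in $\Omega'$ and lies in $H^2(\Omega')$. By the Leibniz rule
$$
\partial_{12}(\eta\hat f) = \eta\,\partial_{12}\hat f + (\partial_1\eta)(\partial_2\hat f) + (\partial_2\eta)(\partial_1\hat f) + (\partial_{12}\eta)\,\hat f,
$$
the first term is controlled by $\|\partial_{12}f\|_{L^2(\Omega)}$ from the step above, while the remaining three terms involve only $\hat f$ and its first derivatives on the bounded set $\Omega'$, hence are controlled by $\|\hat f\|_{H^1(\Omega')}\le C\|f\|_{H^1(\Omega)}$. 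The main technical point is to ensure the $C^1$ matching at each reflection line so as to preserve $H^2$ regularity globally; once that is in place the special mixed-derivative bound follows essentially for free, because single-variable reflections never couple $x_1$ with $x_2$.
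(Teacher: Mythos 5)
Your proposal is correct and follows essentially the same route as the paper: iterated one-variable Lions--Ne\v{c}as (higher-order) reflections across the four sides of the rectangle, then multiplication by a fixed cutoff, with the Leibniz rule supplying the remaining lower-order terms. The only cosmetic difference is the precise reflection used (you take $3f(2a_1-x_1,x_2) - 2f(3a_1-2x_1,x_2)$, while the paper uses the variant $-3f(x_1,-x_2)+4f(x_1,-\tfrac12 x_2)$ citing Ne\v{c}as, Ch.~2, Thm.~3.9); both are valid second-order reflections matching value and first normal derivative at the interface, and both have the decisive structural property you correctly identified, namely that a reflection affine in a single coordinate sends $\partial_{12}$ to a combination of $\partial_{12}f$ evaluated at reflected points, never involving $\partial_{11}f$ or $\partial_{22}f$.
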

	\begin{proof}
		Let $f \in H^2(\Omega)$.
		Up to a change of reference frame, we may suppose $I_1 = (0, a)$, $I_2 = (0,b)$ for some $a,b > 0$.
		We start by extending $f$ on the domain obtained by the union of $\Omega$ and its reflection along $x_2 = 0$. Let us call this domain $\Omega_1$.
		Define  (see \cite[Ch. 2, Theorem 3.9]{Necas2012})
		$$ f_1(x_1, x_2) \coloneqq   \begin{cases}
			f(x_1, x_2) & x \in \Omega\\
			-3f(x_1, -x_2) +4 f(x_1, -\frac12 x_2) & x \in \Omega_1\setminus\Omega.
		\end{cases}$$
		It is clear that $f_1 \in H^2(\Omega_1)$. Moreover
		$$ \n{f_1}_{L^2(\Omega_1)} \leq C\n{f}_{L^2(\Omega)}, \quad \n{\nabla f_1}_{L^2(\Omega_1)} \leq C\n{\nabla f}_{L^2(\Omega)},$$ $$\n{\partial_{12} f_1}_{L^2(\Omega_1)} \leq C\n{\partial_{12} f}_{L^2(\Omega)}. $$
		Iterating properly three times more, mirroring along axes parallel to the coordinates axes, we end up with a function $f_4$ defined on a rectangle $\Omega_4$ that compactly contains $\Omega$ (see Fig. \ref{fig:ext}).
		\begin{figure}[!hbt]\label{fig:ext}
			\centering
			\includegraphics[width=\textwidth]{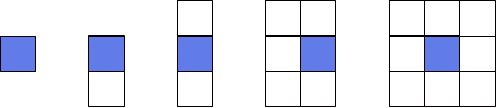}
			\caption{A series of domain reflections. $\Omega$ in blue.}
		\end{figure}

		Moreover, $$\n{f_4}_{L^2(\Omega_4)} \leq C\n{f}_{L^2(\Omega)}, \quad \n{\nabla f_4}_{L^2(\Omega_4)} \leq C\n{\nabla f}_{L^2(\Omega)}, $$
		$$\n{\partial_{12} f_4}_{L^2(\Omega_4)} \leq C\n{\partial_{12} f}_{L^2(\Omega)}.$$

		We now take a cutoff function $\varphi \in C_c^\infty(\Omega_4, [0,1])$ such that $\varphi=1$ on $\Omega$ and pose $\hat{f} \coloneqq   \varphi f_4$ extended by zero to the whole $\bbR^2$.
		Thus
		$$\begin{aligned}
			\n{\partial_{12}\hat{f}}_{L^2(\bbR^2)} &= \n{\partial_{12}\hat{f}}_{L^2(\Omega_4)}\\
			&\leq C(\n{\partial_{12}f_4\varphi}_{L^2(\Omega_4)}
			+ \n{f_4\partial_{12}\varphi}_{L^2(\Omega_4)} +\n{\partial_2 f_4\partial_{1}\varphi}_{L^2(\Omega_4)}\\
			&\hspace{7cm}+\n{\partial_1 f_4\partial_{2}\varphi}_{L^2(\Omega_4)})\\
			& \leq C (\n{\partial_{12}f_4}_{L^2(\Omega_4)}
			+ \n{f_4}_{L^2(\Omega_4)} +\n{\partial_2 f_4}_{L^2(\Omega_4)}
			+\n{\partial_1 f_4}_{L^2(\Omega_4)})\\
			& \leq C (\n{\partial_{12}f}_{L^2(\Omega)}
			+ \n{f}_{L^2(\Omega)} +\n{\partial_2 f}_{L^2(\Omega)}
			+\n{\partial_1 f}_{L^2(\Omega)}).
		\end{aligned}$$
	\end{proof}

	\begin{lemma}\label{lemmader2}
		Let $f \in H^1(I)$. There exists a sequence $(f_n) \subset C^\infty(\overline{I})$ such that $f_n \to f$ in $H^1(I)$ as $n \uparrow \infty$. Moreover, if $(\e_n)\subset \bbR$ is such that $\e_n \downarrow 0$ as $n \uparrow \infty$,  $(f_n)$ can be chosen such that
		$\e_n f_n''\to 0$ in $L^2(I)$.
	\end{lemma}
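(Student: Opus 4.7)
The plan is a mollification argument with a diagonally tuned smoothing parameter. First, since $I$ is a bounded Lipschitz interval and $f\in H^1(I)$, I would invoke the standard Sobolev extension theorem to obtain $\widetilde{f}\in H^1(\bbR)$ with compact support such that $\widetilde{f}|_I=f$. Then for every $\delta>0$ I set $g_\delta \coloneqq \eta_\delta \ast \widetilde{f}$, where $(\eta_\delta)$ is a standard mollifier. Classical properties of mollification give $g_\delta \in C^\infty(\bbR)$ and $g_\delta \to \widetilde{f}$ in $H^1(\bbR)$ as $\delta \downarrow 0$, so the restriction $g_\delta|_{\overline{I}}$ converges to $f$ in $H^1(I)$.

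The key quantitative point is the second derivative estimate. Writing $g_\delta'' = \eta_\delta'\ast \widetilde{f}'$ (one derivative distributed on each factor) and applying Young's convolution inequality together with the scaling $\|\eta_\delta'\|_{L^1(\bbR)} = \delta^{-1}\|\eta_1'\|_{L^1(\bbR)}$, I obtain
$$
\|g_\delta''\|_{L^2(\bbR)} \leq \|\eta_\delta'\|_{L^1(\bbR)}\|\widetilde{f}'\|_{L^2(\bbR)} \leq \frac{C}{\delta}\|\widetilde{f}'\|_{L^2(\bbR)}.
$$
Given the sequence $\e_n\downarrow 0$, the diagonal step is to choose $\delta_n\downarrow 0$ so that simultaneously $\delta_n\to 0$ and $\e_n/\delta_n \to 0$; any explicit choice such as $\delta_n = \max(\sqrt{\e_n}, 1/n)$ works. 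Setting $f_n \coloneqq g_{\delta_n}|_{\overline{I}} \in C^\infty(\overline{I})$ then gives both
$$
\|f_n - f\|_{H^1(I)} \leq \|g_{\delta_n} - \widetilde{f}\|_{H^1(\bbR)} \to 0
\quad\text{and}\quad
\e_n \|f_n''\|_{L^2(I)} \leq C\frac{\e_n}{\delta_n}\|\widetilde{f}'\|_{L^2(\bbR)} \to 0.
$$

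There is no real obstacle here: the only care is to calibrate $\delta_n$ against $\e_n$ so that the blow-up $\|g_\delta''\|_{L^2}\sim 1/\delta$ is beaten by the prefactor $\e_n$, while still retaining $\delta_n\to 0$ for the $H^1$ convergence. For the first part of the statement (the case without $\e_n$), it suffices to take any $\delta_n \downarrow 0$, so that part is the standard density of smooth functions in $H^1(I)$.
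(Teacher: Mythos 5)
Your proof is correct and follows essentially the same path as the paper's: extend $f$ to a compactly supported $H^1$ function on $\bbR$, mollify at scale $\delta_n$, bound $\|g_{\delta_n}''\|_{L^2}$ by $C/\delta_n$ via Young's inequality applied to $\eta_{\delta_n}'\ast\widetilde{f}'$, and then calibrate $\delta_n$ so that $\e_n/\delta_n\to 0$. The only cosmetic difference is that the paper builds the compactly supported extension explicitly (extend to a larger interval, multiply by a cutoff) rather than citing the extension theorem directly.
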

	\begin{proof}
		The proof is a refinement of \cite[Lemma 11]{Paroni2024c}.
		Set $\widetilde I\coloneqq  (-\frac32\ell,\frac32\ell)$ and let $\widetilde{f} \in H^1(\widetilde{I})$ be an extension of $f$.
		Take $\varphi \in C_c^\infty(\widetilde I, [0,1])$, $\varphi=1$ on $I$, and consider the function $(\widetilde f\varphi)(x)$, extended by zero to the whole real line.
		Let $\eta:\bbR\to [0,\infty)$ be a smooth function with compact support in $(-1,1)$ such that $\int_\bbR \eta(t) \, dt = 1$. Let $(\rho_n)\subset\bbR$ a sequence, to be chosen later on, such that $\rho_n\downarrow 0$ as $n\uparrow \infty$. Let $\widetilde{f}_n\coloneqq  \eta_{\rho_n} \ast  (\widetilde{f} \varphi)$ where $\eta_{\rho_n}(x)\coloneqq  \frac{1}{\rho_n}\eta(\frac{x}{\rho_n})$.
		Let us put also $\eta'_{\rho_n}(t) \coloneqq  \frac{1}{\rho_n}\eta'(\frac{t}{\rho_n})$, and notice that it belongs to $L^1(\bbR)$:
		$$ \int_\bbR |\eta'_{\rho_n}(t)| \, dt = \int_\bbR |\eta'(t)| \, dt <\infty. $$
		It is well-known by standard properties of mollification that $\widetilde{f}_n \to \widetilde{f}\varphi$ in $H^1(\bbR)$, which implies that
		$f_n \coloneqq   \widetilde{f}_n|_I\to f$ in $H^1(I)$.
		Observe now that
		$$
		\begin{aligned}
			|\widetilde{f}_n''(x)|^2 &=\frac 1 {\rho_n^2} |\int_{\bbR} \frac{1}{\rho_n}\eta'(\frac{x-z}{\rho_n}) (\widetilde f\varphi)'(z)\,dz|^2 = \frac 1 {\rho_n^2} |(\eta'_{\rho_n}\ast(\widetilde f\varphi)')(x)|^2,\\
		\end{aligned}
		$$
		from which, by Young's theorem for convolutions \cite[Corollary 2.25]{Adams2003}, we get
		$$ \begin{aligned}
			\|f_n''\|_{L^2(I)}&\leq \|\widetilde{f}_n''\|_{L^2(\bbR)}\\
			&= \frac{1}{\rho_n}\n{\eta'_{\rho_n}\ast (\widetilde f\varphi)'}_{L^2(\bbR)}\\
			&\leq \frac{C}{\rho_n} \n{\eta'_{\rho_n}}_{L^1(\bbR)}\n{(\widetilde f\varphi)'}_{L^2(\bbR)}  \leq \frac{C}{\rho_n}.
		\end{aligned}$$
		The proof is concluded by choosing $\rho_n$ such that $\e_{n}/\rho_n \to 0$ as $n \uparrow \infty$.
	\end{proof}

	\begin{lemma}\label{lemmazeta}
		Let $I\subset\mathbb{R}$ be an interval and $m \in L^\infty(I, [0,\infty))$ be a function that is constant on finitely many segments  covering $I$. Let $\gamma$ be a positive number. There exists a sequence $(\zeta_\e) \subset H^2_0(I)$ such that $\zeta_\e \weakstar 0$ in $W^{1,\infty}(I)$, $\frac12(\zeta'_\e)^2 \to m$ in $L^2(I)$, and $\e^\gamma \zeta_\e'' \to 0$ in $L^2(I)$.
	\end{lemma}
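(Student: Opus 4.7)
The plan is to construct $\zeta_\e$ as a smoothed triangular wave whose slope oscillates between $\pm\sqrt{2m}$ on finer and finer scales. Suppose first that $m\equiv c^2/2$ is a single positive constant on $I$, and let $N_\e\uparrow\infty$ be a parameter to be chosen. Partition $I$ into $N_\e$ equal subintervals of length $h_\e\sim 1/N_\e$ and let $\tilde\zeta_\e$ be the piecewise affine function, vanishing at every partition point, whose slope alternates between $+c$ and $-c$. Then $(\tilde\zeta_\e')^2=2m$ pointwise a.e., $\|\tilde\zeta_\e\|_{L^\infty(I)}\le c h_\e/2\to 0$, $\|\tilde\zeta_\e'\|_{L^\infty(I)}=c$, and a standard Riemann--Lebesgue type argument (averages vanish on each period) gives $\tilde\zeta_\e'\weakstar 0$ in $L^\infty(I)$. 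To promote the function to $H^2$, I would replace $\tilde\zeta_\e$ near each corner by an explicit cubic polynomial on an interval of length $\delta_\e\ll h_\e$ matching the values and first derivatives of the two adjacent affine pieces; call the resulting function $\zeta_\e$.

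Each corner contributes $|\zeta_\e''|\lesssim 1/\delta_\e$ on an interval of length $\delta_\e$, while $|(\zeta_\e')^2-2m|\lesssim 1$ on that same interval and vanishes elsewhere. Summing over the $\sim N_\e$ corners yields
\begin{equation*}
\bigl\|\tfrac12(\zeta_\e')^2-m\bigr\|_{L^2(I)}^2 \le C N_\e\delta_\e, \qquad \|\zeta_\e''\|_{L^2(I)}^2 \le C\,N_\e/\delta_\e.
\end{equation*}
Therefore the three required convergences reduce to: $N_\e\to\infty$, $N_\e\delta_\e\to 0$, and $\e^{2\gamma}N_\e/\delta_\e\to 0$. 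These are simultaneously satisfied by, e.g., $N_\e=\lfloor\e^{-\gamma/2}\rfloor$ and $\delta_\e=\e^{\gamma}$, for which $N_\e\delta_\e=\e^{\gamma/2}\to 0$ and $\e^{2\gamma}N_\e/\delta_\e=\e^{\gamma/2}\to 0$. The uniform bound on $\zeta_\e$ and on $\zeta_\e'$ together with $\zeta_\e\to 0$ in $C(\overline I)$ and $\zeta_\e'\weakstar 0$ in $L^\infty(I)$ give $\zeta_\e\weakstar 0$ in $W^{1,\infty}(I)$.

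For the general $m$, which is piecewise constant on finitely many closed segments $J_1,\dots,J_k$ covering $I$, I would perform the construction separately on each $J_i$ with its own constant $c_i=\sqrt{2m|_{J_i}}$, and then stitch the pieces together. The only extra care is required at the finitely many points where $m$ jumps (where adjacent sawtooths have different amplitudes and slopes) and at the endpoints of $I$, where the $H^2_0$ boundary conditions $\zeta_\e=\zeta_\e'=0$ must hold. Both are handled by inserting a bounded number of additional $C^2$ transition regions of width $\delta_\e$ that damp the function and its derivative to zero; since the number of such extra transitions is independent of $\e$, their contribution to both estimates above is lower order than that coming from the $N_\e$ interior corners and can be absorbed.

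The main technical obstacle is the joint scaling of $N_\e$ and $\delta_\e$: $N_\e$ must grow fast enough that $\zeta_\e\to 0$ uniformly and the sawtooth resolves the oscillation $(\zeta_\e')^2\approx 2m$, yet not so fast that the corners, even smoothed on scale $\delta_\e$, produce $\|\zeta_\e''\|_{L^2}$ worse than $\e^{-\gamma}$. The identity $\e^{2\gamma}N_\e/\delta_\e\cdot N_\e\delta_\e=\e^{2\gamma}N_\e^{2}$ shows one needs $N_\e=o(\e^{-\gamma})$, which is precisely the window in which the explicit choice above lives; the rest of the proof is routine bookkeeping of the constants in the two elementary estimates.
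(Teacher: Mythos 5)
Your proof is correct and rests on the same core geometric idea as the paper's: a sawtooth with slope $\pm\sqrt{2m}$ oscillating at a scale that shrinks with $\e$, smoothed near the corners at a still finer scale. The execution differs, though. The paper forms $\zeta_{\rho_\e,n}=\eta_{\rho_\e}\ast g_n$ by mollification, establishes the two $L^2$ estimates via Young's inequality for convolutions, and then extracts the final sequence by a diagonal argument over the pair $(\rho_\e,n)$; you instead replace each corner by an explicit cubic on a scale $\delta_\e$ that you couple to $\e$ up front, so the two estimates become the elementary scaling counts $N_\e\delta_\e$ and $N_\e/\delta_\e$ and no diagonalization is needed. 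Your version also spells out what the paper handles tacitly: it explicitly inserts $O(1)$ boundary and stitching transitions of width $\delta_\e$ to enforce $\zeta_\e\in H_0^2(I)$ and to glue the pieces where $m$ jumps, and it checks that these contribute at a lower order than the $N_\e$ interior corners (the paper instead restricts to a single constant piece and relies on the periodicity of $g_n$ and the evenness of the mollifier to get the zero boundary traces). The only small trade-off is that your competitors are merely piecewise polynomial rather than smooth, but since only membership in $H^2_0$ is required this is immaterial, and the scaling $N_\e=\lfloor\e^{-\gamma/2}\rfloor$, $\delta_\e=\e^\gamma$ sits comfortably inside the window $N_\e=o(\e^{-\gamma})$, $\delta_\e\ll h_\e$ that your final accounting identifies.
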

	\begin{proof}
		We can assume $I=(a,b)$ and $m$ to be a positive constant on $I$, otherwise we can perform the construction on each set on which $m$ is constant. \\
		Let  $g:\mathbb{R} \to [0,\frac12]$ be a 1-periodic function defined, on one period, by
		$$ g(t) \coloneqq  \begin{cases}
			t & 0\leq t < \frac12,\\
			1-t & \frac12<t	<1.
		\end{cases}$$
		For every $n\in\mathbb{N}$ let	$\delta_n = \sqrt{2m}(b-a)\frac1n$ and $g_{n}(t) \coloneqq  \delta_n g (\frac{\sqrt{2m} (t-a) }{\delta_n})$.
		Note that $\|g_n\|_{L^\infty(\mathbb{R} )}=\delta_n/2$ and $g_{n}'\in\{- \sqrt{2m},+\sqrt{2m}\}$ almost everywhere in $\mathbb{R}$.
		Let $(\rho_\e)$ be a sequence of positive numbers that converges to zero as $\e$ goes to zero, and let  $\eta_{\rho_\e}$ be the standard mollifier whose support is contained in $(-\rho_\e,\rho_\e)$.
		Define $\zeta_{\rho_\e,n}\coloneqq  \eta_{\rho_\e}\ast g_{n}$. It is clear that $\zeta_{\rho_\e,n} \in C^\infty(\mathbb{R})$, also $\|\zeta_{\rho_\e,n}\|_{L^\infty(\mathbb{R} )}\le\delta_n/2$ and $\|\zeta_{\rho_\e,n}'\|_{L^\infty(\mathbb{R} )}\le \sqrt{2m}$ for every $\rho_\e$ and $\delta_n$. Moreover, by Young's inequality for convolutions,
		$$\|\e^\gamma \zeta_{\rho_\e,n}''\|_{L^2(I)}=\e^\gamma \|\eta_{\rho_\e}'\ast g_{n}'\|_{L^2(I)}\le \e^\gamma
		\|\eta_{\rho_\e}'\|_{L^1(I)} \|g_{n}'\|_{L^2(I)}\le\frac{C\e^\gamma }{\rho_\e}
		$$
		and
		$$
		\begin{aligned}
			\int_I |\frac 12 (\zeta_{\rho_\e,n}')^2 -m |^2\,dx&=\int_I |\frac 12 (\zeta_{\rho_\e,n}')^2 -\frac 12 (g_{n}')^2 |^2\,dx\\
			&=\frac 14 \int_I |\zeta_{\rho_\e,n}' -g_{n}' |^2|\zeta_{\rho_\e,n}' +g_{n}' |^2\,dx\\
			&\le C \int_I |\zeta_{\rho_\e,n}' -g_{n}' |^2\,dx.
		\end{aligned}
		$$
		Hence, setting $\rho_\e=\e^{\gamma/2}$ we deduce that
		$$\lim_{\e\to 0}\|\e^\gamma \zeta_{\rho_\e,n}''\|_{L^2(I)}=0
		\quad \mbox{and}\quad
		\lim_{\e\to 0}\|\frac 12 (\zeta_{\rho_\e,n}')^2 -m \|_{L^2(I)}=0,
		$$
		for every $n$. Given that $\lim_{n\to \infty}\lim_{\e\to 0} \|\zeta_{\rho_\e,n}\|_{L^\infty(\mathbb{R} )}=0$, by a
		standard diagonalization process, we can find a map $\e \mapsto n(\e)$ such that $n(\e)\uparrow \infty$ and
		$\zeta_{\rho_\e,n(\e)} \to 0$ in $L^\infty(I)$. The proof is concluded by posing $\zeta_\e\coloneqq \zeta_{\rho_\e,n(\e)} - \zeta_{\rho_\e,n(\e)}(a)$ and by recalling the bound $\|\zeta_{\rho_\e,n}'\|_{L^\infty(\mathbb{R} )}\le \sqrt{2m}$.
	\end{proof}

	\begin{lemma}\label{lemmaa7}
		Let $I=(a,b)$ be a bounded interval. Let $\gamma > 0$ and $\alpha_a, \alpha_b \in \bbR$. Then, there exists a sequence $(g_\e) \subset H^2(I)$ such that
		\begin{equation}\label{bcg}
			g_\e(a) = g_\e(b) = 0, \quad g_\e'(a)=\alpha_a, \quad g_\e'(b)=\alpha_b \qquad \forall \e > 0\end{equation}
		and
		$$ g_\e \to 0 \quad in \ W^{1,4}(I), \qquad \e^\gamma g_\e'' \to 0 \ in \ \L^2(I). $$
	\end{lemma}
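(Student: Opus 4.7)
The plan is to build $g_\e$ as the sum of two boundary-layer profiles concentrated near $a$ and near $b$, each carrying the prescribed slope while vanishing in value at its endpoint and being identically zero away from the boundary. Fix a smooth profile $\phi\in C^\infty_c([0,1))$ with $\phi(0)=0$ and $\phi'(0)=1$ (for example $\phi(t)=t\,\chi(t)$ for a cutoff $\chi$ equal to $1$ in a neighborhood of $0$ and supported in $[0,1)$). Let $\rho_\e>0$ be an auxiliary parameter to be chosen, with $\rho_\e\to 0$ and $\rho_\e<(b-a)/2$ for $\e$ small. Define
$$g_\e(x)\coloneqq \alpha_a\,\rho_\e\,\phi\!\Big(\frac{x-a}{\rho_\e}\Big)-\alpha_b\,\rho_\e\,\phi\!\Big(\frac{b-x}{\rho_\e}\Big).$$

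The first step is to verify the boundary conditions \eqref{bcg}. Once $\rho_\e<(b-a)/2$, the supports of the two summands are disjoint. At the left endpoint the second summand vanishes identically in a neighborhood, so $g_\e(a)=\alpha_a\rho_\e\phi(0)=0$ and $g_\e'(a)=\alpha_a\phi'(0)=\alpha_a$; analogously, at the right endpoint $g_\e(b)=0$ and $g_\e'(b)=-\alpha_b\phi'(0)\cdot(-1)=\alpha_b$. Hence \eqref{bcg} holds for all sufficiently small $\e$.

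The second step is the scaling estimate. A change of variable $t=(x-a)/\rho_\e$ on the support of the first summand (and the symmetric one near $b$) gives, with $C$ depending only on $\phi$, $\alpha_a$, $\alpha_b$,
$$\|g_\e\|_{L^4(I)}\le C\rho_\e^{5/4},\qquad \|g_\e'\|_{L^4(I)}\le C\rho_\e^{1/4},\qquad \|g_\e''\|_{L^2(I)}\le C\rho_\e^{-1/2}.$$
Thus $\|g_\e\|_{W^{1,4}(I)}\to 0$ as soon as $\rho_\e\to 0$, while $\e^\gamma\|g_\e''\|_{L^2(I)}\le C\e^\gamma\rho_\e^{-1/2}$.

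The third and final step is to pick $\rho_\e$ balancing the two requirements: we need $\rho_\e\to 0$ and $\e^\gamma\rho_\e^{-1/2}\to 0$, i.e. $\rho_\e\to 0$ faster than any constant but slower than $\e^{2\gamma}$. The choice $\rho_\e\coloneqq \e^\gamma$ does the job, since $\gamma>0$ yields $\rho_\e\to 0$ and $\e^\gamma\rho_\e^{-1/2}=\e^{\gamma/2}\to 0$. There is no real obstacle here: the only subtlety is choosing the single scale $\rho_\e$ to make the $W^{1,4}$-smallness compatible with the weighted control of the second derivative, which is achieved by the one-parameter family above.
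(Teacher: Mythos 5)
Your proof is correct and follows essentially the same boundary-layer idea as the paper: concentrate the slope-matching ansatz at a single scale $\rho_\e=\e^\gamma$ near each endpoint so that the $W^{1,4}$-norm shrinks while $\e^\gamma\|g_\e''\|_{L^2}\sim\e^{\gamma/2}\to 0$. The only cosmetic difference is that you use an abstract compactly supported bump profile $\phi$ (which makes $g_\e$ identically zero away from the boundary layers and handles both endpoints symmetrically at once), whereas the paper reduces to $\alpha_b=0$ and writes out an explicit piecewise cubic Hermite-type interpolant that is merely $O(\e^\gamma)$ on the bulk; the scaling estimates and the choice of scale are the same.
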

	\begin{proof}
		For simplicity, we prove the lemma for the case $I=(0,1)$ and $\alpha_b=0$. Let $\eta_\e := \e^{\gamma}$.
		Consider the sequence
		$$ g_\e(x) = \begin{cases}
			\frac{x^3}{\eta_\e^2}(\alpha_a-2) - \frac{x^2}{\eta_\e}(2\alpha_a-3) + \alpha_a x & 0<x\leq\eta_\e \\
			\frac{\eta_\e}{(1-\eta_\e)^3}(x-1)^2(2x-3\eta+1) & \eta_\e < x<1.
		\end{cases}$$
		The sequence satisfies \eqref{bcg} and
		$$ |g_\e'(x)|\leq \begin{cases}
			C_1 &  0<x\leq\eta_\e \\
			C_2 \eta_\e & \eta_\e < x<1
		\end{cases} \qquad |g_\e''(x)|\leq \begin{cases}
			\frac{C_3}{\eta_\e} &  0<x\leq\eta_\e \\
			C_4 \eta_\e & \eta_\e < x<1
		\end{cases}$$
		where $C_1, C_2, C_3, C_4$ are positive constants that may depend on $\alpha_a$.
		From these two inequalities, the lemma follows.
	\end{proof}

	\section*{Conflict of interest}
	The authors declare that they have no conflict of interest.

	\section*{Acknowledgments}
	MPS is supported by the project PRIN 2022-20229BM9EL “NutShell”.\\
	This work has been written within the activities of INdAM-GNFM.


\end{document}